\def\debut{\begin{itemize}\item[{\bf [[}]\small}
\def\term{\hfill {\bf]]} \end{itemize} }
\newcommand{\ds}{\displaystyle}
\newcommand{\ra}{\rightarrow}
\newcommand{\et}{\mbox{\ and\ }} 
\newcommand{\vs}{\vspace{11pt}}
\newcommand{\tq}{\ | \ }
\newcommand{\Aa}{\alpha}
\newcommand{\Bb}{\beta}
\newcommand{\Cc}{\gamma}
\newcommand{\Dd}{\lambda}
\newtheorem{claim}{Claim}
\newtheorem{theorem}{Theorem}
\newtheorem{proposition}[theorem]{Proposition}
\newtheorem{conjecture}[theorem]{Conjecture}
\newtheorem{corollary}[theorem]{Corollary} 
\newtheorem{lemma}[theorem]{Lemma}
\newtheorem{remark}[theorem]{Remark}
\newenvironment{proof}{\par \noindent {\bf Proof}.\ }{\hfill$\Box$ \par \vspace{11pt}}
\newcommand\pas[1]{}
\begin{document}

\title{\bf WDM and Directed Star Arboricity}

\author{Omid Amini~\thanks{CNRS-DMA, \'Ecole Normale Sup\'erieure, Paris, France. This work was done while this author was PhD student at \'Ecole Polytechnique and Projet Mascotte, INRIA Sophia-Antipolis, France. {\tt omid.amini@m4x.org}} $^{,1}$ \and Fr\'ed\'eric Havet~\thanks{Projet Mascotte, CNRS/INRIA/UNSA,
INRIA Sophia-Antipolis, France. {\tt fhavet@sophia.inria.fr}} $^{,1,2}$ \and Florian Huc~\thanks{Universit\'e de Gen\`eve, Switzerland. This work was done while this author was PhD student in Projet Mascotte, INRIA Sophia-Antipolis, France. {\tt florian.huc@unige.ch}} $^{,1}$ \and St\'ephan Thomass\'e~\thanks{LIRMM, Montpellier, France. {\tt thomasse@lirmm.fr}\newline
$^{\tiny 1}$ These authors were partially supported by the European project AEOLUS. \newline
$^{\tiny 2}$ Supported by the ANR Blanc AGAPE.} $^{,2}$
}

\date{ } 

\maketitle

\begin{abstract}

A digraph is $m$-labelled if every arc is labelled by an integer
in $\{1, \dots ,m\}$.  Motivated by wavelength assignment for
multicasts in optical networks, we introduce and study $n$-fibre colourings of
labelled digraphs. These are colourings of the arcs of $D$ such that
at each vertex $v$, and for each colour $\alpha$,
$in(v,\alpha)+out(v,\alpha)\leq n$ with $in(v,\alpha)$ the number
of arcs coloured $\alpha$ entering $v$ and $out(v,\alpha)$ the
number of labels $l$ such that there is at least one arc of label $l$
leaving $v$ and coloured with $\alpha$.  The problem is to find the
minimum number of colours $\lambda_n(D)$ such that the $m$-labelled
digraph $D$ has an $n$-fibre colouring.  In the particular case when
$D$ is $1$-labelled, $\lambda_1(D)$ is called the directed star
arboricity of $D$, and is denoted by $dst(D)$.  We first show that
$dst(D)\leq 2\Delta^-(D)+1$, and conjecture that if $\Delta^-(D)\geq 2$,
then $dst(D)\leq 2\Delta^-(D)$. We also prove that for a subcubic digraph $D$,
then $dst(D)\leq 3$, and that if $\Delta^+(D), \Delta^-(D)\leq 2$, then
$dst(D)\leq 4$.  Finally, we study $\lambda_n(m,k)=\max\{\lambda_n(D)
\tq D \mbox{ is $m$-labelled} \et \Delta^-(D)\leq k\}$.  We show that
if $m\geq n$, then $\ds \left\lceil\frac{m}{n}\left\lceil
\frac{k}{n}\right\rceil +
\frac{k}{n} \right\rceil\leq  \lambda_n(m,k) \leq\left\lceil\frac{m}{n}\left\lceil \frac{k}{n}\right\rceil +
\frac{k}{n} \right\rceil + C \frac{m^2\log k}{n}$ for some constant $C$.
We conjecture that the lower bound should be the right value of $\lambda_n(m,k)$.

\end{abstract}

\section{Introduction}

The origin of this paper is the study of wavelength assignment for multicasts 
in star networks. We are given a star network in which a central node is connected by optical fibres to a set 
of nodes $V$.  The nodes of $V$ communicates together using a technology called WDM ({\it wavelength-division multiplexing}), which allows to send different signals at the same time through the same fibre but on different wavelengths. The central node or hub is an all-optical transmitter
which can redirect a signal arriving from a node on a particular wavelength
to some (one or more) of the other nodes on the same wavelength. It means that the central node is able to duplicate a message incoming on a wavelength to different fibres without changing its wavelength.
Therefore if a node $v$ sends a multicast to a set of nodes $S(v)$, $v$ should send the message to the central node on
a set of wavelengths so that the central node redirect it to
each node of $S(v)$ using one of these wavelengths. 
The aim is to minimise the total number of used wavelengths. We refer to Brandt and Gonzalez~\cite{BrGo05} for a more complete description of the model and for some partial results. In what follows, we will briefly explain the main contributions of this paper.

\vspace{11pt}

We first study the basic case when there is a unique fibre between the central node and each node of $V$ and
each vertex $v$ sends a unique multicast $M(v)$ to a set $S(v)$ of
nodes.  In this case, the problem becomes equivalent to {\it directed star colouring}: let $D$ be the digraph with vertex set $V$ such that the
outneighbourhood of a vertex $v$ is $S(v)$. We note that $D$ is a
digraph and not a multidigraph, i.e., there are no parallel arcs in $D$, as $S(v)$
is a set.  The problem is then to find the smallest $k$ such that
there exists a mapping $\phi: A(D)\rightarrow \{1,\dots ,k\}$
satisfying the following two conditions:
\begin{itemize}
\item[$(i)$] For all pair of arcs $uv$ and $vw$, $\phi(uv) \neq \phi(vw)$; 
\item[$(ii)$] For all pair of arcs $uv$ and $u'v$, $\phi(uv) \neq \phi(u'v)$.
\end{itemize}
Such a mapping is called {\it directed star $k$-colouring}.
The {\it directed star arboricity} of a digraph $D$, denoted by $dst(D)$,
is the minimum integer $k$ such that there
exists a directed star $k$-colouring.
This notion has been  introduced by Guiduli in \cite{Gui97} and is an analog of the {\it star arboricity} defined
by Algor and Alon in \cite{AlAl89}.

\vspace{5pt}
 The indegree of a vertex $v$, $d^-(v)$, corresponds to the number of multicasts that $v$ receives.
A sensible assumption on the model is that a node receives a bounded number of multicasts.
Hence, Brandt and Gonzalez~\cite{BrGo05} studied the directed star arboricity
of a digraph $D$ with regards to its maximum indegree.
The {\it maximum indegree} of a digraph $D$, denoted by $\Delta^-(D)$ or 
simply $\Delta^-$ when $D$ is clearly understood from the context, is $\max\:\{\:d^-(v) \tq v\in V(D)\:\}$.
Brandt and Gonzalez showed that 
$dst(D)\leq  \lceil 5\Delta ^-/2\rceil$.
This upper bound is tight if $\Delta^-=1$, because odd circuits 
have directed star arboricity three.
However, as we will show in Section~\ref{entrant}, the upper bound can be improved for larger values of $\Delta^-$.
\begin{theorem}\label{2k+1}
Every digraph $D$ satisfies $dst(D)\leq 2\Delta^-+1$.
\end{theorem}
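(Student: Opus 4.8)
The plan is to reduce everything to one decomposition statement: with $k:=\Delta^-(D)$, it suffices to partition $A(D)$ into $k$ \emph{branchings} (acyclic subdigraphs with in-degree at most $1$, i.e.\ out-forests) and one \emph{galaxy} (a subdigraph with in-degree at most $1$ in which no vertex is both a head and a tail, i.e.\ a vertex-disjoint union of out-stars). Granting a partition $A(D)=B_1\cup\dots\cup B_k\cup A_0$ of this kind, colour every arc of $A_0$ with colour $2k+1$, and in each branching $B_i$ root every component and colour an arc $e$ with $2i-1$ or $2i$ according to the parity of the distance from its root to the tail of $e$. Condition $(ii)$ never fails since every colour class sits inside a branching or inside $A_0$, hence has in-degree at most $1$; and a monochromatic directed path $uv,vw$ would have both arcs inside one $B_i$ with the tails $u$ and $v$ at distances of opposite parity from a root, while the remaining cases ($uv,vw$ lying in $A_0$ and a $B_i$, or in two different $B_i$'s) involve two distinct colours. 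Thus $dst(D)\le 2k+1$, and only the decomposition remains to be proved.

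To get the decomposition I would hunt for a galaxy $A_0\subseteq A(D)$ such that $D-A_0$ decomposes into $k$ branchings; by Frank's theorem on covering a digraph by branchings this is equivalent to $\Delta^-(D-A_0)\le k$ (automatic) together with $|A((D-A_0)[X])|\le k(|X|-1)$ for every $X\subseteq V(D)$ with $|X|\ge 2$. This reduces to the strongly connected case: if in every strongly connected component $S$ of $D$ we can pick a galaxy $A_0^S\subseteq A(D[S])$ meeting the density inequality for all $Y\subseteq V(S)$, then $A_0:=\bigcup_S A_0^S$ is again a galaxy (the components are vertex-disjoint), and an arbitrary $X$ is handled by writing $X=Y\cup(X\setminus Y)$ where $Y$ is a source strongly connected component of the condensation of $D[X]$: inside $D[X]$ the vertices of $Y$ receive arcs only from $Y$, so they contribute at most $k(|Y|-1)$ arcs in total by the choice of $A_0^S$ for the component containing the (strongly connected, hence component-confined) set $Y$, while the $|X\setminus Y|$ remaining vertices contribute at most $k$ each, giving at most $k(|X|-1)$ altogether.

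So the heart of the matter is a strongly connected digraph $S$ with $\Delta^-(S)\le k$. Here is the warm-up showing the target is essentially within reach: pick any vertex $r$; by strong connectivity $S$ has a spanning out-arborescence $T$ rooted at $r$; then $d^-_{S-T}(v)\le k-1$ for $v\ne r$, and combining this with the trivial bound $d^-_{(S-T)[Y]}(v)\le |Y|-1$ one checks $|A((S-T)[Y])|\le k(|Y|-1)$ for every $Y$, so $S-T$ decomposes into $k$ branchings and $A(S)$ is a union of $k+1$ branchings. This only gives $dst(S)\le 2k+2$; the genuinely hard step is to shave off one colour, i.e.\ to replace the arborescence $T$ — which, having depth up to $|V(S)|-1$, really does cost two colours — by a bona fide galaxy $A_0$ that still kills the density inequality for every $X$. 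This is the main obstacle: such an $A_0$ cannot be a single out-star (a dense $X$ avoiding the star's centre would stay overfull), so it must be a carefully chosen vertex-disjoint union of out-stars with a centre inside every subset of $S$ that would otherwise violate the density bound; producing it uses strong connectivity together with the structural fact that, under $\Delta^-\le k$, an overfull set has at least $k+1$ vertices and is nearly a bidirected clique.
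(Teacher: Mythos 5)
Your high-level scheme is exactly the paper's: partition $A(D)$ into $k=\Delta^-$ forests (branchings) and one galaxy $A_0$, spend two colours per forest by parity of distance to the root, one colour on $A_0$, and check that no two conflicting arcs share a colour. That part is fine, as is your reduction of the required density inequality (for Frank's theorem applied to $D-A_0$) to strongly connected digraphs via a source component of the condensation.

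But the core of the theorem — proving that a galaxy $A_0$ with the required property actually exists — is not in the proposal. You say so yourself: the warm-up only yields a decomposition into $k+1$ branchings (hence $2k+2$ colours), and then ``the genuinely hard step is to shave off one colour $\dots$ This is the main obstacle $\dots$ producing it uses strong connectivity together with the structural fact that $\dots$''. That last sentence is a description of what would have to be done, not an argument, and in fact the sketched structural claim (``an overfull set $\dots$ is nearly a bidirected clique'') does not lead anywhere obvious; the interaction between the choice of star centres and the density condition is exactly where all the difficulty lies.

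The paper closes this gap not by routing through Frank's theorem but by proving the decomposition directly, with a carefully strengthened inductive hypothesis (its Lemma~\ref{decomp}). A \emph{$k$-decomposition} is an arc-partition into $k$ forests plus a galaxy $G$ in which every source is isolated, and it is \emph{$u$-suitable} if no arc of $G$ has head $u$; the lemma says every $k$-nice (multi)digraph admits a $u$-suitable $k$-decomposition for any prescribed $u$. The induction handles the strongly connected case by picking an outneighbour $v$ of $u$, taking a spanning arborescence $T$ rooted at $v$ that contains all arcs with tail $v$, and deleting $T$ and $v$; the residual digraph has maximum indegree $k-1$, so by induction it has a $u$-suitable $(k-1)$-decomposition, and then $uv$ can be placed in the galaxy (this is where $u$-suitability is used) while the remaining arcs into $v$ are distributed into the $k-1$ forests and $T$ becomes the $k$th forest. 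The non-strongly-connected case splits off a terminal strong component and glues decompositions; here the ``sources are isolated in $G$'' clause is what makes the union of the two galaxies again a galaxy. This extra bookkeeping in the inductive statement is precisely the idea your proposal is missing, and it is the whole content of the theorem.
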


\noindent We conjecture that

\begin{conjecture}\label{c:central}
Every digraph $D$ with maximum indegree $\Delta^-\geq 2$
satisfies $dst(D)\leq 2\Delta^-$.
\end{conjecture}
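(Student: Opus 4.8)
The plan is to establish Conjecture~\ref{c:central} by induction on $k=\Delta^-$, the two decisive inputs being the bound $dst(D)\le 4$ for digraphs with $\Delta^+,\Delta^-\le 2$ (announced in the abstract) and the classical fact that a regular digraph factors into directed cycles. First I would reduce to a \emph{diregular core}. Adding arcs so that every indegree equals $k$ is harmless, since it preserves $\Delta^-=k$ and can only increase $dst$. The first delicate point is to then dispose of vertices of large outdegree: building on the ideas behind Theorem~\ref{2k+1} (whose statement already allows unbounded $\Delta^+$), one should be able to assume $\Delta^+\le k$ as well, and then, after a further round of arc and auxiliary-vertex additions, that $D$ is $k$-diregular.

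With $D$ now $k$-diregular, the second step is the standard factorisation: the bipartite graph on two copies of $V(D)$ with an edge $t_u h_v$ for every arc $uv$ of $D$ is $k$-regular, so by K\"onig's edge-colouring theorem it decomposes into $k$ perfect matchings; reading these back in $D$ gives $A(D)=C_1\cup\cdots\cup C_k$ with each $C_i$ a spanning $1$-diregular subdigraph, i.e.\ a disjoint union of directed cycles. The third step uses that the union $C_i\cup C_j$ of any two factors is $2$-diregular, hence $dst(C_i\cup C_j)\le 4$ by the quoted bound. Giving each pair of factors a private palette of four colours, a $k$-diregular digraph with $k$ \emph{even} satisfies $dst(D)\le 4\cdot(k/2)=2k$, which settles the even case. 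For $k$ \emph{odd} I would group the factors as $(C_1\cup C_2\cup C_3)\cup(C_4\cup\cdots\cup C_k)$, colour the $(k-3)$-diregular remainder with $4\cdot((k-3)/2)=2k-6$ colours as above, and use a fresh palette on the $3$-diregular part; a bound of $6$ on $dst(C_1\cup C_2\cup C_3)$ would then yield $dst(D)\le 2k$. In other words, the whole conjecture reduces to the single assertion that every $3$-diregular digraph has directed star arboricity at most $6$.

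This last statement is where I expect the real obstacle to lie. Theorem~\ref{2k+1} already gives $dst\le 7$ for a $3$-diregular digraph, so one ``only'' has to save one colour. The natural lines of attack are: (a) take a $7$-colouring in which the last colour class is as sparse as possible --- for instance meeting each directed cycle of a chosen factor in at most one arc --- and absorb its arcs into the other six classes through Kempe-type recolourings; or (b) a direct discharging argument on how the three directed-cycle factors overlap. The difficulty is that a vertex lying on three distinct directed cycles leaves essentially no room with only six colours, so the local recolourings conflict with one another, and turning a local augmentation into a global one is exactly what has so far resisted proof. A secondary, and possibly also non-routine, obstacle is Step~1 itself: making rigorous the claim that unbounded outdegree can be absorbed into a diregular core without losing the bound --- equivalently, proving the case $k=2$, namely $dst(D)\le 4$ whenever $\Delta^-(D)\le 2$, for digraphs of arbitrarily large outdegree.
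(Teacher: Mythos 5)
The statement you are trying to prove is stated in the paper as an open \emph{conjecture}, not as a theorem: the authors prove it only for acyclic digraphs (Theorem~\ref{dst-acircuitic}), and separately prove the weaker Conjecture~\ref{diregular} for $k=2$ (Theorem~\ref{2diregular}), but the general case remains open. So there is no ``paper's own proof'' to compare against, and a complete argument here would be a genuinely new result.

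Your sketch correctly identifies two gaps, but the first one is more fatal than you suggest. The reduction from ``$\Delta^-\le k$, $\Delta^+$ arbitrary'' to ``$k$-diregular'' is not a routine absorption step --- it is precisely the gap between Conjecture~\ref{diregular} (both degrees bounded, which the paper proves for $k=2$) and Conjecture~\ref{c:central} (only indegree bounded, open even for $k=2$). A digraph with a vertex of outdegree $100$ is not a subdigraph of any $2$-diregular digraph, so there is no embedding to appeal to, and nothing in the proof of Theorem~\ref{2k+1} (a Frank-type forest decomposition plus a galaxy) offers a way to cap the outdegree. In fact the paper explicitly observes that to prove Conjecture~\ref{c:central} it suffices to handle $k=2$ and $k=3$ (by partitioning in-neighbourhoods), so your entire machinery --- K\"onig factorisation into $1$-diregular factors, pairing factors, the $3$-diregular leftover --- would only ever be invoked at $k=2$ or $k=3$, and at $k=2$ it collapses to exactly the open statement ``$\Delta^-\le 2$ implies $dst\le 4$ with no outdegree restriction.'' The second gap, $dst\le 6$ for $3$-diregular digraphs, is also open, but it is at least a well-posed subproblem; the outdegree-reduction step is not a subproblem but the heart of the conjecture. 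As an aside, your pairing argument modulo these gaps is sound: a $k$-regular bipartite graph does decompose into $k$ perfect matchings, the union of two $1$-factors is $2$-diregular, and giving each pair a private palette of four colours does give $2k$ colours for even $k$; that part of the architecture is fine.
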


\noindent This conjecture would be tight as Brandt~\cite{Bra03} showed that for
every $\Delta^-$, there is an acyclic digraph $D_{\Delta^-}$ with maximum 
indegree $\Delta^-$ and $dst(D_{\Delta^-})=2\Delta^-$. His construction is the special
case for $n=m=1$ of the construction given in
Proposition~\ref{lambdainf}. We settle Conjecture~\ref{c:central} for acyclic
digraphs in Section~\ref{entrant}. So combined with Brand's construction, $2\Delta^-$ is the best bound we can expect for acyclic digraphs.

\begin{remark}\rm
Let us note at this point that we restrict ourselves to
simple digraphs, i.e., we allow circuits of length two but multiple 
arcs are not permitted. When multiple arcs are allowed, all the bounds above do not hold.
Indeed, given an integer $\Delta^-$, the multidigraph $T_{\Delta^-}$ with three vertices 
$u$, $v$ and $w$, and $\Delta^-$ parallel arcs to each of $uv$, $vw$ and $wu$ satisfies $dst(T_{\Delta^-})=3\Delta^-$.
Moreover, this example is extremal since every multidigraph satisfies 
$dst(D)\leq 3\Delta ^-$. This can be shown by induction: pick a vertex $v$ 
with outdegree at most its indegree. (Such a vertex exists since $\sum_{u\in V(D)} d^+(u)=\sum_{u\in V(D)} d^-(u)$.)
If $v$ has no inneighbour, then $v$ is isolated, and we can remove $v$ and apply induction. 
Otherwise, we consider any arc $uv$. The colour of $uv$ must be different from the colours of the $d^-(u)$ arcs entering $u$, the
$d^+(v)$ arcs leaving $v$, and the $d^-(v)-1$ other arcs entering $v$, so at most $3\Delta ^--1$ arcs in total.
Hence, we may remove 
the arc $uv$, apply induction to obtain a colouring of $D \setminus uv$. Extending this colouring to $uv$, we obtain a directed star colouring of $D$ with at most $3\Delta ^-$ colours.
\end{remark}

Note that to prove Conjecture~\ref{c:central}, it will be enough to consider the two cases $\Delta^-=2$ and $\Delta^-=3$.  To see this,
let $D$ be a digraph with maximum indegree $\Delta^-\geq 2$ and $k=\lfloor \Delta^-/2\rfloor$.
For every vertex $v$, let $(N^-_1(v), N^-_2(v), \dots , N^-_k(v))$ be a partition of
$N^-(v)$ such that $|N^-_i(v)|\leq 2$ for all $1\leq i\leq k-1$ and 
$|N^-_k(v)|\leq 2$ if $\Delta^-$ is even and $|N^-_k(v)|\leq 3$ if  $\Delta^-$ is odd. 
Then the digraph $D_i$ with vertex set $V(D)$ and such that $N^-_{D_i}(v)= N^-_{i}(v)$ for every vertex $v\in V(D)$,
has maximum indegree at most two except if $i=k$ and $\Delta^-$ is odd, in which case $D_k$ has maximum indegree
at most three. If Conjecture~\ref{c:central} holds for every $D_i$ then 
it would also hold for $D$.

\vspace{9pt}

We next consider the directed star arboricity of a digraph with bounded maximum
degree. The {\it degree} of a vertex $v$ is $d(v)=d^-(v) + d^+(v)$.  
This corresponds to the degree of
the vertex in the underlying multigraph.  (We have edges with
multiplicity two in the underlying multigraph each time there is a circuit of length two in the digraph.)
The {\it maximum degree} of a digraph $D$, denoted by $\Delta(D)$, or
simply $\Delta$ when $D$ is clearly understood from the context, is
$\max\:\{\:d(v), v\in V(D)\:\}$. 
Let us denote by $\mu(G)$, the maximum multiplicity 
of an edge in a multigraph. 
By Vizing's theorem~\cite{Viz64}, one can colour the edges of a multigraph 
with $\Delta(G) + \mu (G)$
colours so that two edges have different colours if they are incident. 
Since the multigraph underlying a digraph has maximum multiplicity at most two, 
for any digraph $D$, $dst(D)\leq \Delta +2$.
We conjecture the following:
\begin{conjecture}\label{degre}
Let $D$ be a digraph with maximum degree $\Delta \geq 3$.
Then $dst(D)\leq \Delta$.
\end{conjecture}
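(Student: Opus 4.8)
The plan is to prove the statement by induction on the number of arcs, isolating a short list of reducible configurations by discharging. Let $D$ be a counterexample with $|A(D)|$ minimum: $\Delta(D)\le\Delta$, $\Delta\ge 3$, yet $dst(D)>\Delta$. We may assume $D$ is connected with no isolated vertex. The basic reduction is arc deletion: given $uv\in A(D)$, colour $D\setminus uv$ with $\Delta$ colours by minimality; the colour of $uv$ must avoid the colours of the $d^-(u)$ arcs entering $u$, the $d^+(v)$ arcs leaving $v$, and the $d^-(v)-1$ other arcs entering $v$, i.e.\ at most $d^-(u)+d(v)-1$ colours. So if $d^-(u)+d(v)\le\Delta$ we extend, a contradiction; hence in $D$ every arc $uv$ satisfies $d^-(u)+d(v)\ge\Delta+1$. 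In particular $D$ has no source, every vertex carrying an out-arc has indegree at least one, and this inequality forces the in-neighbours of a low-degree vertex, and the out-neighbours of a small-indegree vertex, to have degree close to $\Delta$. Moreover, if $\Delta^-(D)\le(\Delta-1)/2$ then $dst(D)\le 2\Delta^-(D)+1\le\Delta$ by Theorem~\ref{2k+1}, so we may assume $D$ has a vertex of large indegree.

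The second phase is a discharging argument aiming to reduce to the case where $D$ is essentially $\Delta$-regular. Arc deletion alone no longer suffices, so when the direct extension fails one recolours: switch the colour of a competing arc (another arc into $v$, or an arc leaving $v$) along a short alternating structure so as to free a colour for $uv$. The delicate point, and what makes this harder than ordinary edge recolouring, is that a colour class is a directed star forest, not a matching: toggling colours at a vertex can create a monochromatic directed path of length two elsewhere. One therefore has to pick the reducible configurations (vertices of degree at most $2$, digons whose ends do not both have full degree, short directed paths through low-degree vertices, and vertices all of whose out-neighbours have degree $<\Delta$) so that the recoloured region stays local and the conflicts it creates are controllable. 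I expect this to be the longest and most technical part of the argument.

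What remains is the dense, near-$\Delta$-regular core, where I would try to bypass recolouring by a global argument. Observe that a directed star $\Delta$-colouring is exactly the following data: for every vertex $v$ a set $I(v)\subseteq\{1,\dots,\Delta\}$ with $|I(v)|=d^-(v)$ (the colours carried by the arcs into $v$), and a bijection $\beta_v$ from the arcs into $v$ onto $I(v)$, such that $\beta_v(u)\notin I(u)$ for every arc $uv$ --- indeed condition $(ii)$ yields the sets $I(v)$, and condition $(i)$ says every out-arc $vw$ avoids $I(v)$. So the problem splits into choosing the family $\{I(v)\}$ and then realising the bijections $\beta_v$, the latter being, for each $v$ separately, a perfect matching between the in-neighbours $u$ of $v$ and the colours in $I(v)\setminus I(u)$, which exists by Hall's theorem under explicit conditions on the $I(v)$'s. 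The plan for the core would be to choose the $\{I(v)\}$ (greedily, or with a probabilistic argument when $\Delta$ is large) so that all these Hall conditions hold; finding a globally consistent choice of the in-colour sets is the crux.

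The main obstacle is exactly this regular core: discharging can be pushed to destroy all low-degree and digon configurations, but a near-$\Delta$-regular digraph offers neither an obvious recolouring reduction nor an obviously feasible family $\{I(v)\}$. For $\Delta=3$ the core is small enough to finish directly --- the inequalities above, together with the degree-$\le 2$ reductions, force $D$ to be $3$-regular with no source, and such a digraph can be coloured by a direct argument on the underlying cubic multigraph --- which settles the conjecture for $\Delta=3$, i.e.\ the subcubic case. For general $\Delta$ the regular core seems to require a genuinely new idea, which is why the bound is only conjectured here.
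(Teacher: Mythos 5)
This statement is \emph{Conjecture}~\ref{degre} in the paper; it is not proved there, and you correctly stop short of claiming a proof, so the key question is whether your sketch gives anything the paper does not have and whether your claims along the way are accurate. The arc-deletion inequality $d^-(u)+d(v)\ge\Delta+1$ and the observation that $\Delta^-\le(\Delta-1)/2$ is handled by Theorem~\ref{2k+1} are both correct, and the reformulation of a directed star $\Delta$-colouring as a choice of in-colour sets $I(v)$ of size $d^-(v)$ together with per-vertex systems of distinct representatives in $I(v)\setminus I(u)$ is a genuinely useful way to look at the problem that the paper does not spell out. Your honest assessment that choosing the family $\{I(v)\}$ coherently on a dense near-regular core is the real obstacle is, as far as I can tell, accurate.

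Where the proposal overreaches is the $\Delta=3$ paragraph. The claim that arc-deletion plus ``degree-$\le 2$ reductions'' forces a minimal counterexample to be $3$-regular is not justified and in fact does not follow: a vertex $v$ with $d^-(v)=d^+(v)=1$ survives the arc-deletion reduction as long as its in-neighbour has indegree $\ge 2$ and its out-neighbour has degree $3$, and no reduction for such vertices is supplied. And ``such a digraph can be coloured by a direct argument on the underlying cubic multigraph'' is not a proof; note that a directed star colouring is not an edge colouring of the underlying graph (arcs sharing only a tail are unconstrained), so Vizing-type bounds do not apply directly, and the paper's actual proof of the subcubic case (Theorem~\ref{degre3}) is considerably more involved: it partitions vertices by indegree, builds a conflict graph on a carefully chosen arc set, applies Brooks' theorem to $3$-colour it after excluding $K_4$'s by a local reducibility argument, and then extends via Lemmas~\ref{cycle} and~\ref{extension}. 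If you want to claim even the $\Delta=3$ case, you need an argument of comparable substance; the discharging/recolouring and Hall-condition machinery you describe for general $\Delta$ is a reasonable research direction but is not yet a proof of anything beyond what the paper already establishes.
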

This conjecture would be tight since every digraph with $\Delta=\Delta^-$
has directed star arboricity at least $\Delta$.
In Section~\ref{seccubic}, we prove that Conjecture \ref{degre} holds when $\Delta=3$.

\begin{theorem}\label{degre3}
Every subcubic digraph has directed star arboricity at most three.
\end{theorem}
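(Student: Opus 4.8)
The plan is an induction on the number of arcs via a minimal counterexample. It is convenient to use the reformulation: a directed star $3$-colouring is a map $\phi:A(D)\to\{1,2,3\}$ such that at each vertex the arcs entering it receive pairwise distinct colours and the set of colours on its in-arcs is disjoint from the set of colours on its out-arcs. The key rigidity is that a vertex of indegree $2$ has the colour of every out-arc forced (it is the unique colour missing on the two in-arcs), while at a vertex of indegree at most $1$ each out-arc has at least two admissible colours. So let $D$ be a subcubic counterexample with $|A(D)|$ minimum.

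First come the easy reductions, all by deleting one arc, colouring the rest by minimality, and extending. Isolated vertices are deleted. If $d^-(v)=0$ and $d^+(v)\ge 1$, delete an out-arc $vw$: it conflicts only with the $d^-(w)-1$ other in-arcs of $w$ and the $d^+(w)$ out-arcs of $w$, that is with at most $d(w)-1\le 2$ already-coloured arcs, so it can be coloured; hence $\delta^-(D)\ge 1$. The same argument kills a pendant arc into a sink of indegree $1$, and, pushed a little further, forces strong local conditions, e.g.\ every in-arc of a sink of indegree $2$ has a tail of indegree $2$ and outdegree $1$. More involved are the two-arc deletions needed to handle sinks of indegree $2$ and $3$: removing such a sink together with its in-arcs leaves a smaller subcubic digraph, and one must re-extend using the freedom in which pair of colours each indegree-$2$ neighbour receives, together with a local recolouring (an alternating-path argument) to defeat the system-of-distinct-representatives obstruction that appears when the in-neighbours are all forced to leave the same colour free. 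Digons must be treated as their own family of reducible configurations, since a digon on $\{u,v\}$ creates conflicts between its two arcs and at both $u$ and $v$.

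After these reductions one may assume $\Delta^-(D)\le 2$ and aim for the target structure: a partition $A(D)=A_1\cup A_2$ with $A_1$ a directed star forest (to be coloured $3$) and $\Delta^-(A_2)\le 1$; a digraph of maximum indegree at most $1$ whose directed circuits are all even is the union of two directed star forests, by alternating colours along its underlying functional components (an independent set of arcs of a directed circuit of length $k$ has size at most $\lfloor k/2\rfloor$, so an odd circuit genuinely needs three colours). Here $A_1$ is built by picking, for every vertex $v$ of indegree $2$, one in-arc $e_v$; the resulting set is a directed star forest precisely when every indegree-$2$ vertex can be given an in-neighbour of indegree at most $1$, and an odd directed circuit surviving in $A_2$ is broken by moving one of its arcs into $A_1$. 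The hard part is exactly the rigidity of indegree-$2$ vertices: forced colours propagate along directed paths, so no purely local modification is obviously safe; the configurations that resist the simple reductions — chains and circuits all of whose vertices have indegree $2$, and sinks of indegree $2$ whose two in-neighbours are both forced — are what make the recolouring/alternating-path machinery necessary, and I expect their analysis to be the bulk of the proof.
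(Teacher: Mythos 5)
Your plan is genuinely different from the paper's, but it is not a proof: you have stated a target structure and a list of obstructions to reaching it, and you explicitly defer the analysis of those obstructions (``I expect their analysis to be the bulk of the proof''). The obstructions you name --- indegree-$2$ vertices whose two in-neighbours both have indegree $2$, sinks of indegree $2$ or $3$ whose in-neighbours all have the same forced colour, digons, and odd circuits that survive into $A_2$ --- are precisely where all the work is. No ``alternating-path machinery'' is actually constructed, and it is far from obvious that one exists that does not itself run into propagation of forced colours around long odd structures. Moreover, your target decomposition ($A_1$ a galaxy, $A_2$ with $\Delta^-(A_2)\le 1$ and all directed circuits even) is strictly stronger than ``$A(D)$ is the union of three galaxies'': a union of two galaxies can have a vertex with indegree $2$, so a digraph with $dst(D)=3$ need not admit your decomposition in any evident way. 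You neither prove that the reductions you sketch (two-arc deletions at sinks, digon elimination) bring $D$ into a state where the decomposition exists, nor that the decomposition exists in that state.

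For comparison, the paper's proof does not attempt a $1{+}2$ split of the arc set. It splits the \emph{vertex} set into $D_1$ (indegree at most $1$) and $D_2$ (indegree at least $2$), first showing $D_1$ has no even circuit in a minimal counterexample. It then builds a conflict graph on $A(D_1)\cup[D_2,D_1]$ whose edges encode not only the usual head/tail conflicts but also extra ``selected conflicts'' attached to \emph{critical sets} of $D_2$ (vertices of $D_2$ receiving two or more arcs from $D_1$, and odd circuits of $D_2$ whose in-neighbours all lie in $D_1$). A local analysis shows this conflict graph is subcubic and $K_4$-free, so Brooks' theorem $3$-colours it. That partial colouring is then turned into a list assignment on the remaining arcs, and a separate list-extension lemma (which itself uses a list-colouring lemma for circuits to defeat the SDR obstruction you allude to) finishes the colouring. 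The ``selected conflicts'' are exactly the device that pre-empts the forced-colour propagation you identify as the hard part; your proposal has no analogue of this, and so the gap is at the centre of the argument rather than at its periphery.
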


A first step towards Conjectures~\ref{c:central} and \ref{degre} 
would be to prove
the following weaker statement. 
\begin{conjecture}\label{diregular}
Let $k\geq 2$ and $D$ be a digraph.
If $\max(\Delta^-, \Delta^+)\leq k$ then $dst(D)\leq 2k$.
\end{conjecture}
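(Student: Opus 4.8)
The plan is to reduce Conjecture~\ref{diregular} to its two smallest instances $k=2$ and $k=3$, much as the reduction of Conjecture~\ref{c:central} to $\Delta^-\in\{2,3\}$ was carried out above. To $D$ associate the bipartite \emph{incidence graph} $H$ with vertex set $\{v^+:v\in V(D)\}\cup\{v^-:v\in V(D)\}$ and an edge $u^+v^-$ for every arc $uv$ of $D$; then $\Delta(H)=\max(\Delta^-,\Delta^+)\le k$, so König's edge-colouring theorem for bipartite graphs splits $A(D)$ into $A_1\cup\cdots\cup A_k$ in such a way that the union of any $t$ of the $A_i$ induces a subdigraph with in- and outdegree at most $t$. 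If $k=2j$ is even, regroup the $A_i$ into $j$ subdigraphs with $\Delta^-,\Delta^+\le 2$; if $k=2j+1$ is odd, regroup them into one subdigraph with $\Delta^-,\Delta^+\le 3$ together with $j-1$ subdigraphs with $\Delta^-,\Delta^+\le 2$. Colouring each piece with a private palette (of size $4$ for a piece with $\Delta^-,\Delta^+\le2$, of size $6$ for a piece with $\Delta^-,\Delta^+\le3$) and taking the union is a directed star colouring of $D$ using exactly $2k$ colours --- \emph{provided} one knows that $\Delta^-,\Delta^+\le 2$ forces $dst\le 4$ and $\Delta^-,\Delta^+\le 3$ forces $dst\le 6$.

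The first of these is exactly the bound $dst(D)\le 4$ for $\Delta^-(D),\Delta^+(D)\le 2$ announced in the abstract, so it may be assumed known. Hence the whole conjecture comes down to the single statement: \emph{if $\Delta^-(D)\le 3$ and $\Delta^+(D)\le 3$ then $dst(D)\le 6$}. This $k=3$ case cannot be avoided: pulling out a piece with $\Delta^-,\Delta^+\le 1$ instead (a disjoint union of directed paths and circuits) yields only $2k+1$, since an odd circuit needs three directed star forests, and trying to recolour one arc of such a circuit with a colour borrowed from the other pieces may fail, the endpoints of that arc possibly seeing every colour of the borrowed palette.

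For the $k=3$ case the position is promising: Theorem~\ref{2k+1} already gives $dst(D)\le 7$ when $\Delta^-\le 3$, so one needs merely to save a single colour, and the additional hypothesis $\Delta^+\le 3$ should be what makes this possible. Concretely, I would take a counterexample $D$ minimising $|V(D)|$, then $|A(D)|$, and exclude reducible configurations. Any $6$-colouring of $D-uv$ extends to $uv$ unless the set of colours forbidden for $uv$ --- those of the at most $\Delta^--1$ other arcs into $v$, the at most $\Delta^+$ arcs leaving $v$, and the at most $\Delta^-$ arcs entering $u$, at most $2\Delta^-+\Delta^+-1=8$ colours in all --- already has size $6$; low-degree endpoints or short directed cycles through $uv$ force coincidences among these colours and thus give reducible configurations. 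One then derives a contradiction by a global counting (or discharging) argument. A parallel route is the kernel/list-colouring method applied to the conflict graph of the arcs, whose near-line-graph structure is precisely what one would exploit to reach $2k$ instead of the $2\Delta^-+\Delta^+$-type bound that a plain greedy argument gives.

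The main obstacle is exactly this: closing the argument for $\Delta^-,\Delta^+\le 3$ with only $6$ colours. Every cheap estimate overshoots --- greedy on the conflict graph gives $9$, Vizing's theorem on the underlying multigraph (for which $\Delta\le 6$ and $\mu\le 2$) gives $8$, and Theorem~\ref{2k+1} gives $7$ --- so one is forced to use the special structure of directed star colourings, in particular the interplay between the in-star at a vertex $v$ and the arcs leaving $v$, rather than generic edge-colouring bounds. The case analysis here, in the spirit of the proof of Theorem~\ref{degre3}, is where the real work lies; once this one case is settled, the reduction of the first paragraph yields Conjecture~\ref{diregular} in full.
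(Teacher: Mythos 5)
This statement is labelled a \emph{conjecture} in the paper and is not proved there; the paper only establishes it for $k=2$ (Theorem~\ref{2diregular}) and observes, without giving details, that the general statement reduces to the cases $k\in\{2,3\}$. Your proposal does not claim more than that, and in fact it is an honest and accurate account of the state of affairs, so there is no hidden error to flag --- but neither is there a proof, on your side or the paper's.

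Your reduction via the bipartite incidence graph and K\"onig's edge-colouring theorem is correct and is a clean way to make precise the paper's unargued remark that $k\in\{2,3\}$ suffices: partitioning $A(D)$ into $k$ ``bi-matchings'' (subdigraphs with $\Delta^-,\Delta^+\le 1$) and regrouping them into $\lfloor k/2\rfloor$ pieces with $\Delta^-,\Delta^+\le 2$ (plus one piece with $\Delta^-,\Delta^+\le 3$ when $k$ is odd) gives exactly $2k$ colours from the putative bounds $dst\le 4$ and $dst\le 6$. Your arithmetic $4(j-1)+6=2(2j+1)$ for odd $k=2j+1$ checks out, and your diagnosis of why peeling off a $\Delta^-,\Delta^+\le 1$ piece overshoots to $2k+1$ (odd circuits) is right. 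The sanity checks you give for why $6$ colours is genuinely hard for $\Delta^-,\Delta^+\le 3$ (greedy on the conflict graph gives $9$, Vizing gives $8$, Theorem~\ref{2k+1} gives $7$) are also correct.

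The one thing worth flagging is a matter of framing rather than substance: the second and third paragraphs read as though the $k=3$ case is within reach pending a finite case analysis ``in the spirit of Theorem~\ref{degre3}''. That is optimistic; the authors of the paper did not manage it, and you should not present this as a proof outline that merely needs the details filled in. What you have is a correct reduction of Conjecture~\ref{diregular} to the single open statement $\Delta^-,\Delta^+\le 3\Rightarrow dst\le 6$, together with a plausible but unverified strategy for attacking that statement. As a response to ``prove this conjecture'' that is the right answer, and you should say so explicitly rather than ending on a note that suggests the proof is essentially complete.
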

This conjecture holds and is far from being tight for large values of $k$.
Indeed Guiduli~\cite{Gui97} showed that if 
$\max(\Delta^-, \Delta^+) \leq k$, then $dst(D)\leq k +20\log k + 84$.
Guiduli's proof is based on the fact that, when both out- and indegrees
are bounded, the colour of an arc depends on the colour of few other
arcs. This bounded dependency allows the use of the Lov\'asz Local
Lemma.  This idea was first used by Algor and Alon~\cite{AlAl89} for
the star arboricity of undirected graphs. We also note that Guiduli's
result is (almost) tight since there are digraphs $D$ with
$\max(\Delta^-, \Delta^+)\leq k$ and $dst(D)\geq k + \Omega (\log k)$
(see \cite{Gui97}).  

\noindent As for Conjecture
\ref{c:central}, it is quite straightforward to check that it is sufficient to prove Conjecture~\ref{diregular}
for $k=2$ and $k=3$.
In Section~\ref{d+d-}, we prove that Conjecture~\ref{diregular} holds
for $k=2$. By the above remark, this implies that
Conjecture~\ref{diregular} holds for all even values of $k$.
\begin{theorem}\label{2diregular}
Let $D$ be a digraph. If $\Delta^-\leq 2$ and $\Delta^+\leq 2$, 
then $dst(D)\leq 4$. In particular, \emph{Conjecture~\ref{diregular}} holds for all even values of $k$.
\end{theorem}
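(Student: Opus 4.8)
The plan is to prove Theorem~\ref{2diregular} by orienting the argument around the underlying graph of $D$ and a careful arc-by-arc (or rather vertex-by-vertex) colouring with $4$ colours, exploiting both the indegree and the outdegree bound. Since $\Delta^-\le 2$ and $\Delta^+\le 2$, every vertex has total degree at most $4$, so the underlying multigraph $G$ has $\Delta(G)\le 4$ and $\mu(G)\le 2$; Vizing already gives a proper edge colouring with $\Delta+\mu\le 6$ colours, which is too many, so the point is to use the \emph{orientation} to save two colours. First I would recall the reformulation: a directed star $k$-colouring is exactly a partition of $A(D)$ into $k$ ``galaxies'', where a galaxy is a union of stars each centred at a vertex, all arcs of a star leaving its centre, and no vertex is the centre of one star and a leaf of another in the same galaxy. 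So I want to cover $A(D)$ by $4$ galaxies.

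The key structural step is to first handle the ``out-part''. For each vertex $v$ with $d^+(v)=2$, call its two out-arcs a \emph{cherry}. The idea is to 2-colour the out-arcs so that the two arcs of every cherry get different colours from $\{1,2\}$ — trivially possible since at each vertex the out-arcs are independent — but this ignores condition $(i)$. So instead I would work on the \emph{line digraph}-like conflict structure: build an auxiliary graph $H$ on $A(D)$ where $uv\sim vw$ (consecutive arcs) and $uv\sim u'v$ (same head). Because $\Delta^+\le 2$ and $\Delta^-\le 2$, each arc $uv$ is adjacent in $H$ to at most $d^-(u)\le 2$ arcs of the form $*u$ forced by condition~$(i)$ upstream, at most $d^+(v)-1\le 1$ arc $v*$ downstream, and at most $d^-(v)-1\le 1$ arc $*v$, i.e.\ $H$ has maximum degree at most $4$ — but a uniform degree-$4$ bound only yields $5$ colours greedily. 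To get $4$ I would instead decompose $D$ into two ``sub-digraphs'' $D_1,D_2$ with $\Delta^-(D_i)\le 1$ by splitting each in-neighbourhood of size $2$ into two singletons (exactly the reduction described in the excerpt for Conjecture~\ref{c:central}); each $D_i$ has $dst(D_i)\le 2\Delta^-(D_i)+1\le 3$ by Theorem~\ref{2k+1}, giving only $dst(D)\le 6$. That is still not enough, so the real work is to combine the two functional digraphs $D_1,D_2$ more cleverly: a digraph with $\Delta^-\le 1$ is a disjoint union of ``in-stars'' oriented toward their centres plus in-trees/cycles, and one shows directly that when additionally $\Delta^+\le 2$ one can star-colour each $D_i$ with $2$ colours except on a set of arcs that can be absorbed into the partner's colour classes. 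I would make this precise by an Eulerian/alternating-path argument on the length-two paths, 2-colouring the arcs of $D_i$ greedily along in-trees so that at each vertex the (at most one) incoming arc and the (at most two) outgoing arcs are handled, and then arguing that the residual conflicts between $D_1$ and $D_2$ involve only vertices where the indegree genuinely equals $2$, which by $\Delta^+\le 2$ form low-degree components.

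Concretely, here is the route I would actually write up. Consider the underlying graph $G$ of $D$; since $\Delta(G)\le 4$, by a theorem on edge colourings of graphs with $\Delta\le 4$ (or directly, since $G$ has an Eulerian orientation after adding a perfect matching on odd-degree vertices) we can partition $E(G)$ into two subgraphs $G_1,G_2$ each with $\Delta(G_i)\le 2$, i.e.\ each a disjoint union of paths and even cycles — here one must be slightly careful because of multiplicity-$2$ edges, but a digon simply contributes one edge to each $G_i$. This induces a partition of $A(D)$ into $A_1,A_2$ with each $D_i$ having $\Delta(D_i)\le 2$ and hence $\Delta^-(D_i)\le 2$, $\Delta^+(D_i)\le 2$; but more importantly each $D_i$ has \emph{underlying graph of maximum degree $2$}, so $D_i$ is a disjoint union of directed paths and directed-or-mixed cycles. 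For such a digraph I claim $dst(D_i)\le 2$: on a path or cycle the out-arcs at each vertex number at most... wait, a vertex can still have two out-arcs in $D_i$ if both incident $G_i$-edges leave it, but then it has no in-arc in $D_i$; so in $D_i$ the conflict graph $H_i$ on $A_i$ has maximum degree at most $2$ (each arc conflicts with the $\le 1$ arc sharing its head and the $\le 1$ consecutive arc, using that a vertex of out-degree $2$ in $D_i$ has in-degree $0$ there), hence $H_i$ is a union of paths and cycles and is $2$-colourable unless it contains an odd cycle; an odd cycle in $H_i$ corresponds to an odd directed circuit in $D_i$, which needs $3$ colours — this is the one genuine obstruction. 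So the final and hardest step is to show the partition $G=G_1\cup G_2$ can be chosen so that \emph{neither} $D_i$ contains an odd directed circuit, or alternatively to absorb the odd circuits: if $D_i$ has an odd circuit $C$, it uses $3$ colours but the third colour is used \emph{only} on circuits and these circuits are vertex-disjoint from each other within $D_i$, so colours $\{1,2,3\}$ for $D_1$ and $\{1,2,4\}$ for $D_2$ almost work except where a $D_1$-circuit meets a $D_2$-circuit at a shared vertex; bounding these interactions (again via $\Delta\le 4$) and locally recolouring to resolve them is the crux. I expect this last ``odd-circuit reconciliation'' to be the main obstacle, and I would handle it by choosing the Eulerian split to break all circuits of $D$ evenly — concretely, take an Eulerian orientation of the connectivity structure so that each circuit of the original $D$ alternates between $A_1$ and $A_2$, which is possible whenever the circuit has even length and requires a separate small argument (case analysis on a shortest bad circuit) for odd circuits.
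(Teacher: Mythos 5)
Your route is genuinely different from the paper's and, as you write it, it has a real gap. The paper proves Theorem~\ref{2diregular} by exhibiting a \emph{single} galaxy that spans every vertex of degree four (Lemma~\ref{spanninggalaxy}); removing its arcs leaves a subcubic digraph, to which Theorem~\ref{degre3} applies, giving $3+1=4$ colours. All of the work is in the spanning-galaxy lemma, which is proved via a delicate argument about ordered digraphs and alternating paths. You instead propose to split the underlying multigraph into two subgraphs $G_1,G_2$ of maximum degree $2$ (Eulerian/Petersen splitting), note that each induced $D_i$ then has $dst(D_i)\le 2$ except when $D_i$ contains an odd directed circuit, and hope to use colours $\{1,2\}$ on $D_1$ and $\{3,4\}$ on $D_2$. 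This is a reasonable plan, and your analysis that the conflict graph of a max-degree-$2$ digraph has maximum degree $2$ with odd cycles corresponding exactly to odd directed circuits is correct.

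The gap is the step you yourself flag as ``the crux'': you never show that the Eulerian split can be chosen so that \emph{neither} $D_i$ contains an odd directed circuit, nor do you give a working absorption argument. An arbitrary Eulerian split offers no control over this; for instance in a $4$-in-$4$-out-regular $D$ whose arc set decomposes into two arc-disjoint odd Hamiltonian directed circuits, the ``obvious'' split puts one circuit in each $D_i$ and both fail. One can sometimes fix this by local swaps (as in the $\overleftrightarrow{K_3}$ example), but you provide no general mechanism, and the statement ``the Eulerian split can always be chosen to avoid all odd directed circuits in both halves'' is itself a nontrivial lemma that would need proof. Your fallback --- using colours $\{1,2,3\}$ on $D_1$ and $\{1,2,4\}$ on $D_2$ --- is not a repair at all: once the two colour sets overlap, you lose the property that arcs from $D_1$ and $D_2$ cannot conflict, and then \emph{every} vertex shared by $D_1$ and $D_2$ (not just vertices on circuit--circuit meetings) becomes a potential violation of conditions $(i)$ and $(ii)$; the claim that the only problems occur where a $D_1$-circuit meets a $D_2$-circuit is false. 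So either you establish the odd-circuit-free splitting as a standalone lemma, or the whole plan collapses. The paper's spanning-galaxy approach sidesteps this difficulty entirely and also has the advantage of reducing to the already-proved subcubic case, but it pays for it with the technical machinery of Lemma~\ref{ordig}.
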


\vspace{11pt}

Next, we study the more general and more realistic problem in which
every vertex of $V$ is connected to the hub by $n$ optical
fibres. Moreover each node may send
several multicasts. We note $M_1(v), \dots ,M_{s(v)}(v)$ the $s(v)$ multicasts that node $v$ sends. For $1 \leq i \leq s(v)$, the set of nodes to which the multicast $M_i(v)$ is sent is denoted by $S_i(v)$. The problem is still to find the minimum number of wavelengths used considering that all fibres are identical.

\noindent We model this as
a problem on {\it labelled} digraphs: We construct a multidigraph $D$ on vertex
set $V$. For each multicast $M_i(v)=(v,S_i(v))$, $v \in V$, $1 \leq i \leq s(v)$, we add the set of arcs
$A_i(v)=\{vw, w\in S_i(v)\}$ with label $i$. The label of an arc $\vec a$
is denoted by $l(\vec a)$. Thus for every ordered pair $(u,v)$ of vertices and
label $i$ there is at most one arc $uv$ labelled by $i$. If each
vertex sends at most $m$ multicasts, there are at most $m$ labels on
the arcs. Such a digraph is said to be {\it $m$-labelled}.  One wishes
to find an {\it $n$-fibre wavelength assignment} of $D$, that is a
mapping $\Phi: A(D)\rightarrow \Lambda \times \{1,\dots , n\} \times
\{1, \dots n\}$ in which every arc $uv$ is associated a triple
$(\lambda (uv),f^+(uv), f^-(uv))$ such that~:
\begin{itemize}
\item[$(i)$] For each pair of arcs $uv$ and $vw$, $(\lambda (uv), f^-(uv)) \neq (\lambda (vw), f^+(vw))$; 
\item[$(ii)$] For each pair of arcs $uv$ and $u'v$, $(\lambda (uv), f^-(uv)) \neq (\lambda (u'v), f^-(u'w))$;
\item[$(iii)$] For each pair of arcs $vw$ and $vw'$, if $l(vw)\neq l(vw')$, then $(\lambda (vw), f^+(vw)) \neq (\lambda (vw'), f^+(vw'))$.
\end{itemize}

\noindent Here $\Lambda$ is the set of available wavelengths, $\lambda (uv)$ corresponds to the wavelength of $uv$,
and $f^+(uv)$ and $f^-(uv)$ are the fibres used in $u$ and $v$, respectively. We can describe the above equations as follows:
\begin{itemize}
\item Condition $(i)$ corresponds to the requirement that an arc entering $v$ and an arc leaving $v$ should have either different wavelengths or different fibres;
\item Condition $(ii)$ corresponds to the requirement that two arcs entering $v$ should have either different wavelengths or different fibres; and finally
\item Condition $(iii)$ corresponds to the requirement that two arcs leaving $v$ with different labels have either different wavelengths or different fibres.

\end{itemize}
The problem is to find the minimum cardinality $\lambda_n(D)$ of $\Lambda$ such that
there exists an  {\it $n$-fibre wavelength assignment} of $D$.

\noindent The crucial part of an $n$-fibre wavelength assignment is the function $\lambda$
which assigns colours (wavelengths) to the arcs.
It must be an {\it $n$-fibre colouring}, that is a function $\phi:  A(D)\rightarrow \Lambda$,
such that at each vertex $v$, for each colour $\omega \in \Lambda$,
$in(v,\omega)+out(v,\omega)\leq n$ where $in(v,\omega)$ denotes the number of arcs coloured by $\omega$ entering $v$ and
$out(v,\omega)$ denotes the number of labels $l$ such that there exists an arc leaving $v$ coloured by $\omega$.
Once we have an $n$-fibre colouring, one can easily find a suitable wavelength assignment. For every vertex $v$ and every colour $\omega$, this is done by assigning
a different fibre to each arc of colour $\omega$ entering $v$, and to each set of arcs of colour $\omega$ of the same label
that leave $v$.
We conclude that  $\lambda_n(D)$ is the minimum number of colours such that there exists an $n$-fibre colouring.

\noindent We are particularly interested in $\lambda_n(m,k)=\max\{\lambda_n(D) \tq
D \mbox{ is $m$-labelled} \et \Delta^-(D)\leq k\}$, that is the maximum
number of wavelengths that may be necessary if there are $n$ fibres,
and each node sends at most $m$ multicasts and receives at most $k$ multicasts.
In particular, $\lambda_1(1,k)=\max\{dst(D) \tq \Delta^-(D)\leq k\}$.
(So our above mentioned results show that $2k\leq \lambda_1(1,k)\leq 2k+1$.)
Brandt and Gonzalez showed that for $n\geq 2$ we have
$\lambda_n(1,k)\leq \left\lceil\frac{k}{n-1} \right\rceil$.
In Section~\ref{+rsfibres}, we study the case when $n\geq 2$ and $m\geq 2$.
We show in Proposition~\ref{lambdainf} and Theorem~\ref{fibmulti} that  $$ \mbox{if $m\geq n$ then}\ \ \ 
 \left\lceil\frac{m}{n}\left\lceil \frac{k}{n}\right\rceil +   \frac{k}{n} \right\rceil\: \leq  \: \lambda_n(m,k) 
\: \leq \: \left\lceil\frac{m}{n}\left\lceil \frac{k}{n}\right\rceil +   
\frac{k}{n} \right\rceil + C \frac{m^2\log k}{n} \mbox {\ \ \  for some constant $C$.}$$
We conjecture that the lower bound is the right value of $\lambda_n(m,k)$ when $m\geq n$.
We also show in Proposition~\ref{lambdainf} and Proposition~\ref{upper} that  
$$\mbox {if $m<n$, then \hspace{.5cm}} \left\lceil\frac{m}{n}\left\lceil \frac{k}{n}\right\rceil +   
\frac{k}{n} \right\rceil\leq  \lambda_n(m,k) \leq \left\lceil\frac{k}{n-m}\right\rceil.$$
The lower bound generalises Brandt and Gonzalez~\cite{BrGo05} results which established this inequality in the particular cases when
$k\leq 2$, $m\leq 2$ and $k=m$.  The digraphs used to show this lower
bound are all acyclic. We show that if $m \geq n$ then this lower bound is tight for acyclic digraphs.
Moreover the above mentioned digraphs have large outdegree. Generalising the result of Guiduli~\cite{Gui97},
we show that for an $m$-labelled digraph $D$ with both in- and outdegree  bounded by $k$ only few colours are needed when $m\geq n$:
$$ \lambda_n(D)\leq \frac{k}{n} + C'  \frac{m^2\log k}{n} \mbox {\ \ \ \ \ for some constant $C'$.}$$

\vspace{11pt}

Finally, in Section~\ref{sec:conclusion}, we consider the complexity of finding the directed star arboricity of a digraph, and prove that, unsurprisingly, this is an ${\cal NP}$-hard problem. More precisely, we show that determining the directed star arboricity of a digraph with  in- and outdegree
at most two is ${\cal NP}$-complete. We then give a very short proof of a theorem of Pinlou and Sopena~\cite{PiSo04}, showing that  {\it acircuitic directed star arboricity of subcubic graphs is at most four} (see Section~\ref{sec:conclusion} for the definitions).

\section{Directed Star Arboricity of Digraphs with Bounded Indegrees}\label{entrant}
 In this section, we give the proof of Theorem~\ref{2k+1} and settle Conjecture~\ref{c:central} for acyclic digraphs.

An {\it arborescence} is a connected digraph in which every vertex has indegree one
except one, called {\it root}, which has indegree zero.
A {\it forest} is the disjoint union of arborescences.
A {\it star} is an arborescence in which the root dominates all the
other vertices. A {\it galaxy} is a forest of stars. 
Clearly, every colour class of a directed star colouring is a galaxy.
Hence, the directed star arboricity of a digraph $D$ is the minimum
number of galaxies into which $A(D)$ may be partitioned. 

It is easy to see that a forest has directed star arboricity at most two.
Hence, an idea to prove Conjecture~\ref{c:central} would be to show that
every digraph has an arc-partition into $\Delta^-$ forests.
However this statement is false. Indeed a theorem of Frank~\cite{Fra79} (see also Chapter 53 of~\cite{Sch03}) characterises all digraphs which have an arc-partition into
$k$ forests.
Let $D=(V,A)$.
For any $U\subset V$, the digraph induced by the vertices of $U$ is denoted
$D[U]$.
\begin{theorem}[A. Frank]\label{CNSarbo}
A digraph $D=(V,A)$ has an arc-partition into $k$ forests if and only if 
$\Delta^-(D)\leq k$ and for every $U\subset V$, the digraph $D[U]$ has at most $k(|U|-1)$ arcs.
\end{theorem}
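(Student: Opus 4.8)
The plan is: necessity by a one-line count, and sufficiency by combining the Nash--Williams--Tutte forest-cover theorem with an exchange argument that repairs the orientation.

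\emph{Necessity} is immediate. If $A=A_1\cup\dots\cup A_k$ with every $(V,A_i)$ a forest, then each vertex has indegree at most one in each $A_i$, so $d^-_D(v)=\sum_i d^-_{A_i}(v)\le k$; and for any $U\subseteq V$ the restriction of $(V,A_i)$ to $U$ is a forest, hence has at most $|U|-1$ arcs, so $|A[U]|\le k(|U|-1)$.

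For \emph{sufficiency}, first discard the orientation. By the Nash--Williams--Tutte theorem on covering a graph by forests (equivalently, matroid union applied to $k$ copies of the graphic matroid $M_G$), the condition $|A[U]|\le k(|U|-1)$ for all $U$ is \emph{equivalent} to the existence of a partition of the underlying undirected multigraph of $D$ into $k$ forests. Fix such a partition $A=T_1\cup\dots\cup T_k$; the task is now to shuffle arcs between the classes so that in addition each class has indegree at most one everywhere, while keeping every class a forest. Among all partitions of $A$ into $k$ forests I would take one minimising the potential $\Phi=\sum_{i=1}^k\sum_{v\in V}\binom{d^-_{T_i}(v)}{2}$ and claim $\Phi=0$. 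If not, some class, say $T_1$, has two arcs with common head $v$; since $\sum_i d^-_{T_i}(v)=d^-_D(v)\le k$, some class $T_j$ has no in-arc at $v$. If moving one such arc $e$ from $T_1$ into $T_j$ leaves $T_j$ a forest, then $\Phi$ strictly drops (by at least $d^-_{T_1}(v)-1\ge 1$ at $v$, with no new monochromatic co-headed pair created in $T_j$), a contradiction; otherwise $T_j+e$ has a unique cycle, and one must carry out a chain of compensating single-element exchanges between classes.

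The hard part is exactly this last step: showing the chain can always be organised so that the result is again a partition into $k$ forests but with strictly smaller $\Phi$. I would handle it by a \emph{shortest augmenting path} argument in the exchange graph associated with the matroid union $M_G^{\vee k}$ --- an auxiliary digraph on $A$ recording the legal one-element moves between classes, with an artificial source pointing to the ``ways out of the overloaded class at $v$'' and an artificial sink fed by the ``free slots at $v$'' --- using shortestness (no chords) to guarantee that applying all the exchanges along the path simultaneously preserves the forest property of every class while removing the conflict at $v$. An alternative route, closer to Frank's own methods, is to recast ``partition into $k$ branchings'' as the feasibility of a $0$--$1$ submodular flow for the submodular function $U\mapsto k(|U|-1)$ under unit indegree capacities, and read off the two conditions from the submodular-flow feasibility theorem. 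In any case the essential difficulty is making the forest (graphic) constraint and the indegree (partition) constraint hold \emph{at the same time}; one cannot shortcut this by appealing to a ``matroid of branchings'', since branchings do not form a matroid --- already on three vertices $\{r,a,b\}$ with arcs $ra,rb,ab,ba$ the branchings $\{ba\}$ and $\{ra,ab\}$ violate the independence-exchange axiom.
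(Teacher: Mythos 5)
The paper does not prove this theorem; it is quoted from Frank~\cite{Fra79} (see also~\cite{Sch03}), so there is no internal argument to compare against, and your attempt must stand on its own. Your necessity argument is correct, and your remark that branchings do not form a matroid (with the four-arc example on $\{r,a,b\}$) is both true and well chosen: it rules out the temptingly short ``matroid union of a branching matroid'' non-proof.

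The sufficiency direction, however, is a plan rather than a proof. After introducing the potential $\Phi$ and disposing of the easy case (a single legal one-element move), you write that ``one must carry out a chain of compensating single-element exchanges'' and that ``the hard part is exactly this last step,'' and then you describe two ways you \emph{would} handle it (shortest augmenting paths in an exchange graph; a submodular-flow reformulation) without carrying either out. This is a genuine gap, not a routine detail. The naive chain argument runs into exactly the trouble you flag yourself: moving an arc $f$ from $T_j$ to $T_l$ can create a new indegree conflict at the head of $f$, so $\Phi$ need not decrease step by step, and ``apply all exchanges along a shortest path simultaneously'' must be engineered so that it respects \emph{both} the graphic-matroid constraint and the per-vertex indegree constraint at once. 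The standard no-shortcut lemma is stated for a single matroid union; here you have the coupling of the two constraints, which is precisely what makes the theorem nontrivial, and asserting that shortestness takes care of it is not the same as proving it.

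A self-contained route that sidesteps the exchange argument entirely is to reduce to Edmonds' disjoint spanning arborescences theorem: add a new root $s$ and, for each $v\in V$, put $k-d^-_D(v)$ parallel arcs $sv$ (possible since $\Delta^-\le k$). In the resulting digraph $D'$ every $v\in V$ has indegree exactly $k$, and for $\emptyset\neq U\subseteq V$ one has $d^-_{D'}(U)=k|U|-|A(D[U])|$, so Edmonds' condition $d^-_{D'}(U)\ge k$ is exactly $|A(D[U])|\le k(|U|-1)$. Edmonds' theorem then yields $k$ arc-disjoint spanning arborescences of $D'$ rooted at $s$; since $|A(D')|=k|V|$ and each spanning arborescence has $|V|$ arcs, they use every arc, and deleting $s$ leaves a partition of $A(D)$ into $k$ forests.
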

 This theorem implies that every digraph $D$ has an arc-partition into $\Delta^-+1$
forests. Indeed for any $U\subset V$, $\Delta^-(D[U])\leq \min\{\Delta^-, |U|-1\}$,
so  $D[U]$ has at most $\min\{\Delta^-, |U|-1\}\times |U|\leq (\Delta^-+1)(|U|-1)$ arcs.
Hence, every  digraph has directed star arboricity at most $2\Delta^-+2$.

\begin{corollary}\label{2k+2}
Every digraph $D$ satisfies $dst(D)\leq 2\Delta^-+2$.
\end{corollary}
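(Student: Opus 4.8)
The plan is to combine two ingredients: that every digraph admits an arc-partition into $\Delta^-+1$ forests (via Frank's Theorem~\ref{CNSarbo}), and that a forest has directed star arboricity at most two (recalled above). Granting both, a directed star $2(\Delta^-+1)$-colouring of $D$ is obtained by giving each forest in the partition its own private pair of colours.

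First I would verify the hypotheses of Theorem~\ref{CNSarbo} with $k=\Delta^-+1$. The bound $\Delta^-(D)\le\Delta^-+1$ is trivial, so the content is the arc-count condition. Fix $U\subseteq V$; inside $D[U]$ every vertex has indegree at most $\min\{\Delta^-,|U|-1\}$, whence $D[U]$ has at most $|U|\cdot\min\{\Delta^-,|U|-1\}$ arcs. It then suffices to check the elementary inequality $|U|\cdot\min\{\Delta^-,|U|-1\}\le(\Delta^-+1)(|U|-1)$, which splits into the case $\Delta^-\le|U|-1$ (where it reduces to $\Delta^-\le|U|-1$ after expanding) and the case $\Delta^-\ge|U|-1$ (where, dividing by $|U|-1$, it reduces to $|U|\le\Delta^-+1$). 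Theorem~\ref{CNSarbo} then provides forests $D_1,\dots,D_{\Delta^-+1}$ partitioning $A(D)$.

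For the second ingredient, each forest $D_i$ is a disjoint union of arborescences in which every vertex has indegree at most one, so condition $(ii)$ is vacuous; a directed star $2$-colouring is then obtained by colouring each arc according to the parity of the depth of its tail, so that consecutive arcs of a directed path receive distinct colours. Colour the arcs of $D_i$ with the pair $\{2i-1,2i\}$. The union of these colourings uses $2\Delta^-+2$ colours, and it is a directed star colouring of $D$: if two arcs $uv,vw$ (or $uv,u'v$) received the same colour they would lie in a common $D_i$, contradicting that the colouring of $D_i$ is a directed star colouring. I do not anticipate a real obstacle here; the only point requiring any care is the arc-count inequality above, which is a one-line case analysis, and the rest is assembly of facts already at hand.
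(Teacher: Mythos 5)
Your proof is correct and follows essentially the same route as the paper: verify the arc-count condition of Frank's Theorem~\ref{CNSarbo} with $k=\Delta^-+1$ to obtain a partition into $\Delta^-+1$ forests, then split each forest into two galaxies. You simply spell out in more detail the case analysis for the inequality and the parity-of-depth two-colouring of a forest, both of which the paper leaves implicit.
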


 Theorem~\ref{2k+1} states that $dst(D)\leq 2\Delta^-+1$. 
The idea to prove this theorem is to show that  
every digraph has an arc-partition into $\Delta^-$ forests 
and a galaxy $G$. To do so, we prove a stronger result, Lemma~\ref{decomp} below.

We need some extra definitions.
A {\it sink} is a vertex with outdegree $0$. A {\it source} is 
a vertex with indegree $0$. A multidigraph $D$ will be called {\it $k$-nice} if 
$\Delta^-\leq k$, and if the tails of parallel arcs, if any, are sources.
A {\it $k$-decomposition} of $D$ is an arc-partition into $k$ forests 
and a galaxy $G$ such that 
every source of $D$ is isolated in $G$.
Let $u$ be a vertex of $D$. A $k$-decomposition of $D$ is 
{\it $u$-suitable} if no arc of $G$ has head $u$.

\begin{lemma}\label{decomp}
Let $u$ be a vertex of a $k$-nice multidigraph $D$.
Then $D$ has a $u$-suitable $k$-decomposition.
\end{lemma}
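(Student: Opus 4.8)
The plan is to prove this by induction on the number of arcs of $D$. The base case (no arcs) is trivial: take all $k$ forests and the galaxy $G$ to be empty. For the inductive step, pick a suitable arc to remove, apply induction to $D$ minus that arc, and then extend the decomposition by placing the removed arc either into one of the forests or into the galaxy. The delicate point is the choice of which arc to remove, because we must preserve the $k$-nice property (so that induction applies) and we must be able to re-insert the arc without violating either the forest conditions, the galaxy condition (every colour class a disjoint union of stars), the source-isolation condition, or the $u$-suitability condition.

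First I would deal with easy reductions. If $u$ is not a sink, or more generally if some vertex other than a relevant one can be handled directly, one reduces; but the main case analysis should be driven by the indegree of $u$ and by the structure around sources. I expect the argument to split according to whether $D$ has a vertex $v$ of small "effective" indegree: Frank's theorem (Theorem~\ref{CNSarbo}) guarantees that $D$ decomposes into $\Delta^-+1$ forests, and the idea advertised before the lemma is to peel off a galaxy $G$ and keep only $k$ forests. Concretely, I would look for a vertex $v\neq u$ with $d^-(v)\le k$ that can serve as the head of a star in $G$: remove all arcs entering $v$ except possibly some, or better, remove a single arc $wv$ where $v$ has indegree exactly $k+1$ would not arise since $\Delta^-\le k$; so instead the relevant move is: if $v$ is a vertex all of whose in-arcs can go into $G$ (i.e. $v$ is allowed to be a star center, meaning $v\neq u$ and the tails are handled), delete those arcs, recurse, and put them back as a star centred at $v$. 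One must check this star does not collide with existing stars of $G$ at the tails $w$ — which is exactly why the galaxy condition forbids a vertex from being both a star centre and a star leaf, so we need the tails $w$ to be leaves of at most the new star, i.e. to have no out-arc already in $G$; the source-isolation condition in the definition of $k$-decomposition is precisely designed to make such tails available or to rule them out cleanly.

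The main obstacle, as I see it, is the bookkeeping of the galaxy constraints together with the $u$-suitability: we are not allowed to use $u$ as the head of a $G$-arc, so if the "problematic" vertex forcing us to overflow the $k$ forests turns out to be $u$ itself, we cannot absorb its in-arcs into $G$ and must instead find a different arc to remove — presumably an arc whose tail becomes (or is) a source, exploiting the $k$-nice hypothesis that tails of parallel arcs are sources. So I anticipate the crux is a careful case distinction: (a) when $D$ has an arc $xy$ with $y\neq u$ and $y$ "absorbable" into $G$, remove a whole in-star at $y$ or a single such arc; (b) when the only candidates involve $u$, remove instead an arc $xu$ and recurse, then re-insert $xu$ into one of the $k$ forests — this is possible provided $x$ is not already the head of $k$ forest-arcs and adding $xu$ creates no circuit in that forest, which should follow because $u$ has indegree $\le k$ and after recursion only $k$ forests are in play; and (c) handle sources and parallel arcs separately, using the $k$-nice condition so that the recursion stays within $k$-nice multidigraphs (deleting an arc out of a source keeps tails of surviving parallel arcs as sources). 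Throughout, one verifies that every source of $D$ remains isolated in $G$ — deleting arcs only helps this — and that no re-inserted arc has head $u$ when placed in $G$. Once Lemma~\ref{decomp} is established, Theorem~\ref{2k+1} follows immediately by taking $k=\Delta^-$, colouring each of the $\Delta^-$ forests with $2$ colours (a forest has $dst\le 2$) and the galaxy $G$ with one further colour, for a total of $2\Delta^-+1$.
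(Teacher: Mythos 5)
Your plan takes a genuinely different route from the paper: you propose induction on the number of arcs, removing a single arc or a local "star" and re-inserting it after recursing. The paper instead proceeds by induction on $|V|+k$, driven by connectivity: if $D$ is strongly connected, it peels off a whole spanning arborescence $T$ rooted at an outneighbour $v$ of $u$ and containing all arcs with tail $v$, applies the induction with parameter $k-1$ to $D$ minus $T$ and $v$, and then restores $T$ and the remaining arcs into $v$; if $D$ is disconnected it recurses componentwise; if $D$ is connected but not strong it works with a terminal strong component $D_1$ and the contraction of $D_2=D\setminus D_1$ to a single source. These moves are precisely what give control over where arcs land after the recursion, and they are absent from your plan.

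There are also concrete gaps in your sketch. First, you say to "look for a vertex $v\ne u$ that can serve as the head of a star in $G$" and to "delete $v$'s in-arcs, recurse, and put them back as a star centred at $v$." But in a galaxy the root of a star is the common \emph{tail} of its arcs, so a vertex can have at most \emph{one} in-arc in $G$; the in-arcs $w_1v,\dots,w_tv$ cannot be placed jointly into $G$ as a star centred at $v$. The proof you are aiming for must place arcs entering $v$ into different forests and at most one of them into $G$, which is exactly what the paper's arborescence argument achieves. Second, in your case (b) you remove an arc $xu$ and claim you can re-insert it into one of the $k$ forests "because $u$ has indegree $\le k$." Having a forest $F_i$ with no arc already entering $u$ is necessary, but not sufficient: the underlying graph of $F_i$ must remain acyclic, and adding $xu$ can close an undirected cycle in $F_i$ if $x$ already lies in the component of $u$. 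Nothing in your recursion prevents this. Third, after recursing on $D-e$ you have no control over the position of the endpoints of $e$ in $G$ (whether they are star roots, star leaves, or isolated), so the source-isolation and $u$-suitability conditions do not by themselves licence the reinsertion you describe; the paper deliberately chooses \emph{which} subinstance to recurse on (components, contractions, or $D\setminus(T\cup v)$) so that this control is available. Your closing remark that the lemma implies Theorem~\ref{2k+1} (each forest uses two colours, the galaxy one, total $2\Delta^-+1$) is correct.
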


\begin{proof}
We proceed by induction on $n+k$ by considering (strong) connectivity
of $D$:
\begin{itemize}
\item If $D$ is not connected as graph, we apply induction on every component.
\item If $D$ is strongly connected, every vertex has indegree at least
one. (Recall that there are no parallel arcs.) 
Let $v$ be an outneighbour of $u$. There
exists a spanning arborescence $T$ with root $v$ which contains all the arcs
with tail $v$. Let $D'$ be the digraph obtained from $D$ by removing the arcs 
of $T$ and $v$. Observe that $D'$ is $(k-1)$-nice. 
By induction, it has a $u$-suitable $(k-1)$-decomposition
$(F_1, \dots , F_{k-1}, G)$. Note that each $F_i$, for $1\leq i\leq k-1$, $T$ and 
$G$ contain all the arcs of $D$ except those with head $v$.
By construction, $G'=G\cup uv$ is a galaxy since no arc
of $G$ has head $u$. Let $u_1, \dots ,u_{l-1}$ be the inneighbours of $v$ 
distinct from $u$, where $l\leq k$. Let $F'_i=F_i\cup u_iv$,
for all $1\leq i\leq l-1$.
Each $F'_i$ is a forest, so $(F_1, \dots , F_{k-1}, T, G')$ is 
a $u$-suitable $k$-decomposition of $D$.
\item In the only remaining case, $D$ is connected but not strongly connected. We consider a terminal strongly 
connected component $D_1$ of $D$. Set $D_2=D\setminus D_1$.
Let $u_1$ and $u_2$ be two vertices of $D_1$ and $D_2$, respectively, 
such that $u$ is one of them. 

If $D_2$ has a unique vertex $v$ 
(thus $u_2=v$), since $D$ is connected and $D_1$ is strong, there exists a spanning arborescence $T$ of $D$
with root $v$.  Now $D'=D\setminus A(T)$ is a
$(k-1)$-nice multidigraph, so by induction it has a $u$-suitable
$(k-1)$-decomposition. Adding $T$ to this decomposition, we 
obtain a $u$-suitable $k$-decomposition.
If $D_2$ has more than one vertex, it admits  a $u_2$-suitable 
$k$-decomposition $(F_1^2, \dots , F_k^2, G^2)$, by induction.
Moreover the digraph $D'_1$ obtained by contracting $D_2$ to a single vertex $v$ is $k$-nice and so 
has a $u_1$-suitable $k$-decomposition $(F_1^1, \dots , F_k^1, G^1)$.
Moreover, since $v$ is a source, it is isolated in $G^1$.
Hence $G=G^1\cup G^2$ is a galaxy.
We now let $F_i$ be the union of $F_i^1$ and $F_i^2$ by replacing
the arcs of $F^1_i$ with tail $v$ by the corresponding arcs in $D$.
Then $(F_1, \dots , F_k, G)$ is a $k$-decomposition of $D$ which is suitable
for both $u_1$ and $u_2$.
\end{itemize}
\vspace{-.5cm}
\end{proof}

Theorem~\ref{2k+1} is an immediate consequence of Lemma~\ref{decomp}.

\subsection{Acyclic Digraphs}

It is not very hard to show that $dst(D)\leq 2\Delta^-$ when $D$
is acyclic, but we will prove this result in a more constrained
way. For $n\leq p$, a {\it cyclic $n$-interval} of $\{1,2, \dots , p\}$ is a set
of $n$ consecutive numbers modulo $p$.
Now for the directed star colouring, we will insist that for 
every vertex $v$, the (distinct)
colours used to colour the arcs with head $v$ are chosen in a
{\it cyclic $k$-interval} of $\{1,2, \dots , 2k\}$.
Thus, the number of possible sets of colours used to colour 
the entering arcs of a vertex $v$ drastically falls from
$2k\choose d^-(v)$ when every set is {\it a priori} possible,
to at most $2k\times {k\choose d^-(v)}$. Note that having consecutive colours on the arcs entering a vertex
corresponds to having consecutive wavelengths on the link between the corresponding
node and the central one. 
This may of importance for issues related to grooming in optical networks. For details
about grooming, we refer the reader to the two comprehensive surveys~\cite{LGP01,MoLi01}.

\begin{theorem}\label{dst-acircuitic}
Let $D$ be an acyclic digraph with maximum indegree $k$. Then $D$ admits a directed star $2k$-colouring such that for every vertex,
the colours assigned to its entering arcs are included in a cyclic 
$k$-interval of  $\{1,2, \dots , 2k\}$.
\end{theorem}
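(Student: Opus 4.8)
The plan is to construct the colouring greedily along a topological order $v_1,\dots,v_n$ of $D$ (which exists since $D$ is acyclic). When vertex $v_i$ is reached I will choose a cyclic $k$-interval $I_{v_i}\subseteq\{1,\dots,2k\}$ and colour, all at once, every arc entering $v_i$; the invariant I maintain is that every already coloured arc $uv$ receives a colour in $I_v\setminus I_u$. This single invariant yields everything: the colours entering a vertex $v$ lie in its cyclic $k$-interval $I_v$ (the property asked for), the arcs entering $v$ get pairwise distinct colours (condition $(ii)$), and since an arc leaving $v$ is coloured outside $I_v$ while an arc entering $v$ is coloured inside $I_v$, condition $(i)$ holds as well. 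Because an arc is coloured exactly when we reach its head, at the step treating $v_i$ all its in-neighbours $u_1,\dots,u_d$ ($d=d^-(v_i)\le k$) already have their intervals, no arc leaving $v_i$ is yet coloured, and the step succeeds as soon as we can pick $I_{v_i}$ together with pairwise distinct colours $c_r\in I_{v_i}\setminus I_{u_r}$ for $1\le r\le d$ (and for $d=0$ there is nothing to do). Thus the theorem reduces to a purely combinatorial claim: \emph{given cyclic $k$-intervals $J_1,\dots,J_d$ of $\{1,\dots,2k\}$ with $d\le k$, there is a cyclic $k$-interval $I$ for which the family $\{I\setminus J_r:1\le r\le d\}$ has a system of distinct representatives.} This claim is the step I expect to be the real obstacle.

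To prove the claim I would first record the elementary geometry of two cyclic $k$-intervals of the $2k$-cycle: if $I$ starts at $\beta$ and $J$ starts at $\theta$, then $I\setminus J$ is empty if $\theta=\beta$, equals $I$ if $\theta$ is antipodal to $\beta$, and otherwise is a sub-interval sitting at one of the two ends of $I$, of length equal to the cyclic distance from $\beta$ to $\theta$. Hence, viewing $I$ as a path on $k$ vertices, each set $I\setminus J_r$ is a prefix of $I$, a suffix of $I$, all of $I$, or empty; and by the interval case of Hall's theorem the family has an SDR exactly when no $I\setminus J_r$ is empty and, for every $\ell\le k-1$, at most $\ell$ of the $I\setminus J_r$ are prefixes of length $\le\ell$ and at most $\ell$ are suffixes of length $\le\ell$ (the only non-trivial instances of Hall's condition, using $d\le k$). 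Writing $\beta=\mathrm{start}(I)$ and $\theta_r=\mathrm{start}(J_r)$, these conditions become: $\beta\notin\{\theta_1,\dots,\theta_d\}$, and for every $\ell\le k-1$ both arcs $\{\beta+1,\dots,\beta+\ell\}$ and $\{\beta-1,\dots,\beta-\ell\}$ contain at most $\ell$ of the $\theta_r$ (counted with multiplicity).

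To see that a suitable $\beta$ exists, let $c(x)$ be the number of indices $r$ with $\theta_r=x$, so $\sum_x c(x)=d$. The ``clockwise'' conditions on $\beta$ amount to saying that every arc of length $m\le k$ starting at $\beta$ contains at most $m-1$ of the $\theta_r$; applying the cycle lemma of Dvoretzky and Motzkin to the cyclic sequence $(1-c(x))_x$ of length $2k$ — each term $\le 1$, total $2k-d\ge k\ge 1$ — gives at least $2k-d$ positions $\beta$ satisfying all the clockwise conditions, and by the mirror argument at least $2k-d$ positions satisfying all the counterclockwise ones. Each of these two sets of positions avoids every $\theta_r$, hence (as $d\ge 1$) lies in a set of at most $2k-1$ positions; since $2(2k-d)>2k-1$ whenever $d\le k$, the two sets must intersect, and any $\beta$ in the intersection gives the desired interval. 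Feeding this back into the inductive construction — at each step pick such an $I_{v_i}$ and any SDR of $\{I_{v_i}\setminus I_{u_r}\}$ — and iterating to $v_n$ produces the required directed star $2k$-colouring.

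I should stress why the easy route does not suffice: without the cyclic-interval requirement one can simply decompose $D$ into $\Delta^-$ forests via Theorem~\ref{CNSarbo} and $2$-colour each forest, but the colours entering a vertex then get spread over all $\Delta^-$ colour pairs $\{1,2\},\{3,4\},\dots$, which for $\Delta^-\ge 3$ can never lie in one cyclic $\Delta^-$-interval; this is precisely what forces the vertex-by-vertex argument and the combinatorial claim above, and makes that claim the crux of the proof.
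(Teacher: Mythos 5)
Your proof is correct, and the outer architecture matches the paper's: process vertices in topological order, and when a vertex $v$ is reached, simultaneously choose a cyclic $k$-interval $I_v$ for $v$ and a colouring of the arcs entering $v$ so that each arc $uv$ gets a colour in $I_v\setminus I_u$. (The paper formulates this as peeling a sink, but it is the same induction and the same invariant.) Both proofs reduce to the same combinatorial assertion about cyclic $k$-intervals: given the intervals $I_{u_1},\dots,I_{u_d}$ of the in-neighbours ($d\le k$), one must find a cyclic $k$-interval $J$ and distinct representatives $c_r\in J\setminus I_{u_r}$; the paper's Lemma~\ref{representatives} is exactly the complementary rephrasing in which $J$ is a transversal of the $2k$-complements $I_{u_r}^c$ padded to $k$ sets. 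Where you genuinely diverge is in the proof of that assertion. The paper groups the intervals by multiplicity and proceeds by induction on the number of distinct intervals, with a case split (merge two close intervals versus a direct construction plus Hall's theorem). You instead fix a candidate start $\beta$ for $J$ and observe that each $J\setminus I_{u_r}$ is a prefix or a suffix of $J$, so Hall's condition collapses to a pair of one-sided counting conditions on $\beta$; you then verify existence of a good $\beta$ by running the Dvoretzky--Motzkin cycle lemma once clockwise and once counterclockwise and intersecting the resulting sets of admissible starting points by a pigeonhole count using $d\le k$. This is a cleaner and less case-ridden route to the lemma, at the cost of invoking the cycle lemma; the paper's inductive argument is more self-contained and proves a slightly crisper statement (the SDR exactly fills a cyclic $k$-interval when there are $k$ sets), but both yield the theorem. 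One minor point of hygiene in your write-up: you quantify the "arc of length $m\le k$" conditions, whereas the cycle lemma controls all $2k$ cyclic partial sums; that is harmless (you get more than you need), but it is worth saying explicitly so the reader is not misled into thinking the lemma must be truncated.
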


To prove this theorem, we first state and prove the following result on sets of distinct representatives.

\begin{lemma}\label{representatives}
Let $I_1, \dots , I_k$ be $k$ non necessarily distinct cyclic $k$-intervals of $\{1,2, \dots , 2k\}$.
Then $I_1, \dots , I_k$ admit a set of distinct representatives forming
a cyclic $k$-interval.
\end{lemma}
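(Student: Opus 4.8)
We have $k$ cyclic $k$-intervals $I_1,\dots,I_k$ of the cyclic ground set $\mathbb{Z}_{2k}$ (identifying $\{1,\dots,2k\}$ with $\mathbb{Z}_{2k}$), and we want a system of distinct representatives $x_1\in I_1,\dots,x_k\in I_k$ such that $\{x_1,\dots,x_k\}$ is itself a cyclic $k$-interval. The plan is to argue by Hall-type reasoning but on the restricted family of \emph{target} cyclic $k$-intervals. There are exactly $2k$ cyclic $k$-intervals of $\mathbb{Z}_{2k}$; write $J_a=\{a,a+1,\dots,a+k-1\}$ for $a\in\mathbb{Z}_{2k}$. We must show there is some $a$ such that $J_a$ can be matched to $I_1,\dots,I_k$, i.e. the bipartite graph between $\{I_1,\dots,I_k\}$ and the $k$ elements of $J_a$ (with $I_i$ adjacent to $x$ iff $x\in I_i$) has a perfect matching. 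Equivalently, by Hall's theorem applied with the $I_i$'s on one side, we need: for every subset $S\subseteq\{1,\dots,k\}$, $|J_a\cap\bigcup_{i\in S}I_i|\ge |S|$.

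First I would record the key structural fact about cyclic $k$-intervals in $\mathbb{Z}_{2k}$: any cyclic $k$-interval is the exact complement of another cyclic $k$-interval, namely $\mathbb{Z}_{2k}\setminus J_a = J_{a+k}$. Hence choosing a target interval $J_a$ is the same as choosing which $k$-interval $J_{a+k}$ to \emph{avoid}. So the statement to prove becomes: there is a cyclic $k$-interval $K$ (the "forbidden" one) such that for every $S\subseteq\{1,\dots,k\}$, $\bigl|\bigcup_{i\in S}I_i\setminus K\bigr|\ge |S|$. The natural strategy is a counting/averaging argument over the $2k$ choices of $K$, combined with the observation that each $I_i$ is large ($|I_i|=k$), so it is hard for a single forbidden $k$-interval to damage the union too much.

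The cleanest route, which I expect to work, is induction on $k$ together with a pigeonhole choice of a good starting point. Consider the $2k$ "boundaries" of the intervals $I_1,\dots,I_k$: each $I_i$ has a left endpoint $\ell_i$ and a right endpoint $r_i=\ell_i+k-1$. Since $|I_i|=k = \tfrac12|\mathbb{Z}_{2k}|$, the complement of $I_i$ is also a $k$-interval $\bar I_i = J_{r_i+1}$. There are $k$ complements $\bar I_1,\dots,\bar I_k$, each of size $k$; the sum of their sizes is $k^2$, so by averaging some point $p\in\mathbb{Z}_{2k}$ lies in at most $\lfloor k^2/(2k)\rfloor = \lfloor k/2\rfloor < k$ of the complements, i.e. $p$ lies in \emph{more than} $k/2$, hence at least $\lceil k/2\rceil$ — actually we want $p$ in as many $I_i$ as possible. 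A point $p$ lies in the most $I_i$'s; since the average number of $I_i$ containing a point is $k^2/2k=k/2$, some point lies in all... no. I would instead pick $p$ lying in at least $\lceil k/2 \rceil$ of the $I_i$'s, delete $p$ (reduce the ground set) and one interval containing $p$ as its representative, and set up a careful induction: the remaining intervals, after removing $p$, are unions of at most two arcs, so one must be slightly more delicate. The honest assessment: \textbf{the main obstacle is bookkeeping the induction so that after choosing one representative the remaining intervals still behave like cyclic intervals on a smaller cyclic ground set.} The fix I would pursue is to choose the representative at an \emph{endpoint} of the eventual target interval — pick $p$ a point contained in a maximum number of the $I_i$, assign $p$ to one such $I_i$, and then show the remaining $k-1$ intervals, intersected with $\mathbb{Z}_{2k}\setminus\{p\}$, each still contain a cyclic $(k-1)$-interval of the $(2k-1)$... — but $2k-1$ is not $2(k-1)$, so the shape of the induction hypothesis has to be generalized. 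Therefore I would actually prove a stronger statement by induction: \emph{for every $n\le k$ and every $n$ cyclic $k$-intervals of $\{1,\dots,2k\}$, there is a system of distinct representatives forming a cyclic $n$-interval of $\{1,\dots,2k\}$}; this decouples the interval length $k$ from the count $n$, removes the ground-set-shrinking problem, and the induction step is: pick the leftmost possible start, greedily place the representative of whichever remaining interval has the smallest right endpoint at the current position, advance, and recurse — an exchange argument shows Hall's condition is met. I expect verifying Hall's condition for all $S$ in this greedy/exchange form, or equivalently proving the greedy never gets stuck before placing all $n$ representatives within a window of length $n$, to be the crux, and it should follow from the fact that among any $j$ of the intervals their union, being a union of $j$ cyclic $k$-intervals with $j\le k$, either is all of $\mathbb{Z}_{2k}$ or is itself a single cyclic interval of length $\ge k+j-1 \ge 2j-1$... which comfortably covers any window.
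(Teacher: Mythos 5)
Your plan never reaches a complete argument, and the one concrete claim you lean on to verify Hall's condition is false. You assert that ``among any $j$ of the intervals their union, being a union of $j$ cyclic $k$-intervals with $j\le k$, either is all of $\mathbb{Z}_{2k}$ or is itself a single cyclic interval of length $\ge k+j-1$.'' The lemma explicitly allows the $I_i$ to repeat, and this claim fails as soon as two of them coincide: take $I_1=\dots=I_j$ equal to a single interval $I$; their union has size exactly $k$, not $k+j-1$. (Concretely, with $k=4$, $I_1=I_2=I_3=\{1,2,3,4\}$, $I_4=\{5,6,7,8\}$: the union of the first three has length $4<6$.) Since the ``union is large'' fact was supposed to be what makes the Hall condition ``comfortably'' hold for the window $J_a$, the Hall verification is left with a genuine hole, precisely in the case of repeated intervals --- which is the case that makes this lemma nontrivial. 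You also never pin down how $J_a$ is chosen, and the averaging/greedy/exchange steps are only sketched, with several points flagged as ``I expect'' or ``should follow.''

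For comparison, the paper's proof attacks the multiplicity issue head-on: it rewrites $I_1,\dots,I_k$ as $p$ \emph{distinct} intervals with multiplicities $m_1,\dots,m_p$ summing to $k$, and inducts on $p$. When two distinct intervals $I_i,I_j$ are ``close'' (their symmetric-difference halves have size $\le\max(m_i,m_j)$), it merges their multiplicities and reduces $p$; when all pairs are ``far'' (difference $\ge\max(m_i,m_j)+1$), it shows by counting ($\sum(2m_i-1)=2k-p<2k$) that some $k$-interval $J$ meets every $I_i$ in at least $m_i$ points, and only then runs a Hall argument (on a minimal violating set, exploiting that the intersections nest). That case split is exactly what absorbs repeated intervals cleanly. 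If you want to salvage your route, you would need a correct lower bound on $\bigl|J\cap\bigcup_{i\in S}I_i\bigr|$ that remains valid when $S$ contains many copies of the same interval --- the naive ``$j$ intervals give $k+j-1$'' bound does not, and there is no obvious replacement without first bundling equal intervals as the paper does.
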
 
\begin{proof}
We consider $I_1, \dots , I_k$ as a set of $p$
{\it distinct} cyclic $k$-intervals $I_1, \dots , I_p$ with respective
multiplicity $m_1, \dots , m_p$ such that $\sum_{i=1}^p m_i = k$. Such a system will be denoted by $((I_1,m_1),\dots,(I_p,m_p))$.
 We shall prove the existence of a cyclic $k$-interval $J$, such that $J$ can be partitioned into $p$ subsets $J_i$, $1\leq i\leq p$,
such that $|J_i|=m_i$ and $J_i\subset I_i$. This proves the lemma (by associating distinct elements of $J_i$ to each copy of $I_i$).

We proceed by induction on $p$. The result holds trivially for $p=1$. We have to deal with two cases:
\begin{itemize} 
\item There exist $i$ and  $j$ such that
$|I_j\setminus I_i|=|I_i\setminus I_j| \leq \max(m_i,m_j)$. 

Suppose without loss of generality that $i<j$ and $m_i\geq m_j$. We
apply the induction hypothesis to $((I_1,m_1),
\cdots,(I_i,m_i+m_j), \cdots ,(I_{j-1},m_{j-1}),(I_{j+1},m_{j+1}),\cdots, (I_p,m_p))$,
in order to find a cyclic interval $J'$, such that $J'$ admits a
partition into subsets $J'_r$, such that for any $r$ different from  $i$ and $j$, the set $J'_r
\subset I_r$ is a subset of size $m_r$, and $J'_i\subset I_i$ is of
size $m_i+m_j$. We now partition $J'_i$ into two sets $J_i$ and $J_j$
with respective size $m_i$ and $m_j$, in such a way that
$(I_i\setminus I_j)\cap J'_i\subseteq J_i$. Remark that this is
possible precisely because of our assumption $|I_j\setminus
I_i|=|I_i\setminus I_j| \leq m_i$. Since $J_i\subset I_i$ and
$J_j\subset I_j$, this refined partition of $J'$ is the desired one.

\item For any $i,j$ we have $|I_j\setminus I_i|=|I_i\setminus I_j| \geq \max(m_i,m_j)+1$. 

Each $I_i$ intersects exactly $2m_i -1$ other cyclic $k$-intervals on
less than $m_i$ elements. Since there are $2k$ cyclic $k$-intervals in
total and $\sum_{i=1}^p (2m_i -1) = 2k-p < 2k$, we conclude the
existence of a cyclic $k$-interval $J$ which intersects each $I_i$ in
an interval of size at least $m_i$.

Let us prove that one can partition $J$ in the desired way.  By Hall's
matching theorem, it suffices to prove that for every subset ${\cal
I}$ of $\{1, \dots ,p\}$, we have $|\bigcup_{i\in {\cal I}} I_i\cap J| \geq
\sum_{i\in {\cal I}} m_i$.

Suppose for the sake of a contradiction that a subset ${\cal I}$ of $\{1, \dots
,p\}$ violates this inequality. Such a subset will be called {\it
contracting}. Without loss of generality, we assume that ${\cal I}$ is
a contracting set with minimum cardinality and that ${\cal
I}=\{1,\dots, q\}$. Observe that by the choice of $J$, we have $q \geq
2$. The set $K:=\bigcup_{i\in {\cal I}} I_i\cap J$ consists of one or
two intervals of $J$, each containing one extremity of $J$. By the
minimality of ${\cal I}$, $K$ must be a single interval (if not, one
would take ${\cal I}_1$ (resp. ${\cal I}_2$), all the elements of
${\cal I}$ which contains the first (resp. the second) extremity of
$J$. Then one of ${\cal I}_1$ or ${\cal I}_2$ would be contracting).  Thus, one of
the two extremities of $J$ is in every $I_i$, $i\in {\cal I}$. Without
loss of generality, we may assume that $(I_1\cap J) \subset (I_2\cap
J)
\subset \dots \subset (I_q\cap J)$.
Now, for every $2\leq i\leq q$, $|I_{i}\setminus I_{i-1}| =|(I_{i}\cap J)
\setminus (I_{i-1}\cap J)|\geq  \max(m_i,m_{i-1})+1\geq m_i+1$.
But $|\bigcup_{i\in {\cal I}} I_i\cap J|= |(I_1\cap J)| + \sum_{i=2}^q 
|(I_{i}\cap J) \setminus (I_{i-1}\cap J)|$.
So  $|\bigcup_{i\in {\cal I}} I_i\cap J|\geq \sum_{i=1}^q m_i +q-1$,
which is a contradiction.
\end{itemize}
\vspace{-.5cm}
\end{proof}

\par \noindent {\bf Proof of Theorem~\ref{dst-acircuitic}}. 
By induction on the number of vertices, the result being trivial if
$D$ has one vertex.
Suppose now that $D$ has at least two vertices. Then $D$ has a sink $x$.
By the induction hypothesis, $D\setminus x$ has a directed star $2k$-colouring 
$c$ such that for every vertex, the colours assigned to its entering 
arcs are included in a cyclic $k$-interval.
Let $v_1, v_2, \dots ,v_l$ be the inneighbours of $x$ in $D$, where $l\leq k$
because $\Delta^-(D)\leq k$.
For each $1\leq i\leq l$, let $I'_i$ be a cyclic $k$-interval which contains
all the colours of the arcs with head $v_i$. We set 
$I_i=\{1,\dots ,2k\}\setminus I'_i$. Clearly, $I_i$ is a 
cyclic $k$-interval and the arc $v_ix$ can be coloured by any 
element of $I_i$. By Lemma~\ref{representatives},
$I_1, \dots , I_l$ have a set of distinct representatives 
included in a cyclic $2k$-interval $J$. Hence assigning $J$ to $x$, and colouring the arc $v_ix$ by 
the representative of $I_i$ gives a directed star $2k$-colouring of $D$.
\hfill$\Box$

\vs

Theorem~\ref{dst-acircuitic} is tight : Brandt~\cite{Bra03} showed
that for every $k$, there is an acyclic digraph such that
$\Delta^-(D_k)=k$ and $dst(D_k)=2k$.  His construction is the special
case for $n=m=1$ of the construction given in
Proposition~\ref{lambdainf}.

\section{Directed Star Arboricity of Subcubic Digraphs}\label{seccubic}

Recall that a {\it subcubic} digraph is a graph with degree at most three.
In this section, we give the proof of Theorem~\ref{degre3} which states that the directed star arboricity of a
subcubic digraph is at most three. 

To do so, we need to establish some preliminary lemmas which will enable us to extend a
partial directed star colouring into a directed star colouring of the
whole digraph.  To state these lemmas, we need the following definition.  Let
$D=(V,A)$ be a digraph and $S$ be a subset of $V\cup A$.  Suppose that
each element $x$ of $S$ is assigned a list $L(x)$.  A colouring $c$ of
$S$ is an {\it $L$-colouring} if $c(x)\in L(x)$ for every $x\in S$.

\begin{lemma}\label{cycle}
Let $C$ be a circuit in which every vertex $v$ receives a list $L(v)$ 
of two colours among $\{1,2,3\}$ and each arc $\vec a$ receives the
list $L(\vec a)=\{1,2,3\}$. The following two statements are equivalent:
\begin{itemize}
\item There is no $L$-colouring $c$ of the arcs and vertices 
such that  $c(x)\neq c(xy)$, $c(y)\neq c(xy)$, and $c(xy)\neq c(yz)$, for all arcs
$xy$ and $yz$.
\item $C$ is an odd circuit and all the vertices have the same list.
\end{itemize}
\end{lemma}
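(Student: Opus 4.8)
The plan is to prove the two implications of the equivalence separately, noting first that the interesting direction is showing that if $C$ is \emph{not} an odd circuit with all vertices sharing a common list, then an $L$-colouring of the required type exists. The reverse direction (that an odd circuit with all lists equal to, say, $\{1,2\}$ admits no such colouring) is a short parity argument: the constraints $c(x)\neq c(xy)$, $c(y)\neq c(xy)$ force each arc colour to avoid the two endpoint colours, and with every vertex coloured from $\{1,2\}$ each arc is forced to colour $3$ unless both its endpoints get the same colour; chasing this around shows the vertex colours must then alternate, which is impossible on an odd circuit, and in the alternating-failure case two consecutive arcs both get colour $3$. I would write this out carefully as the easy direction.

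**The main direction.** For the converse, suppose $C=v_1 v_2 \cdots v_p v_1$ is a circuit and we are \emph{not} in the excluded case. I would split into two cases. If $C$ is even, orient the colouring greedily: colour $v_1$ arbitrarily from $L(v_1)$, then walk around choosing arc colours and vertex colours consistently; the only potential conflict is closing the cycle at $v_1$, and because $p$ is even there is enough slack (each vertex has $2$ choices, each arc $3$) to absorb it — here I would phrase it as an $L$-colouring / list-edge-colouring argument on the cycle, treating vertices and arcs together as a path once we "cut" one vertex, then verifying the wrap-around. If $C$ is odd but the lists are not all identical, pick an index $i$ with $L(v_i)\neq L(v_{i+1})$; cut the circuit at the arc $v_i v_{i+1}$ to obtain a path, colour along the path, and use the freedom coming from $L(v_i)\neq L(v_{i+1})$ (they share at most one colour, so we can steer the two endpoints to a configuration that lets the closing arc $v_i v_{i+1}$ take a legal colour from its full list $\{1,2,3\}$). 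I expect the cleanest way to organize both cases is a single lemma: a path $v_1\cdots v_p$ with these lists always extends, and moreover we can prescribe $c(v_1)$ to be either element of $L(v_1)$ and still succeed, possibly also controlling $c(v_p)$ — then the circuit reduces to checking one more constraint on the closing arc.

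**Where the difficulty lies.** The routine part is the propagation along a path; the subtle point — and the real content of the lemma — is the bookkeeping at the closing arc of an odd circuit, showing precisely that the \emph{only} obstruction is "all lists equal." Concretely, when all lists equal $\{1,2\}$ the arc colours are forced enough that parity kills us, but as soon as some $L(v_i)=\{1,3\}$ (say) we gain an extra degree of freedom at $v_i$, and I would argue that propagating from $v_{i+1}$ around to $v_i$ leaves at least one valid colour for each of $c(v_i)$ and the arc $v_i v_{i+1}$ simultaneously. I would handle this by a short finite case analysis on the possible list of the "different" vertex (there are three list types, $\{1,2\},\{1,3\},\{2,3\}$, and by symmetry two cases relative to the common list of the rest), checking that in each the closing constraints $c(v_i)\neq c(v_iv_{i+1})$, $c(v_{i+1})\neq c(v_iv_{i+1})$, $c(v_iv_{i+1})\neq c(v_{i+1}v_{i+2})$ and $c(v_{i-1}v_i)\neq c(v_iv_{i+1})$ can all be met. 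This last verification is the main obstacle: it is not deep, but it requires being careful that cutting the cycle does not silently drop one of the four constraints that meet at the cut.
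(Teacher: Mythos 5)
Your plan is essentially the paper's: cut the circuit at an adjacent pair with different lists, propagate greedily along the resulting path, and then argue the closing arc can be coloured. But the verification you yourself flag as ``the main obstacle'' is precisely what a proof must supply, and you leave it as a plan rather than an argument. The paper settles it with an explicit device that avoids the three-way case analysis on list types you suggest. Taking $L(x_1)=\{1,2\}$ and $L(x_2)=\{2,3\}$ (WLOG), it starts by fixing $c(x_1x_2)=3$, $c(x_2)=2$, $c(x_2x_3)=1$, then colours $x_3, x_3x_4, \dots, x_k$ greedily (at each step only two earlier elements constrain the choice, so a colour is always available). At the wrap-around, $c(x_kx_1)$ must avoid $c(x_k)$, $c(x_{k-1}x_k)$, and $c(x_1x_2)=3$; if some colour of $\{1,2\}$ survives, use it for the arc and give $x_1$ the other. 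The sole failure mode is $\{c(x_k),c(x_{k-1}x_k)\}=\{1,2\}$, which forces $c(x_kx_1)=3$, clashing with $c(x_1x_2)=3$; the repair is a local swap, recolouring $c(x_kx_1)=3$, $c(x_1)=1$, $c(x_1x_2)=2$, $c(x_2)=3$, which is legal exactly because $3\in L(x_2)$ while $3\notin L(x_1)$ — i.e.\ because the two lists differ. For the even circuit with all lists equal (your separate Case A), the paper simply colours every vertex with the common colour and alternates the arcs between the other two; your appeal to ``enough slack'' would also need to be made concrete here. So: right approach, and the reverse (easy) direction is fine, but your write-up stops where the paper's proof actually starts doing work, and the explicit initial assignment plus one-step repair is the cleanest way to finish it.
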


\begin{proof}
Assume first that every vertex is assigned the same list, say $\{1,2\}$. If $C$
is odd, it is a simple matter to check that we can not find the desired colouring.
Indeed, among two consecutive arcs, one has to be coloured $3$.
If $C$ is even, we
colour the vertices by $1$ and  the arcs  alternately by $2$ and $3$.

Now assume that $C=x_1x_2\dots x_kx_1$ and $x_1$ and $x_2$ are assigned
different lists. Say $L(x_1)=\{1,2\}$ and $L(x_2)=\{2,3\}$. We colour
the arc $x_1x_2$ by $3$, the vertex $x_2$ by $2$ and the arc $x_2x_3$ by $1$. 
Then we colour $x_3$, $x_3x_4$, \dots , $x_k$ greedily. 
It remains to colour $x_kx_1$ and $x_1$.  Two cases may happen:
If we can colour $x_kx_1$ by 1 or 2, we do it and colour $x_1$ by $2$ or $1$ respectively.
Otherwise the set of colours  assigned to $x_k$ and $x_{k-1}x_k$ is $\{1,2\}$. 
Hence, we colour $x_kx_1$ with $3$, $x_1$ by $1$, and recolour $x_1x_2$ by $2$ and $x_2$ by $3$.
\end{proof}

\begin{lemma}\label{extension}
Let $D$ be a subcubic digraph with no vertex of outdegree two and
indegree one. Suppose that every arc $\vec a$ has a list of colours $L(\vec a) \subset \{1,2,3\}$ such that:

\begin{itemize}

\item If the head of $\vec a$ is a sink $s$ (in which case, $\vec a$ will be called a {\it final arc}), $|L(\vec a)|\geq d^-(s)$.

\item If $\vec a$ is not a final arc and the tail of $\vec a$ is a source 
(in which case, $\vec a$ will be called an {\it initial arc}), $|L(\vec a)|\geq 2$.

\item In all the other cases, $|L(\vec a)|= 3$.
\end{itemize}
In addition, assume that the followings hold:
\begin{itemize}
\item If a vertex is the head of at least two initial arcs $\vec a$ and $\vec b$, the union of the
lists of colours $L(\vec a)$ and $L(\vec b)$ contains all the three colours.

\item If all the vertices of an odd circuit are the tails of initial arcs, 
the union of  the lists of colours of these initial arcs contains all the three
colours.
\end{itemize}

Then $D$ has a directed star $L$-colouring. 
\end{lemma}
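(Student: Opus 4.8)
\par\noindent\textbf{Proof idea.} The plan is to argue by induction on the number of arcs, the case $A(D)=\emptyset$ being trivial. It is convenient to assume $D$ connected: all the hypotheses — the three size conditions, the ``no vertex of outdegree two and indegree one'' condition, and the two ``union'' conditions — are local and hence hold on every weakly connected component, so it suffices to colour each component. (The induction hypothesis will anyway be applied to possibly disconnected digraphs obtained by deletion.)

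First I would treat the case where $D$ has neither a sink nor a source. Then every vertex has outdegree $\ge 1$; since $D$ is subcubic and has no vertex of outdegree $2$ and indegree $1$, a vertex of outdegree $\ge 2$ would have indegree $0$, so in fact every vertex has outdegree exactly $1$. As $\sum_v d^-(v)=\sum_v d^+(v)=|V|$ and each $d^-(v)\ge 1$, every vertex has indegree exactly $1$ too, so (being connected) $D$ is a single circuit. All its arcs are ordinary and hence carry the full list $\{1,2,3\}$, so colouring them so that consecutive arcs differ — which is exactly what conditions $(i)$ and $(ii)$ require on a circuit — is immediate.

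Hence we may assume $D$ has a sink or a source. If $D$ has a sink $s$, I would delete $s$ together with the arcs entering it; if $D$ has no sink but has a source $u$, I would delete $u$ together with the arcs leaving it. In both cases the deletion only lowers indegrees, or only lowers outdegrees, so no vertex of outdegree $2$ and indegree $1$ is created, and a short check shows every list hypothesis survives: for instance an inneighbour of $s$ that becomes a new sink had its entering arcs equipped with lists of size $\ge 2$ (they were ordinary or initial), which is at least its new indegree; a vertex that becomes a new source has its leaving arcs turn from ordinary into initial arcs, but these keep the full list, so the ``union'' conditions stay true for any newly created initial arc. By induction the smaller digraph has a directed star $L$-colouring, and it remains to extend it. In the sink case one must colour the $\ell=d^-(s)$ arcs $v_1s,\dots,v_\ell s$ so that they pairwise differ and each $v_is$ avoids the colours of the (at most two) arcs entering $v_i$, from a list of size $\ge\ell$: this is a system-of-distinct-representatives problem, handled by Hall's theorem. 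In the source case each arc $uv_i$ leaving $u$ may be coloured independently of the others, avoiding the at most $d(v_i)-1$ coloured arcs incident with $v_i$, from a list of size $\ge 2$ (or $\ge d^-(v_i)$ when $v_i$ is a sink).

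The main obstacle, in both extension steps, is the tight case in which the relevant neighbour is a $(1,2)$-vertex $v$ — the only ``large'' vertex the hypotheses permit — for then the list available for the arc joining $v$ to the deleted vertex has size only $2$ while there may be two forbidden colours; this is exactly where the two extra hypotheses are used. If such a $v$ receives its two entering arcs $\vec a,\vec b$ from sources, then $L(\vec a)\cup L(\vec b)=\{1,2,3\}$ lets me delete both sources at once and close the extension by a Hall argument on two lists whose union still has size $2$ after removing the colour of the outgoing arc of $v$. If instead a whole odd circuit $C$ consists of vertices each receiving an arc from a source, I would not peel off a single source at all, but colour $C$ together with its pendant initial arcs directly: reading the colour of the initial arc at a vertex $x$ of $C$ as a colour assigned to $x$, conditions $(i)$ and $(ii)$ reduce precisely to the situation of Lemma~\ref{cycle} with vertex-lists of size $2$, which is solvable unless $C$ is odd and all those lists coincide — a possibility excluded by the hypothesis that the union of these lists is $\{1,2,3\}$. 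The bulk of the write-up is then the bookkeeping: choosing, in each configuration, which vertex (or cluster of sources, or odd circuit) to remove so that what remains satisfies every hypothesis, and verifying that the resulting extension is always feasible.
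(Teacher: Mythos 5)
Your recursive step has a genuine gap: you delete a sink (or a source), colour the remainder by induction \emph{with the same lists}, and then try to extend — but the extension need not be possible, because the inductive colouring is arbitrary and can exhaust a small list. In the sink case you want to colour each $v_is$ from $L(v_is)$ (of size only $\geq d^-(s)$) avoiding the at most two already-coloured arcs entering $v_i$. When $d^-(s)=1$, $L(v_1s)$ may be a singleton $\{1\}$ and $v_1$ a vertex of indegree $2$ and outdegree $1$ whose two in-arcs the recursion happened to colour $1$ and $2$, leaving no legal colour at all; when $d^-(s)=3$, all three $v_i$ can be such vertices whose in-arcs are coloured $\{2,3\}$, so that the available set is $\{1\}$ for all three and there is no system of distinct representatives. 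Invoking Hall's theorem does not save you: Hall's condition is precisely what fails. Your source case has the symmetric defect (an outneighbour $v$ of the deleted source with indegree $2$ and outdegree $1$, whose other in-arc and out-arc may already carry both colours of the size-two list $L(uv)$), and your patch — merge the two sources feeding $v$, or treat a whole odd circuit at once — covers only the sub-configurations where \emph{both} in-arcs of $v$ are initial, or where the circuit is entirely fed by sources; when $v$ has one initial and one ordinary in-arc, or is the sole inneighbour of a sink with $d^-(s)=1$, nothing in your argument applies.

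The paper escapes this by reversing the information flow. It picks a terminal strong component, \emph{commits} to a colour for the arc $\vec a$ entering its sink first, removes that colour from the lists of the arcs $\vec e,\vec f$ entering the tail $u$ of $\vec a$, immediately fixes \emph{distinct} colours for $\vec e$ and $\vec f$ (possible because $|L(\vec e)|,|L(\vec f)|\geq 2$ and — by the union hypothesis, or trivially if one of them is ordinary — $L(\vec e)\cup L(\vec f)=\{1,2,3\}$), splits $u$ into two fresh sinks of indegree one carrying singleton lists, and only then recurses. The constraint thus propagates forward into the recursion instead of being checked against an arbitrary inductive colouring afterwards. Repairing your proof would essentially require adopting this commit-then-recurse discipline — which is exactly what the paper's case analysis on terminal strong components (singleton, or circuit handled via Lemma~\ref{cycle}) accomplishes. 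As written, your induction step does not close.
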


\begin{proof}
We colour the graph inductively. Consider a terminal strong component $C$
of $D$. Since $D$ has no vertex with indegree one and outdegree two,
$C$ induces either a singleton or a circuit.

\begin{itemize}

\item[1)] Assume that $C$ is a singleton $v$ which is the head of a unique arc $\vec a=uv$.  
If $u$ has indegree zero, we colour $\vec a$ with a colour of its list. If $u$ has 
indegree one, and thus total degree two, we colour $\vec a$ by the colour of its list
and remove this colour from the list of the arc with head $u$.
If $u$ is the head of $\vec e$ and $\vec f$, observe that $L(\vec e)$ and $L(\vec f)$ 
have at least two colours and their union have all the three colours.
To conclude, we colour $\vec a$ with a colour in its list, remove this colour from
$L(\vec e)$ and $L(\vec f)$, remove $\vec a$, and split $u$ into two vertices, one with head
$\vec e$ and the other with head $\vec f$. Now, we choose different colours for the arcs $\vec e$ and $\vec f$ in their respective lists to form the new list $L(\vec e)$
and $L(\vec f)$.

\item[2)] Assume that $C$ is a singleton $v$ which is the head of several arcs, including
$\vec a=uv$. In this case, we reduce $L(\vec a)$ to a single colour, remove this
colour from the other arcs with head $v$ and split $v$ into $v_1$,
which becomes the head of $\vec a$, and $v_2$ which becomes the head of the other 
arcs.

\item[3)] Assume that $C$ is a circuit. Every arc entering $C$ has 
a list of at least two colours. We can apply Lemma~\ref{cycle} to conclude.

\end{itemize}
\end{proof}

\par \noindent {\bf Proof of Theorem~\ref{degre3}}.
Assume for the sake of a contradiction that the digraph $D$ has directed star arboricity
more than three and is minimum for this property with respect to the number of arcs.
Observe that $D$ has no source, otherwise we simply delete
it with all its incident arcs, apply induction and extend the colouring. This is possible
since arcs leaving from a source can be coloured arbitrarily.
Let $D_1$ be the subdigraph of $D$ induced by the vertices 
of indegree at most $1$. We denote by $D_2$ the digraph induced by the other 
vertices, and by $[D_i,D_j]$ the set of arcs with tail in $D_i$ and head in $D_j$.
We claim that $D_1$ contains no even circuit. If not, we simply remove the arcs
of this even circuit, apply induction. We can extend the colouring to the arcs of 
the even circuit since every arc of the circuit has two colours available.

A {\it critical set} of vertices of $D_2$ is either a vertex of $D_2$ with indegree
at least two in $D_1$, or an odd circuit of $D_2$ having all its inneighbours
in $D_1$. Observe that critical sets are disjoint. For every critical
set $S$, we select two arcs entering $S$ from $D_1$, called {\it selected arcs} of $S$.

Let $D'$ be digraph induced by the arc set $A'=A(D_1)\cup [D_2,D_1]$.
We now define a {\it conflict graph} on the arcs of $D'$ in the following way:

\begin{itemize}
\item Two arcs $xy$, $yv$ of $D'$ are in conflict, called {\it normal conflict} at $y$.

\item Two arcs $xy$, $uv$ of $D'$ are also in conflict if there exists 
two selected arcs of the same set $S$ with tails $y$ and $v$. These conflicts
are called {\it selected conflicts} at $y$ and $v$.
\end{itemize}
Let us analyse the structure of the conflict graph. 
Observe first that an arc is in conflict with three arcs : one normal conflict at its tail 
and at most two (normal or selected) at its head. We claim that there is no $K_4$ in the conflict graph. 
For the sake of a contradiction, suppose there is one. This means that there are four arcs $\vec a,\vec b,\vec c$ and $\vec d$ pairwise in conflict.
Since each of these arcs have degree three in $K_4$, each of these arcs should have a normal conflict at its tail, 
and so the digraph induced by these four arcs contains a circuit.
This circuit cannot be of even length (two or four) so it has to be of length three.
It follows that the four arcs $\vec a,\vec b,\vec c$ and $\vec d$ are as in Figure~\ref{complet} below (modulo a permutation of the labels).
Let $D^*$ be the digraph obtained from $D$ by removing the arcs 
$\vec a,\vec b,\vec c,\vec d$ and their four 
incident vertices. By minimality of $D$, $D^*$ admits a directed star $3$-colouring which
can be extended to $D$ as depicted
below depending if the two leaving arcs are coloured the same or differently. 
This proves the claim.

\begin{figure}[!hbt]
\begin{center}
\epsfig{file=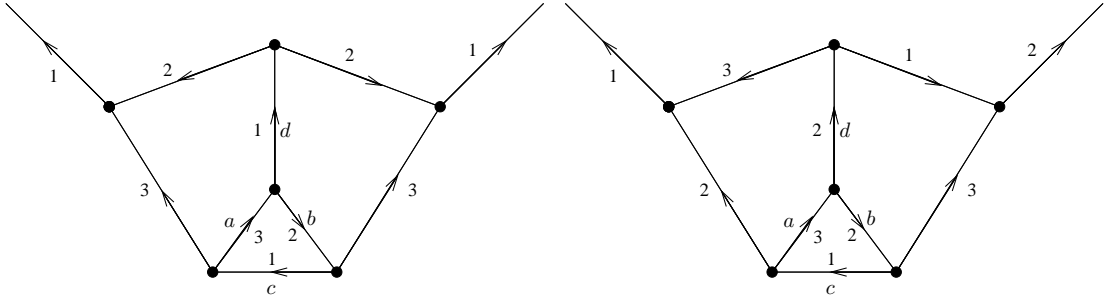, height=4cm}
\caption{A $K_4$ in the conflict graph and the two ways of extending the colouring.}\label{complet}
 \end{center}
\end{figure}

Brooks Theorem asserts that every subcubic graph
without $K_4$ is 3-colourable. So the 
conflict graph admits a $3$-colouring $c$. 
This gives a colouring of the arcs of $D'$. Let $D''$ be the 
digraph obtained from $D$, and let $L$ be the list-assignment on the arcs of $D''$
defined simultaneously as follow:
\begin{itemize}
\item Remove the arcs of $D_1$ from $D$,
\item Assign to each arc of $[D_2,D_1]$ the singleton list containing the colour it has in $D'$, 
\item For each arc $uv$ of $[D_1,D_2]$, there is a unique arc $tu$ in $A(D')$. Assign to $uv$ the list $L(uv)=\{1,2,3\}\setminus c(tu)$.
\item Assign the list $\{1,2,3\}$ to the other arcs.
\item If there are vertices with indegree one and outdegree
two (they were in $D_1$), split each of them into one source of degree two and a sink
of degree one. 
\end{itemize}

Note that there is a trivial one-to-one correspondence between $A(D'')$ and $A(D)\setminus A(D')$.
By the definition of the conflict graph and $D''$, one can easily check that $D''$ and $L$ 
satisfies the condition of Lemma~\ref{extension}.
Hence $D''$ admits a directed star $L$-colouring which union with $c$ is
a directed star $3$-colouring of $D$, a contradiction. The proof of Theorem~\ref{degre3} is now complete.
\hfill$\Box$

\section{Directed Star Arboricity of Digraphs with Maximum In- and 
Outdegree Two}\label{d+d-}

The goal of this section is to prove Theorem~\ref{2diregular}: {\it Every digraph 
with outdegree and indegree at most two has directed 
star arboricity at most four.} 
However, the class of digraphs with in- and outdegree at most
two is certainly not an easy class with respect to directed 
star arboricity, as we will show in Section~\ref{complexity}.

\vspace{11pt}

In order to prove Theorem~\ref{2diregular}, it suffices to show that
$D$ contains a galaxy $G$ which spans 
all the vertices of degree four. Indeed, if this is true, then $D'=D-A(G)$ has maximum 
degree at most $3$ and  by Theorem~\ref{degre3}, $dst(D')\leq 3$. So $dst(D)\leq 4$. Hence Theorem~\ref{2diregular} is directly implied by the following lemma:

\begin{lemma}\label{spanninggalaxy}
Let $D$ be a digraph with maximum indegree and outdegree two.
Then $D$ contains a galaxy which spans the set of vertices with
degree four.
\end{lemma}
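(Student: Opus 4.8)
The goal is to find a galaxy (forest of stars) in $D$ that covers every vertex of degree four. Since $D$ has $\Delta^-,\Delta^+\le 2$, a vertex of degree four has exactly indegree two and outdegree two. The natural reformulation: we want to orient-and-select a set of arcs forming a galaxy — that is, choose for certain vertices one arc to "serve" them — so that (a) every degree-four vertex is the head or the tail of a chosen arc, and (b) the chosen arcs form vertex-disjoint stars, i.e. no chosen arc has its head equal to the tail of another chosen arc, and no two chosen arcs share a head. A clean way to encode (b): a galaxy is exactly a set of arcs in which every vertex has total degree (in- plus out-) at most... no — it is a set where each component is a star, equivalently each arc's head is a leaf. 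I would phrase the search as: pick a set $S\subseteq A(D)$ such that every vertex is incident to at most one arc of $S$ as a head, and no vertex is simultaneously the head of one arc of $S$ and the tail of another. Then additionally require every degree-four vertex to be incident to $S$.

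The approach I would take is a \emph{matching/augmenting-path argument} or, more robustly, a \emph{minimal counterexample} argument on the number of vertices, combined with a discharging/structural analysis of a hypothetical uncovered degree-four vertex. First I would reduce to the case that $D$ has minimum in- and out-degree as large as possible by noting that vertices of degree $\le 3$ impose no covering constraint and can often be handled by deleting low-degree vertices or contracting. Then, supposing $D$ is a minimal counterexample, there is a degree-four vertex $v$ that cannot be covered: neither of its two out-arcs nor its two in-arcs can be added to a suitable galaxy of $D-v$ (or $D$ minus some arcs). I would build the galaxy greedily/inductively: remove $v$, get a spanning-the-degree-four-vertices galaxy $G'$ of $D-v$ by induction (after checking $D-v$ still has the right degree bounds, which it does since deleting a vertex only lowers degrees), and try to attach $v$. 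Attaching $v$ via an out-arc $vw$ fails only if $w$ is already the tail of an arc of $G'$ or already the head of an arc of $G'$; attaching via an in-arc $uv$ fails only if $v$ is forced to be simultaneously a head and tail, or $u$ is already a head of $G'$ — but $v$ has no arc of $G'$ yet, so the in-arc $uv$ is blocked only when $u$ is the head of a $G'$-arc. The crux is that all four neighbours are "blocked" simultaneously; I would then reroute $G'$ locally (swap the star-arc at a blocking neighbour) to unblock one of them, using that each neighbour has degree at most four and hence limited obstructions.

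The main obstacle I expect is exactly this rerouting step: showing that when all four neighbours of $v$ are blocked one can always perform a bounded sequence of local exchanges along the existing stars to free up a slot for $v$ without uncovering some other degree-four vertex — i.e. controlling the propagation of the swap. This is where a potential cyclic dependency can arise (the swap at $w$ blocks a different degree-four vertex, whose fix blocks another, etc.), and the argument must rule out an infinite/bad chain, presumably by an alternating-path/parity argument or by choosing the counterexample minimal with respect to a suitable secondary measure (e.g. total number of degree-four vertices, or $\sum_v d(v)^2$). I would set up the exchange as an alternating path in an auxiliary bipartite-like structure between degree-four vertices and the arcs of $G$, so that the existence of a covering galaxy becomes a Hall-type / augmenting-path condition, and then verify that the obstruction (a "barrier" set) is incompatible with $\Delta^-,\Delta^+\le 2$ by a counting argument on arcs inside the barrier, much in the spirit of the counting used with Frank's theorem earlier in the paper.
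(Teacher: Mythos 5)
You have correctly identified the overall shape of the paper's argument (take a galaxy $G$ that spans the maximum number of degree-four vertices, suppose some $x$ is left uncovered, and derive a contradiction by alternating-path exchanges), and your analysis of when an arc $vw$ or $uv$ can be appended to a galaxy is accurate. But your proposal stops precisely at the point you yourself flag as ``the main obstacle'': controlling the cascade of exchanges when all neighbours of $x$ are blocked. That is not a detail to be filled in; it is the entire content of the proof, and neither of the two devices you gesture at (Hall-type barrier counting, or a secondary minimality measure) is what actually makes the argument go through.

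What the paper does is qualitatively different from a Hall/augmenting-path scheme. It observes first that every arc of $G$ lying on an alternating path into $x$ must be an isolated single-arc star (otherwise a symmetric-difference swap already improves $G$), then that there are no alternating circuits, which makes the ``reachability by alternating path'' relation a genuine \emph{partial order} on those arcs together with $x$. On top of that order it builds an auxiliary digraph $\mathcal D$ whose vertices are these arcs, with forward/backward arcs determined by which endpoint of one $G$-arc sends a $D$-arc into the other, and it proves degree constraints on $\mathcal D$ from $\Delta^-,\Delta^+\le 2$. The real work is a standalone structural lemma (Lemma~\ref{ordig}) about such \emph{ordered digraphs}: under those degree constraints one can always find two arcs $\gamma\alpha$ and $\beta\delta$ with $\alpha\le\beta\le\gamma$, $\beta\le\delta$, $\gamma\not\le\delta$, and this configuration is exactly what lets one rewire $G$ along an alternating path plus one extra backward arc to span $x$. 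That lemma is proved by a minimal-counterexample argument with careful contraction of ``good intervals'' of the order; it is not a counting argument in the spirit of Frank's theorem. So the gap in your proposal is concrete: without the ordered-digraph abstraction and the interval-contraction lemma (or some equally strong structural substitute), the ``reroute without uncovering anyone else'' step has no proof, and I do not see how the counting argument you sketch would rule out long chains of blocked exchanges.
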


To prove this lemma, we need some preliminaries.\\
Let $V$ be a set. An {\it ordered digraph} on $V$ 
is a pair $(\leq,D)$ where:

\begin{itemize} 

\item $\leq$ is a partial order on $V$;

\item  $D$ is a digraph with vertex set $V$;

\item $D$ contains the Hasse diagram of $\leq$. I.e., 
when $x\leq y\leq z$ implies $x=y$ or $y=z$, then $xz$
is an arc of $D$;

\item If $xy$ is an arc of $D$, the vertices $x,y$ are $\leq$-comparable.

\end{itemize}
 
The arcs $xy$ of $D$ thus belong to two different types:
the {\it forward arcs}, when $x\leq y$, and the  {\it backward arcs}, when 
$y\leq x$.

\begin{lemma} \label{ordig}
Let $(\leq ,D)$ be an ordered digraph on $V$.
Assume that every vertex 
is the tail of at most one backward arc and at most two forward arcs,
and that the indegree of every vertex of $D$ is
at least two, except possibly one vertex $x$ with indegree one.
Then $D$ contains two arcs  $\Cc\Aa$ and $\Bb\Dd$ such that $\Aa\leq \Bb\leq \Cc$,
$\Bb\leq \Dd$ and  $\Cc\not \leq \Dd$, all four vertices being distinct except
possibly  $\Aa=\Bb$.

\end{lemma}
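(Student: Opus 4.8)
\textbf{Proof proposal for Lemma~\ref{ordig}.}

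\medskip

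The plan is to argue by contradiction, assuming no such configuration $\Cc\Aa$, $\Bb\Dd$ exists, and to derive a contradiction by a counting (discharging-style) argument on forward and backward arcs. First I would reformulate the forbidden configuration in more workable terms. Fix a vertex $\Cc$ that is the head of a backward arc $\Cc\Aa$ (so $\Aa\le\Cc$). We are forbidden from finding a vertex $\Bb$ with $\Aa\le\Bb\le\Cc$ that is the tail of a forward arc $\Bb\Dd$ with $\Cc\not\le\Dd$ (allowing $\Bb=\Aa$). Rephrasing: for every vertex $\Bb$ in the interval $[\Aa,\Cc]$ and every forward arc $\Bb\Dd$ out of $\Bb$, we must have $\Cc\le\Dd$. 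In particular, taking $\Bb=\Aa$ itself, every forward arc out of $\Aa$ must go to some $\Dd\ge\Cc$. Since $\Aa$ has at most two forward arcs out of it, this already severely constrains the forward out-arcs of tails of backward arcs.

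\medskip

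The next step is to exploit the hypothesis that almost every vertex has indegree at least two. I would count arcs: write $n=|V|$. Since $D$ contains the Hasse diagram of $\le$ and every arc is between comparable vertices, and every vertex is the tail of at most one backward arc and at most two forward arcs, the total number of arcs is at most $3n$; meanwhile indegree at least two at all but one vertex gives at least $2n-1$ arcs. The useful direction is to look at \emph{backward} arcs. Let $B$ be the set of heads of backward arcs and consider a ``highest'' vertex structure: take a backward arc $\Cc\Aa$ with $\Cc$ maximal (with respect to $\le$) among all tails of backward arcs, or better, iterate over a linear extension. The heart of the argument should be: follow backward arcs to build a descending chain, using the forbidden configuration to show that the forward arcs out of the vertices along the chain are forced to ``jump over'' $\Cc$, which starves the vertices in the interval $[\Aa,\Cc]$ of the indegree they need — each such vertex needs indegree $\ge 2$, but its in-arcs can only come from forward arcs originating at or below it or backward arcs from above, and the forbidden configuration blocks too many of these.

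\medskip

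Concretely, I would set up the contradiction as follows. Choose a backward arc $\Cc\Aa$ minimising the size of the interval $[\Aa,\Cc]=\{v : \Aa\le v\le\Cc\}$ (a finite poset, so this is well-defined). Every vertex $\Bb$ in $[\Aa,\Cc]$ has all its forward out-arcs landing in $\{\Dd : \Dd\ge\Cc\}$; in particular no forward arc has both endpoints inside $[\Aa,\Cc]$ except possibly those ending exactly at $\Cc$. Now count the arcs entering the vertices of $[\Aa,\Cc]$ other than $\Cc$ and $\Aa$: such arcs are either forward arcs from strictly below (their tails lie below the head, hence in $[\Aa,\Cc]$ or outside below — but an internal forward arc is forbidden), or backward arcs from above. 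Using minimality of the interval to rule out further backward arcs with heads inside $[\Aa,\Cc)$, and using the at-most-one-backward / at-most-two-forward caps, I expect to find that the vertices strictly between $\Aa$ and $\Cc$ cannot all achieve indegree two, contradicting the hypothesis (the single exceptional vertex $x$ absorbs at most one deficiency). The Hasse-diagram containment is what makes $[\Aa,\Cc]$ genuinely connected by arcs and forces the interval structure to be rigid.

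\medskip

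The main obstacle, I expect, is the bookkeeping in that final counting step: correctly accounting for arcs whose heads or tails lie on the boundary of the interval $[\Aa,\Cc]$ (namely $\Aa$ and $\Cc$ themselves), handling the exceptional vertex $x$ of indegree one, and making sure the ``minimal interval'' choice really does forbid all the backward arcs I want it to forbid (a backward arc from above could still have its head strictly inside the interval without contradicting minimality unless I choose the extremal arc carefully). It may be cleaner to induct on $|V|$ instead: delete a carefully chosen vertex or arc, apply the lemma to a smaller ordered digraph, and lift the resulting configuration — but the deletion must preserve the ``Hasse diagram contained'' property and the degree/indegree conditions, which is itself delicate. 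I would try the direct extremal/counting argument first and fall back on induction if the boundary cases prove unmanageable.
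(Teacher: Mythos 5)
Your high-level intuition is right -- pick a backward arc $\Cc\Aa$ by an extremal criterion, observe that the forbidden configuration forces every forward arc out of the interval $[\Aa,\Cc]$ to jump above $\Cc$, and try to show the interval is then starved of indegree. This is indeed the driving idea. But the direct counting version you sketch does not close, and the obstacles you flag at the end are real, not bookkeeping.

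First, minimising $|[\Aa,\Cc]|$ does not rule out backward arcs with tails strictly inside the interval. If $\Cc'\Aa'$ is a backward arc with $\Aa < \Cc' < \Cc$, the interval $[\Aa',\Cc']$ need not be contained in $[\Aa,\Cc]$ (since $\Aa'$ can sit below or incomparable to $\Aa$), so nothing forces it to be smaller. The paper instead picks $\Cc = M$ to be $\le$-minimal among tails of backward arcs; that is the correct extremal choice, and it immediately forbids backward arcs with tails in $[\Aa,\Cc)$. Second, and more fatally, the interval $[\Aa,\Cc]$ is not closed to incoming arcs: a vertex $v$ strictly between $\Aa$ and $\Cc$ can receive forward arcs from tails that are below or incomparable to $\Aa$, and backward arcs from tails that are above or incomparable to $\Cc$, and your hypotheses place no useful cap on the number of such arcs. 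A raw count of arcs entering the interior of the interval therefore does not give a contradiction.

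The paper's resolution is exactly the machinery you do not have. It defines a \emph{good interval} $I$ (every arc leaving $I$ has tail the maximum $M$ of $I$, and every backward arc inside $I$ has tail $M$), shows the interval coming from the minimal backward arc is good, and then picks a \emph{maximal} good interval $I$ with maximum $M$. The key move is then to \emph{contract} $I$ to a single vertex, prove the contracted poset is still a partial order, that $D/I$ is still an ordered digraph satisfying the same outdegree and indegree hypotheses, and that a witnessing pair of arcs in $D/I$ lifts back to one in $D$; this sets up a clean induction on $|V|$. Your fallback suggestion of inducting by \emph{deleting} a vertex or arc is precisely what does not work: deletion both destroys indegrees downstream and breaks the ``Hasse diagram contained in $D$'' condition, whereas contraction of a good interval preserves both. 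So the gap in your proposal is not just bookkeeping: the good-interval-plus-contraction mechanism is the missing idea that makes the extremal strategy go through.
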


\begin{proof}
For the sake of a contradiction, let us consider a counterexample with minimum $|V|$.

An {\it interval } is a subset $I$ of $V$ which has a
minimum $m$ and a maximum $M$ such that $I=\{z:m\leq z\leq M\}$.
An interval $I$ is {\it good} if every arc with tail in $I$ and head 
outside $I$ has tail $M$ and every backward arc in $I$ has tail $M$.

Let $I$ be an interval of $D$.
The digraph $D/I$ obtained from $D$ by {\it contracting $I$}
is the digraph with vertex set $(V\setminus I) \cup \{v_I\}$ such that
$xy$ is an arc if and only either $v_I\notin \{x,y\}$ and $xy\in A(D)$,
or $x=v_I$ and there exists $x_I\in I$ such that $x_Iy\in A(D)$, or
$y=v_I$ and there exists $y_I\in I$ such that $xy_I\in A(D)$.

Similarly, the binary relation 
$\leq _{/I}$ obtained from $\leq$ by {\it contracting $I$} is
the binary relation  on  $(V\setminus I) \cup \{v_I\}$
such that $x\leq_{/I} y$ if and only if either $v_I\notin \{x,y\}$ and $x\leq y$,
or $x=v_I$ and there exists $x_I\in I$ such that $x_I\leq y$, or
$y=v_I$ and there exists $y_I\in I$ such that $x\leq y_I$.
We claim that if $I$ is good then $\leq _{/I}$ is a partial order.
Indeed suppose it is not, then there are two elements $u$ and $t$ such that
$u\leq_{/I} v_I$, $v_I\leq t$ and $u\not\leq_{/I} t$.
Then $M\not\leq t$. Let $\Aa\in I$ be the maximal element of $I$ 
such that $\Aa\leq t$, $\Dd$ be a successor of $\Aa$ in $I$, and
$\Cc$ a successor of $\Aa$ not in $I$ (it exists as $t\notin I$ and $\Aa\leq t$
and maximal in $I$ with this property). Then $\Dd$ and $\Cc$ are incomparable, and $\Aa\Cc$ and $\Aa\Dd$ are in the Hasse diagram
of $\leq$. Because $I$ is good, it follows that $\Cc\Aa$ and $
\Aa\Dd$ are arcs of $D$, which is impossible as $D$ is supposed to be a counterexample.

Hence, if $I$ is a good interval, then  $(\leq_{/I}, D/I)$ is an ordered digraph.
Note that if $x\leq_{/I} v_I$, then $x\leq M$ with $M$ the maximum  of $I$.
The crucial point is that if $I$ a good interval of $D$ for which the conclusion
of Lemma~\ref{ordig} holds for $(\leq_{/I}, D/I)$, 
then it holds for $(\leq, D)$. 
Indeed, suppose there exists two arcs  $\Cc\Aa$ and $\Bb\Dd$ of $D/I$ such that $\Aa\leq_{/I}  \Bb\leq_{/I}  \Cc$,
$\Bb\leq_{/I}  \Dd$, and  $\Cc\not \leq_{/I}  \Dd$. 
Note that since $I$ is good, we have $v_I\neq \Cc$. Let $M$ be the maximum of $I$.\\
If $v_I\notin \{\Aa,\Bb,\Cc,\Dd\}$, then $\Cc\Aa$ and $\Bb\Dd$ gives the 
conclusion for $D$.\\
If $v_I=\Aa$, then $\Cc M$ is an arc. Let us show that $M\leq \Bb$.
Indeed, let $x$ be a maximal vertex in $I$ such that $x\leq \Bb$ and let $y$ be a minimal vertex
such that $x\leq y\leq \Bb$.
Since the Hasse diagram of $\leq $ is included in $D$, $xy$ is an arc, and so $x=M$ (since $I$ is good).
Thus $\Cc M$ and $\Bb\Dd$ are the desired arcs.\\
If $v_I=\Bb$, then $M\Dd$ is an arc and $\Aa\leq M$, so $\Cc\Aa$ and $M\Dd$ are the desired arcs. \\
If $v_I=\Dd$, then there exists $\Dd_I\in I$ such that $\Bb\Dd_I$, so $\Cc\Aa$ and $\Bb\Dd_I$ are the desired arcs. 

Hence to get a contradiction, it is sufficient to find
a good interval $I$ such that  $(\leq_{/I}, D/I)$ satisfies the hypotheses of Lemma~\ref{ordig}.

Observe that there are at least two backward arcs. Indeed, if there
are two minimal elements for $\leq$, there are at least three backward
arcs entering these vertices (since one of them can be $x$). And if there 
is a unique minimum $m$, by letting $m'$ minimal in $V\setminus m$, at least
two arcs are entering $m,m'$.

Let $M$ be a vertex  which is the
tail of a backward arc and which is minimal for $\leq$ for this property.
Since two arcs cannot have the same tail, $M$ is not the maximum of $\leq$ (if any). 
Let $Mm$ be the backward arc with tail $M$. 

We claim that the interval
$J$ with minimum $m$ and maximum $M$ is good. Indeed, by the definition
of $M$, no backward arc has its tail in $J\setminus \{M\}$. Moreover, any forward 
arc $\Bb\Dd$ with its tail in $J\setminus \{M\}$ and its head outside $J$
would give our conclusion (with $\Aa=m$ and $\Cc=M$), a contradiction.

Now consider a good interval $I$ with maximum $M$ which
is maximal with respect to inclusion. We claim that if $x\in I$,
then there is at least one arc entering $I$, and if $x\notin I$,
there are at least two arcs entering $I$ with different tails.

Call $m_1$ the minimum of $I$ and $m_2$ any minimal element of $I\setminus m_1$.
First assume that $x$ is in $I$. There are at least three arcs with heads
$m_1$ or $m_2$. One of them is $m_1m_2$, one of them can be with tail $M$,
but there is still one left with tail not in $I$.
Now assume that $x$ is not in $I$. There are at least two arcs with heads
$m_1$ or $m_2$ and tails not in $I$. If the tails are different,
we are done. If the tails are the same, say $v$, observe that $vm_1$
and $vm_2$ are both backward or both forward (otherwise $v$ would be 
in $I$). Since both cannot be backward, both $vm_1$
and $vm_2$ are forward. 
Hence the interval with minimum $v$ and maximum
$M$ is a good interval, contradicting the maximality of $I$.
This proves the claim.

\noindent This in turn implies that $(\leq_{/I}, D/I)$ satisfies the hypotheses of Lemma~\ref{ordig},
yielding a contradiction.
\end{proof}

\noindent {\bf Proof of Lemma~\ref{spanninggalaxy}.\ }
Let $G$ be a galaxy of $D$ which spans
a maximum number of vertices of degree four. 
Suppose for the sake of a contradiction that some vertex $x$ with degree four 
is not spanned. 

An {\it alternating path} is an oriented path ending at $x$, 
starting by an arc of $G$, and 
alternating with arcs of $G$ and arcs of $A(D)\setminus A(G)$.
We denote by $\cal A$ the set of arcs of $G$ which belong to
an alternating path.

\begin{claim}
Every arc of $\cal A$ is a component of $G$.
\end{claim}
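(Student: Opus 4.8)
The plan is to prove that every arc of $\cal A$ forms a component of $G$ on its own, i.e., that the arc of $G$ lying on any alternating path is an isolated star (indeed a single arc, since $G$ is a galaxy and we are about to show its other endpoint has no further $G$-arc). I would argue by contradiction: suppose some arc $uv\in{\cal A}$ lies in a component of $G$ together with another arc. Since $G$ is a galaxy, the component of $G$ containing $uv$ is a star, so either $u$ is the center of a star with at least two arcs, or $v$ is. The first step is to pin down which configuration can actually arise given the indegree/outdegree bound of two and the fact that $uv$ is the $G$-arc immediately preceding (along an alternating path) some arc $vw\in A(D)\setminus A(G)$, eventually reaching $x$.

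Next I would exploit the swapping idea that underlies the whole maximality argument for $G$. If $uv$ lies on an alternating path $P$ ending at $x$, then toggling membership in $G$ along $P$ — removing the $G$-arcs of $P$ from $G$ and adding the non-$G$-arcs of $P$ to $G$ — is the natural operation: it would attach $x$ to $G$. The obstacle to performing this swap is precisely that the result might fail to be a galaxy, because some vertex on $P$ could end up with two $G$-arcs leaving it or with a $G$-arc entering it while also being a center, or an in-conflict. The claim that every arc of $\cal A$ is a whole component of $G$ should be exactly the statement that, locally, no such obstruction sits at the tail or internal structure of these arcs — so I expect the proof to show that if the component of $uv$ in $G$ had an extra arc, one could reroute or directly swap along $P$ to span strictly more degree-four vertices (contradicting the choice of $G$), or else directly exhibit $x$ as spanned.

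Concretely, I would take an alternating path $P = a_0 a_1 a_2 \cdots a_{2t}$ with $a_{2t}=x$, where $a_0a_1\in G$, $a_1a_2\notin G$, etc., and let $uv=a_{2i}a_{2i+1}$ be an arbitrary arc of $\cal A$ on $P$. Assume its $G$-component is a nontrivial star. Since each non-$G$-arc $a_{2j+1}a_{2j+2}$ enters a vertex that is the tail of the next $G$-arc, and a vertex has indegree at most two, I would check that adding all the $a_{2j+1}a_{2j+2}$ to $G$ while deleting all $a_{2j}a_{2j+1}$ cannot create a vertex of out-$G$-degree $\geq 2$ unless $G$ already had such a star, and in that case the star's extra arc gives a shorter or alternative alternating structure one can use instead. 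The key step — and the main obstacle — is the bookkeeping that after the swap every component is still a star: one must rule out that $v$ (the head of the deleted arc $uv$) was the center of a $G$-star, because then $v$ keeps its outgoing $G$-arcs and also does nothing bad; and rule out that some $a_{2j}$ coincides with a center or creates a path of length two in the new $G$. I expect this case analysis, driven entirely by the $\Delta^+,\Delta^-\leq 2$ hypothesis and the galaxy (star-forest) structure, to be the crux; once it is done, the contradiction with the maximality of $G$ is immediate, so every arc of $\cal A$ must be its own component of $G$.
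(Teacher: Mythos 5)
Your plan is the right one at the top level — the paper's own proof is exactly the swap $A(G)\triangle A(P)$ applied when the tail of $uv$ has $G$-outdegree at least two — but the way you propose to discharge the obstruction is not going to close, and this is where the actual content of the argument lies.

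You correctly spot the danger: after toggling, a vertex $a_{2j}$ ($j\ge 1$) that was the tail of a $G$-arc on $P$ will receive the incoming $G$-arc $a_{2j-1}a_{2j}$, so if $a_{2j}$ had $G$-outdegree $\ge 2$ (an extra leaf not on $P$) the result has a directed path of length two and is no longer a galaxy. However, you propose to rule this out ``driven entirely by the $\Delta^+,\Delta^-\le 2$ hypothesis''. That cannot work: $a_{2j}$ has one incoming $D$-arc from $P$ and one outgoing $D$-arc on $P$, and $\Delta^+\le 2$ still leaves room for a second outgoing arc, which may perfectly well lie in $G$. No amount of local degree bookkeeping excludes this configuration. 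The ingredient you are missing is an extremal/inductive choice, not a degree argument: among all arcs of $\cal A$ whose tail has $G$-outdegree $\ge 2$, pick one, call it $uv$, for which the alternating path $P$ from $uv$ to $x$ is shortest. Every other $G$-arc $a_{2j}a_{2j+1}$ on $P$ is in $\cal A$ (via the suffix of $P$) and has a strictly shorter alternating path, so by minimality each such arc is an isolated component of $G$. With that in hand, the toggle is clean: $u$ keeps at least one outgoing $G$-arc (so stays spanned and a center), every internal vertex of $P$ ends up with exactly one incident $G$-arc, $x$ becomes a leaf, and one obtains a galaxy spanning $V(G)\cup\{x\}$ — contradicting maximality. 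You do gesture at ``the star's extra arc gives a shorter or alternative alternating structure one can use instead'', but this is not quite right either: the relevant shorter object is the suffix of $P$ starting at $a_{2j}a_{2j+1}$, not anything built from the extra arc.

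Two smaller issues. First, you write ``either $u$ is the center of a star with at least two arcs, or $v$ is''; the second alternative is vacuous, since in a galaxy the head $v$ of a $G$-arc is a leaf and can never be a center, so the only failure mode is $u$ having $G$-outdegree $\ge 2$. Second, you let $uv=a_{2i}a_{2i+1}$ be an arbitrary arc of $\cal A$ on $P$ and then propose to toggle all of $A(P)$. If $uv$ is not the first arc of $P$, this deletes $a_0a_1$ and possibly unspans $a_0$; the correct move — the one the paper makes — is to note that every arc of $\cal A$ \emph{starts} some alternating path (truncate to the suffix beginning at that arc) and to toggle along that path.
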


\begin{proof} Indeed, if $uv$ belongs to $\cal A$, it starts some 
alternating path $P$. Thus, if $u$ has outdegree
more than one in $G$, the digraph with the set of arcs 
$A(G)\triangle A(P)$ is a galaxy and spans $V(G)\cup x$. 
\end{proof}

\begin{claim}\label{cl:nocircuit}
There is no circuits alternating arcs of $\cal A$ and
arcs of $A(D)\setminus {\cal A}$.
\end{claim}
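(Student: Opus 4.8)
The plan is to derive a contradiction from the existence of such an alternating circuit, exactly as in the previous claim, by using a symmetric difference argument that produces a larger galaxy.

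Suppose for contradiction that $C = a_1 b_1 a_2 b_2 \dots a_t b_t a_1$ is a circuit alternating arcs $a_i \in {\cal A}$ and arcs $b_i \in A(D) \setminus {\cal A}$. Each $a_i$, being in ${\cal A}$, starts some alternating path $P_i$ ending at $x$; moreover, by the previous claim each $a_i$ is an isolated component (a single arc) of $G$. The key observation is that $G' := A(G) \triangle A(C)$ is again a galaxy: along the circuit $C$ we have removed the $t$ arcs $a_i$ of $G$ and added the $t$ arcs $b_i$, and because each $a_i$ was an isolated arc of $G$ and the $b_i$ alternate with them around $C$, every vertex still has out-degree at most one in $G'$ and the added arcs create no path of length two. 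One has to check that the tails of the $b_i$ were not already tails of arcs of $G$ — but if the tail of some $b_i$ were a tail of an arc $g \in G$, then $g$ would have to be one of the $a_j$ (since in $C$ the tail of $b_i$ is the head of $a_i$, which has out-degree at most one in $G$, forcing $g = a_i$, contradiction with $a_i$ being removed). So $G'$ is a galaxy spanning exactly the same vertex set as $G$.

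Now I want to further modify $G'$ using an alternating path to also cover $x$, contradicting maximality. Pick any $a_i$ on $C$ with its alternating path $P_i$ to $x$; since $a_i \notin G'$ but $a_i \in G$, consider $G'' := A(G') \triangle A(P_i)$. The head of $a_i$ is the tail of $b_i$, which now lies in $G'$; we must verify that $P_i$ interacts correctly with the edits made along $C$ so that $G''$ remains a galaxy and gains $x$. The cleanest way is to argue directly on $G$: because each $a_i$ is an isolated component of $G$, we can first perform the swap on $C$ and then separately swap along $P_i$; since $P_i$ starts with $a_i$ (now absent from $G'$) and alternates $G'$/non-$G'$ arcs all the way to $x$, the standard augmenting-path argument for galaxies applies and $G''$ is a galaxy spanning $V(G) \cup \{x\}$, contradicting the choice of $G$.

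The main obstacle is the bookkeeping in the second swap: after editing along $C$, the alternating path $P_i$ was defined relative to the original $G$, and some of its arcs may have been touched by the $C$-swap if $P_i$ and $C$ share vertices or arcs. The safe route is to note that the arcs of $C$ other than $a_i$ lie in $A(D) \setminus {\cal A}$ hence are not in $G$, while $P_i$ alternates $G$-arcs and non-$G$-arcs; one shows $P_i$ can share at most the $a_j$'s with $C$ among its $G$-arcs and handles those overlaps by a careful choice of $P_i$ (taking $P_i$ of minimum length, so it meets $C$ only at $a_i$). This reduces the problem to the clean situation and completes the contradiction.
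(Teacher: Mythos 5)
Your two-step decomposition $G \mapsto G' = G \triangle C \mapsto G'' = G' \triangle P_i$ does not reproduce the swap the argument needs, and the second step fails. Since $a_i \in A(P_i)$ but $a_i \notin A(G')$ (you deleted it when forming $G'$), the symmetric difference with $A(P_i)$ \emph{re-inserts} $a_i$. However, the arc $b_{i-1}$ of $C$ entering the tail $u_i$ of $a_i$ is present in $G'$ and lies outside $P_i$ (once $P_i$ is chosen shortest), so it survives into $G''$. In $G''$ the vertex $u_i$ is then both the head of $b_{i-1}$ and the tail of $a_i$: a directed path of length two, so $G''$ is not a galaxy. Your assertion that $P_i$ "alternates $G'$/non-$G'$ arcs all the way to $x$" is wrong for the same reason: $a_i$ and the arc following it on $P_i$ are \emph{both} outside $G'$, so the standard augmenting-path argument does not apply to the pair $(G', P_i)$.

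The paper's single swap $A(G) \triangle (A(P) \cup A(C))$ is \emph{not} the same set as your $(A(G) \triangle A(C)) \triangle A(P) = A(G) \triangle (A(C) \triangle A(P))$, and the difference is precisely the arc $a_i \in A(P)\cap A(C)$: the union toggles $a_i$ once (it is removed), while your double symmetric difference toggles it twice (it is kept). Your argument can be repaired by swapping $G'$ along the sub-path $P_i - a_i$ (from $v_i$ to $x$), which does alternate non-$G'$/$G'$ arcs, and then justifying, as you only gestured at, that choosing $P_i$ shortest forces $P_i$ to meet $C$ in the arc $a_i$ alone and in no vertex besides $u_i,v_i$ — minimality rules out the other $a_j$'s, and the fact that each head of a $b_j$ has the unique out-arc $a_{j+1}$ in $G$ rules out the $b_j$'s.
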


\begin{proof} Assume that there is such a circuit $C$. Consider a shortest
alternating path $P$ starting with some arc of $\cal A$ in $C$.
Now the digraph with arcs $A(G)\triangle (A(P)\cup A(C))$ is a galaxy which 
spans $V(G)\cup x$, contradicting the maximality of $G$.
\end{proof}

We now endow ${\cal A}\cup {x}$ with a partial order structure by letting
$a\leq b$ if there exists an alternating path starting at $a$
and ending at $b$. The fact that this relation is a partial order
relies on Claim~\ref{cl:nocircuit}. Observe that $x$ is the maximum of this order.

We also construct a digraph $\cal D$ on vertex set ${\cal A}\cup {x}$ with 
all arcs $uv\ra st$ such that $us$ or $vs$ is an arc 
of $D$ (and $uv\ra x$ such that $ux$ or $vx$ is an arc 
of $D$).

\begin{claim}\label{ordered}
The pair $(\cal D,\leq )$ is an ordered digraph.
Moreover an arc of $\cal A$ is the tail of at most one backward arc and two forward arcs, and
$x$ is the tail of at most two backward arcs.
\end{claim}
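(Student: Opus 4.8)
The plan is to verify the four defining conditions of an ordered digraph for the pair $(\leq, \mathcal{D})$, and then to do a local degree count at each vertex $uv \in \mathcal{A}$ and at $x$. First I would recall the setup: $\mathcal{A} \cup \{x\}$ is ordered by $a \leq b$ iff there is an alternating path from $a$ to $b$ (reflexivity and transitivity are clear from concatenation of alternating paths; antisymmetry is exactly Claim~\ref{cl:nocircuit}, since a violation $a \leq b \leq a$ with $a \neq b$ would produce an alternating circuit), and $\mathcal{D}$ has an arc $uv \to st$ whenever $us$ or $vs$ is an arc of $D$ (and similarly for arcs into $x$). Observe $x$ is the maximum of $\leq$ by the definition of $\mathcal{A}$: every element of $\mathcal{A}$ lies on some alternating path ending at $x$.

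Next I would check the two structural axioms. For the Hasse-diagram containment: if $uv \lessdot st$ is a covering pair, then there is an alternating path from $uv$ to $st$; the minimal such path has length one in the alternation (one arc of $G$, namely $uv$ itself, followed by an arc of $A(D) \setminus A(G)$ from $\{u,v\}$ — necessarily from $v$, since $uv \in \mathcal{A}$ is a component of $G$ by Claim~\ref{ordered}'s predecessor claim so $v$ has outdegree zero in... wait, more carefully: the path leaves $uv$ via a non-$G$ arc out of $v$ or out of $u$), reaching $s$ and then, if $st \neq $ that endpoint, continuing; minimality forces the covering pair to be realized by a single alternating step, which gives $vs \in A(D)$ or $us \in A(D)$, hence $uv \to st$ is an arc of $\mathcal{D}$. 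For comparability of arc endpoints: if $uv \to st$ is an arc of $\mathcal{D}$, then $us$ or $vs$ is an arc of $D$; prepending this arc to any alternating path out of $st$... I need an alternating path from $uv$ to $st$. Since $uv \in \mathcal{A}$ starts an alternating path $P$ to $x$, and $st \in \mathcal{A}$ starts one to $x$; the arc $us$ (or $vs$) of $D$ is not in $G$ (a vertex of $G$ has indegree at most one in $G$, and $s$ already receives the $G$-arc $st$... actually $st \in \mathcal{A}$ is a component so $s$ is the tail; $st$ means $s \to t$ is the $G$-arc with tail $s$ — so the $G$-arc at $s$ leaves $s$, and $us \in A(D)$ enters $s$, hence $us \notin A(G)$). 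So $G$-arc $uv$, then non-$G$ arc to $s$, then $G$-arc $st$: this is an alternating path, giving $uv \leq st$, i.e.\ the two are comparable (and the arc is forward). This also shows every arc of $\mathcal{D}$ is forward unless... no — backward arcs arise when the alternating-path direction and the $\mathcal{D}$-arc direction disagree; I would note an arc $uv \to st$ of $\mathcal{D}$ is backward iff $st \leq uv$ while $us$ or $vs \in A(D)$.

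Then comes the degree count, which I expect to be the routine-but-delicate heart. Fix $uv \in \mathcal{A}$. In $D$, since $uv$ is a component of $G$, $u$ has out-degree at most two in $D$ with none in $G$, and $v$ has out-degree at most two in $D$ with one of them being $uv$ — wait, $uv$ is the $G$-arc so it has tail $u$; so $u$ has at most one further out-arc in $D \setminus G$ (in-/out-degree bounded by two and $uv$ uses one out-slot of $u$), and $v$ has at most two out-arcs in $D$, none in $G$. Out-arcs of $\mathcal{D}$ from $uv$ correspond to $D$-arcs leaving $\{u,v\}$ (not the arc $uv$ itself): at most $1 + 2 = 3$ of them. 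Among these, a forward $\mathcal{D}$-arc $uv \to st$ uses the alternating continuation, and I claim there is at most... hmm, I should split: the $D$-out-arc(s) at $v$ continue alternating paths forward; the $D$-out-arc at $u$ goes "backward" in the order because $u$ is the head-side... Actually the clean statement the paper asserts is: at most one backward, at most two forward (for $uv \in \mathcal{A}$), and at most two backward for $x$. I would argue: a $\mathcal{D}$-arc from $uv$ landing at $st$ is \emph{forward} when it arises from a $D$-arc out of $v$ (the "new end" of the $G$-component, extending the alternating path) — there are at most two such since $d^+_D(v) \leq 2$ and $v$ has no $G$-out-arc; it is \emph{backward} when it arises from the single remaining $D$-arc out of $u$ — at most one such. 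For $x$: $x$ has degree four, so up to four $D$-arcs at $x$; the ones \emph{into} $x$ from some $\{u,v\}$ give incoming $\mathcal{D}$-arcs, the ones \emph{out of} $x$ give at most two outgoing backward $\mathcal{D}$-arcs (bounded by $d^+_D(x) \leq 2$), and $x$ being the maximum has no forward out-arc. The subtle points I'd be most careful about are (a) making sure an arc out of $u$ genuinely gives a backward arc and not a second forward one — this uses that $uv$ is a $G$-component so no alternating path extends \emph{through} $u$ in the forward direction — and (b) ruling out that a single $D$-arc produces two distinct $\mathcal{D}$-arcs or a self-loop, which follows from $uv \notin \mathcal{A}$-structure and Claim~\ref{cl:nocircuit}. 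I would also double-check the indegree claim quietly implicit in the lemma's application (every vertex of $\mathcal{D}$ has indegree $\geq 2$ except one) is handled where $(\mathcal{D}, \leq)$ is fed into Lemma~\ref{ordig}, not here. The main obstacle is bookkeeping the forward/backward dichotomy cleanly rather than any deep idea.
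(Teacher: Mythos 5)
The overall plan — verify the four ordered-digraph axioms, then do a local degree count — is the right shape, and most of the pieces (partial-order axioms via Claim~\ref{cl:nocircuit}, Hasse containment via a shortest alternating path, the counts $d^+_D(v)\le 2$ and $d^+_D(u)\le 2$ with $uv$ using one out-slot of $u$) match the paper. But there is a real gap at the comparability axiom, and you flag it yourself without closing it. When the $\mathcal{D}$-arc $uv\to st$ comes from $vs\in A(D)$, your construction of the alternating path $uv,\,vs,\,st$ works and gives $uv\leq st$ (forward). When it comes from $us\in A(D)$ with $vs\notin A(D)$, the argument you offer — ``no alternating path extends through $u$ in the forward direction, so it isn't forward'' — at best shows $uv\not\leq st$. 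That is not comparability: the axiom requires that the endpoints of every $\mathcal{D}$-arc be $\leq$-comparable, so one must actually \emph{prove} $st\leq uv$. You never do; you only write down the tautology that the arc ``is backward iff $st\leq uv$.''

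The missing ingredient is an exchange argument, and it is exactly the heart of the paper's proof: if some alternating path $P$ starting at $st$ avoided $uv$, then $(A(G)\triangle A(P))\cup\{us\}$ would be a galaxy spanning $V(G)\cup\{x\}$, contradicting the maximality of $G$; hence every alternating path from $st$ passes through $uv$, which is $st\leq uv$. (One has to check $us$ can be added: $u$ is untouched by $P$ since $uv\notin A(P)$ and any non-$G$ arc of $P$ with head $u$ would force $uv\in A(P)$; and $s$, being the start of $P$ and a tail of the component $st$, becomes isolated after the symmetric difference.) Without this step, the forward/backward dichotomy that your degree count relies on is unproved, and the claim that $(\mathcal{D},\leq)$ is an ordered digraph is incomplete. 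The rest of the degree bookkeeping and the $x$ case are fine once this is repaired.
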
 

\begin{proof} The fact that the Hasse diagram of $\leq$ is contained
in $\cal D$ follows from the fact that if  $uv\leq st$
belongs to the Hasse diagram of $\leq$, there is an alternating
path starting by $uvst$, in particular, the arc $vs$ belongs 
to $D$, and thus $uv\ra st$ in $\cal D$.

Suppose that $uv\ra st$, then $vs$ or $us$ is an arc of $D$.
If $vs$ is an arc, because there is no alternating circuit, $st$ follows $uv$ on some
alternating path, and so $uv\leq st$. In this case, $uv\ra st$ is forward.\\ 
If $us$ is an arc of $D$, we claim
that $st \leq uv$. Indeed, if an alternating 
path $P$ starting at $st$ does not contain $uv$,
the galaxy with arcs $(A(G)\triangle A(P))\cup \{us\}$ spans 
$V(G)\cup x$, contradicting the maximality of $G$. In this case, $uv\ra st$ is backward.

It follows that an arc $uv$ of $\cal A$ is the tail of at most one
backward arc (since this arc and $uv$ are the two arcs leaving $u$ in $D$),
and $uv$ is the tail of at most two forward arcs (since $v$ has outdegree at most two). 
Furthermore, since $x$ has outdegree at most two, it follows that $x$ 
is the tail of at most two backward arcs.
\end{proof}

\begin{claim}
The indegree of every vertex of $\cal D$ is two.
\end{claim}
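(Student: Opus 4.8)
The plan is to unwind the definitions of $\cal A$ and $\cal D$ and to exploit the maximality of the galaxy $G$ in exactly the same spirit as the previous claims. Recall that a vertex of $\cal D$ is either an arc $st\in\cal A$ (a component of $G$, by the first claim) or the distinguished vertex $x$. An arc $uv\to st$ of $\cal D$ witnesses that $us$ or $vs$ is an arc of $D$; in other words, in-arcs of the vertex $st\in\cal D$ come exactly from the arcs of $D$ whose head is $s$ and whose tail lies on some member of $\cal A$. Since $D$ has maximum indegree two, $s$ has at most two in-arcs in $D$, so $st$ has indegree at most two in $\cal D$; the whole content of the claim is therefore the lower bound, i.e.\ that $st$ has indegree \emph{exactly} two (and likewise for $x$).

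First I would argue the bound for $x$. Since $x$ has degree four and indegree at most two, and $x$ is unspanned by $G$ while it has outdegree at most two, it must be that $d^-(x)=2$ and both in-arcs of $x$ come from vertices that lie on arcs of $G$; one then checks that each such arc of $G$ is in fact in $\cal A$ (it forms the first arc of an alternating path — namely the single-arc path, extended by the edge into $x$ — unless that arc of $G$ already leads to a longer alternating path; in either case the $G$-arc carrying the tail belongs to $\cal A$). Hence $x$ receives two arcs in $\cal D$. Next, for a vertex $st\in\cal A$: since $st\in\cal A$ it is the first arc of an alternating path $P$ ending at $x$, and being a component of $G$ it consists of the single arc $st$ with $s$ a source of $G$ and $t$ a leaf. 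The second arc of $P$ is some arc of $D\setminus A(G)$ with tail $s$ (it must leave from $s$, since $t$ has no place to go along the alternation — here I use that $P$ starts $st$ and the next arc is non-$G$). That arc, say $s s'$, sits on an arc $s'u'\in\cal A$ that precedes $st$ in the order, giving one in-arc $s'u'\to st$ in $\cal D$. For the second in-arc I would use that $s$, as a vertex of $D$, has indegree two (since if $s$ had indegree $\le 1$ and outdegree $\le 2$, i.e.\ degree $\le 3$, one could argue $st$ would not have been forced into $\cal A$, or more directly: $D$ has all degrees accounted for so that the relevant vertices have indegree two — this is where I must be careful, see below); each of the two in-arcs of $s$ in $D$ has its tail on some member of $\cal A$ by an argument paralleling the one just given for $x$, and hence contributes an in-arc to $st$ in $\cal D$.

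The main obstacle, and the step I would spend the most care on, is showing that the two vertices of $D$ feeding into $s$ (and into $x$) genuinely lie on arcs belonging to $\cal A$, not merely on arcs of $G$ or on no arc of $G$ at all. This is precisely the point where the maximality of $G$ is used: if some in-arc $ws$ of $D$ had its tail $w$ lying on no arc of $G$, or on an arc $wz\in A(G)$ that is \emph{not} in $\cal A$, then I would produce an alternating path $P$ ending at $x$ and starting with $st$, prepend the arcs $wz, zs$ (or handle $w$ a source of $G$ directly) and toggle membership in $G$ along $A(G)\triangle A(P)$ together with adding $ws$, to obtain a galaxy spanning $V(G)\cup\{x\}$ — contradicting the maximality of $G$. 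Making this toggling argument precise (checking that the resulting arc set is indeed a galaxy: no vertex gains out-degree $\ge 2$ in $G$, no two $G$-arcs share a head) is the routine-but-delicate core, and it mirrors verbatim the reasoning already used in Claim~\ref{ordered} and Claim~\ref{cl:nocircuit}. Once every in-arc of $s$ (resp.\ of $x$) in $D$ is matched to an in-arc in $\cal D$, and $s$ (resp.\ $x$) has indegree exactly two in $D$, the claim follows.
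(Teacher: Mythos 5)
Your plan—exchange along an alternating path to exploit the maximality of $G$—is the same as the paper's, but the execution has gaps and a directional confusion that make the argument for a vertex $st\in\mathcal{A}$ not go through. The in-degree of $st$ in $\mathcal{D}$ is governed by the arcs of $D$ \emph{entering} the tail $s$ (that is how arcs $pq\to st$ of $\mathcal{D}$ are defined: $ps$ or $qs$ lies in $A(D)$). Your first attempt instead tracks an arc \emph{leaving} $s$: you assert that the second arc of an alternating path $P=st,\dots,x$ has tail $s$, but alternating paths are directed, so the second arc has tail $t$ (the head of $st$); and even if there were a non-$G$-arc $ss'$ leaving $s$, the arc it would produce in $\mathcal{D}$ would be an \emph{out}-arc of $st$ (backward, since the tail of $st$ is involved), not an in-arc. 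So that whole paragraph is pointed the wrong way.

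Your second attempt ("each of the two in-arcs of $s$ in $D$ has its tail on some member of $\mathcal{A}\cup\{x\}$, hence contributes an in-arc to $st$") is the right track but leaves two essential steps unproved. First, you do not actually establish that $s$ has indegree two: the paper's argument is that if $d^-(s)\le 1$ then $s$ has degree at most three, so switching along $P$ (taking $A(G)\triangle A(P)$) un-spans only the non-degree-$4$ vertex $s$ while newly spanning $x$, contradicting the choice of $G$. Second, and more importantly, showing that each in-neighbour of $s$ lies on some element of $\mathcal{A}\cup\{x\}$ only gives indegree \emph{at least one}, not two, because both in-neighbours of $s$ could be the two endpoints of the \emph{same} arc $Z\in\mathcal{A}$. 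The paper closes this with the observation that if both in-arcs of $s$ came from the endpoints of a single $Z\in\mathcal{A}$, the arc $Z\to st$ of $\mathcal{D}$ would be simultaneously forward and backward (by Claim~\ref{ordered}, depending on whether the head or the tail of $Z$ points to $s$), which is impossible. Without that distinctness argument the claim is not established.
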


\begin{proof} Let $uv$ be a vertex of $\cal D$ which starts an
alternating path $P$. If $u$ has indegree less than two, and thus does not belong
to the set of vertices of degree four, 
the galaxy with arcs $A(G)\triangle A(P)$ spans more vertices of degree
four than $G$, a contradiction.
Let $s$ and $t$ be the
two inneighbours of $u$ in $D$. An element of ${\cal A}\cup x$ should contain  $s$, since
otherwise, the galaxy with arcs $(A(G)\triangle A(P))\cup \{su\}$ spans 
$V(G)\cup x$ and contradicts the maximality of $G$. 
Similarly an element of ${\cal A}\cup x$ contains $t$.

Observe that the same element of ${\cal A}\cup x$ cannot contain 
both $s$ and $t$ (either the arc $st$ or the arc $ts$), otherwise 
the arcs $su$ and $tu$ would be both backward or forward, which is 
impossible.
\end{proof}

At this stage, in order to apply Lemma~\ref{ordig}, we
just need to insure that the backward outdegree 
of every vertex is at most one. Since the only element
of $\cal D$ which is the tail of two backward arcs is 
$x$, we simply delete any of these two backward arcs.
The indegree of a vertex of $\cal D$ decreases by one 
but we are still fulfilling the hypothesis of Lemma~\ref{ordig}.

Hence according to this lemma, $\cal D$ contains two arcs  $\Cc\Aa$ and $\Bb\Dd$ such that $\Aa\leq \Bb\leq \Cc$,
$\Bb\leq \Dd$ and  $\Cc\not\leq \Dd$. Recall that $\Aa,\Bb,\Cc,\Dd$ are elements of 
${\cal A}\cup {x}$. In particular, there is an alternating path $P$ containing $\Aa,\Bb,\Dd$ (in this order)
which does not contain $\Cc$. Setting $\Aa=\Aa_1\Aa_2$ and $\Cc=\Cc_1\Cc_2$, note that 
the backward arc $\Cc\Aa$ corresponds to the arc $\Cc_1\Aa_1$ in $D$.
We reach a contradiction by considering the galaxy with arcs 
$(A(G)\triangle A(P))\cup \{\Cc_1\Aa_1\}$ which spans 
$V(D')\cup x$. The proof of Lemma~\ref{spanninggalaxy} is now complete.
{\hfill$\Box$ \par \vspace{11pt}

\section{Multiple Fibres}\label{+rsfibres}

In this section we consider the general problem with $n\geq 2$ fibres, and give lower and upper bounds on $\lambda_n(m,k)$.
Let us start by proving a lower bound on  $\lambda_n(m,k)$.
\begin{proposition}\label{lambdainf}
For all $m,n,k \in \mathbb N$ , we have $\ds \lambda_n(m,k)\geq 
\left\lceil\frac{m}{n}\left\lceil \frac{k}{n}\right\rceil +   \frac{k}{n} \right\rceil$
\end{proposition}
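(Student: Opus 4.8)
The plan is to exhibit, for every triple $m,n,k$, an explicit $m$-labelled digraph $D$ with $\Delta^-(D)\le k$ that forces at least $\lceil \frac{m}{n}\lceil \frac{k}{n}\rceil + \frac{k}{n}\rceil$ colours in any $n$-fibre colouring. The natural candidate is a layered acyclic construction: start from a "pendant" part that fixes the colours entering a distinguished vertex $v$, so that $v$ sees $k$ incoming arcs whose colours, by the constraint $in(v,\omega)+out(v,\omega)\le n$, can repeat each colour at most $n$ times; hence these $k$ arcs use at least $\lceil k/n\rceil$ distinct colours. Then attach to $v$ the $m$ labels of outgoing arcs: each label class leaving $v$ contributes to $out(v,\omega)$, and the budget at $v$ for a colour $\omega$ already used $c$ times by incoming arcs is only $n-c$ remaining units for outgoing labels. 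Iterating this "a vertex whose incoming colours are forced spends part of every colour's budget, so its outgoing labels are squeezed into fewer colours" down a long enough path is what produces the $\frac{m}{n}\lceil\frac{k}{n}\rceil$ term on top of the $\frac{k}{n}$ term.

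Concretely, I would build $D$ in stages. First build a sub-gadget $H_1$ (itself recursively a smaller copy of the construction) that guarantees that at its apex vertex $v_1$ the $k$ incoming arcs are coloured using a set $S_1$ of colours with each colour appearing as close to $n$ times as the counting allows; this is where $\lceil k/n\rceil$ colours get "saturated". Now from $v_1$ send out, for each of the $m$ labels, a bundle of $k$ arcs to fresh vertices, and recursively attach below each of those the same kind of saturating gadget, so that the colour constraint $(iii)$ at $v_1$ (two arcs leaving with different labels must differ on $(\lambda,f^+)$, i.e. each colour supports at most $n$ distinct outgoing labels) combined with $(i)$ forces the $m$ outgoing label-classes to be distributed over the colours not already exhausted by the $k$ incoming arcs. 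A careful count of how many colours are needed to accommodate $m$ labels given that each available colour has residual capacity $n$ and $\lceil k/n\rceil$ colours are partly eaten yields exactly the claimed floor/ceiling expression; the ceiling on the whole sum (rather than the sum of ceilings) comes from the fact that fractional slack from the incoming count can be reused by the outgoing labels, so the right rounding is a single outer $\lceil\cdot\rceil$.

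The main obstacle is getting the combinatorial bookkeeping to match the exact value $\lceil \frac{m}{n}\lceil \frac{k}{n}\rceil + \frac{k}{n}\rceil$ rather than a slightly weaker bound: one must be careful that the recursive gadget really forces the strongest possible saturation (so that no colour entering $v_1$ is wasted) and that the outgoing labels cannot "share" a partially-used colour more cleverly than the bound allows — this is governed precisely by condition $(iii)$, that a single colour accommodates at most $n$ distinct labels leaving a vertex, and by condition $(i)$ coupling incoming and outgoing use of a colour through the common $f$-fibre index. Once the gadget is set up, verifying $\Delta^-(D)\le k$ is immediate by construction (every vertex receives at most $k$ arcs), and acyclicity is clear since the construction is layered; so the real content is the lower-bound counting argument at the apex vertices, carried out by induction on the number of recursive layers. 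I would present the base case ($m=n=k$, or the $n=m=1$ case recovering Brandt's $2\Delta^-$ example) first to make the pattern transparent, then do the general induction.
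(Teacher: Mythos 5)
Your intuition about the local budget at a vertex is exactly right: from $in(v,\omega)+out(v,\omega)\le n$ and the fact that $k$ arcs enter $v$, one gets $\sum_i |C_i(v)| \le cn-k$ where $C_i(v)$ is the set of colours on outgoing arcs of label $i$ at $v$ — and this is indeed the core inequality in the paper's proof. But the proposal leaves a genuine gap precisely where you flag ``the main obstacle'': a recursive gadget that ``forces the strongest possible saturation'' at each apex is never constructed, and it is not clear one can be built, because the colourer is an adversary and is under no obligation to saturate. You cannot force \emph{which} colours appear on the incoming arcs of any given apex, only constrain the aggregate budget. The paper sidesteps this with a completely different, non-recursive mechanism: a flat three-layer digraph $X\to Y\to Z$ in which $|Y|$ is taken astronomically large ($k\,2^{(m+1)k}$), so that by pigeonhole there must exist a set $S$ of $k$ vertices of $Y$ all having the \emph{same} $m$-tuple $(C_1,\dots,C_m)$. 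The third layer $Z$ contains, for every such $k$-set $S$ and every label $i$, a sink $z^i_S$ receiving one arc labelled $i$ from each vertex of $S$; since all $k$ incoming arcs at $z^1_S$ carry colours from the single set $C_1$ (the smallest $C_i$), one gets $|C_1|\ge k/n$, which combined with $m|C_1|\le cn-k$ yields the bound. This pigeonhole-on-a-huge-middle-layer step, together with the explicit $Z$-layer exploiting label classes, is the ingredient your sketch is missing; without it the ``careful count at the apex'' you promise does not go through, because nothing pins down a common colour set across the $k$ parents of any single vertex.
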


\begin{proof}
Consider the following $m$-labelled digraph $G_{n,m,k}$ with vertex set $X\sqcup Y\sqcup Z$
such that :
 
\begin{itemize}
\item $|X|=k$, $|Y|=k2^{(m+1)k}$ and $|Z|=m{|Y| \choose k}$.
\item For any $x\in X$ and $y\in Y$, there is an arc $xy$ (of whatever label).  
\item For every set $S$ of $k$ vertices of $Y$ and any integer $1\leq i\leq m$,  
there is a vertex $z^i_S$ in $Z$ which is dominated by all the vertices of $S$ via arcs
labelled $i$.
\end{itemize}

Suppose there exists an $n$-fibre
colouring of $G_{n,m,k}$ with $c < \left\lceil\frac{m}{n}\left\lceil \frac{k}{n}\right\rceil + \frac{k}{n} \right\rceil$ colours.  
For $y\in Y$ and $1\leq i\leq m$, let
$C_i(y)$ be the set of colours assigned to the arcs labelled $i$
leaving $y$.  
For $0\leq j\leq n$, let $P_j$ be 
the set of colours used on $j$ arcs entering $y$ 
(and necessarily with two different fibres).  
Then $\sum_{j=0}^n j |P_j|=k$ as $k$ arcs enter $y$.
Moreover $\sum_{j=0}^n |P_j|=c$, since $(P_0, P_1, \dots , P_n)$ is a partition of the set of colours.
Now each colour of $P_j$ may appear in at most $n-j$ of the
$C_i(y)$, so
$$\sum_{i=1}^m |C_i(y)| \leq \sum_{j=0}^n (n-j)|P_j| = n \sum_{j=0}^n |P_j| - \sum_{j=0}^n j|P_j| = cn -k.$$
Because  $|Y|>(k-1)2^{cm}$, there is a set $S$ of $k$ vertices $y$ of $Y$
having the same $m$-tuple $(C_1(y), \dots , C_m(y))=(C_1,\dots, C_m)$.
Without loss of generality, we may assume $|C_1|=\min \{|C_i| \tq 1\leq i\leq m\}$.
Hence $|C_1| \leq \frac{cn-k}{m}$.
But the vertex  $z_S^1$ has indegree $k$, so $|C_1|\geq \frac kn $.
Since $|C_1|$ is an integer, we have
$\left\lfloor \frac{cn-k}{m} \right\rfloor \geq |C_1| \geq  \left\lceil \frac kn\right\rceil$.
So $c\geq \frac{m}{n}\left\lceil \frac{k}{n}\right\rceil +   \frac{k}{n}$
. Since $c$ is an integer, we get 
$c\geq \left\lceil\frac{m}{n}\left\lceil \frac{k}{n}\right\rceil +   \frac{k}{n} \right\rceil$, a contradiction. 
\end{proof}

Note that the graph $G_{n,m,k}$ is acyclic. 
The following lemma shows that, if $m \geq n$, one cannot expect better lower bounds 
by considering acyclic digraphs. Indeed $G_{n,m,k}$ is the $m$-labelled acyclic digraph with indegree at most
$k$ for which an $n$-fibre colouring requires the more colours.

\begin{lemma}\label{upperacyclic}
Let $D$ be an acyclic $m$-labelled digraph with $\Delta^- \leq k$.
If $m\geq n$, then $\lambda_n(D)\leq \left\lceil\frac{m}{n}\left\lceil 
\frac{k}{n}\right\rceil +   \frac{k}{n} \right\rceil$.
\end{lemma}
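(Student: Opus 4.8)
The plan is to argue by induction on the number of vertices, processing the acyclic digraph from its sinks, exactly in the spirit of the proof of Theorem~\ref{dst-acircuitic}. Since $D$ is acyclic it has a sink $x$; let $y_1,\dots,y_\ell$ (with $\ell\le k$) be the inneighbours of $x$. By induction, $D\setminus x$ has an $n$-fibre colouring with $c:=\left\lceil\frac{m}{n}\left\lceil\frac{k}{n}\right\rceil+\frac{k}{n}\right\rceil$ colours; I want to extend it by colouring the arcs $y_1x,\dots,y_\ell x$. A colour $\omega$ is \emph{forbidden at $y_i$ for the arc $y_ix$} if adding $y_ix$ with colour $\omega$ would violate condition (iii) at $y_i$, i.e.\ if $\omega$ already appears on an arc leaving $y_i$ carrying a label different from $l(y_ix)$; the key quantity to control is, for each $i$, the number $a_i$ of colours available at $y_i$. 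I would bound the total "leaving capacity used'' at $y_i$ by the same computation as in Proposition~\ref{lambdainf}: if $P_j$ is the set of colours appearing on exactly $j$ arcs entering $y_i$ in the current (partial) colouring, then $\sum_j (n-j)|P_j|=cn-(\text{indegree of }y_i)\ge cn-k$, and the number of $(\text{colour},\text{label})$ slots already consumed on the out-side is at most this; hence at least $\left\lceil k/n\right\rceil$ colours remain usable for the new outgoing arc at each $y_i$, so $a_i\ge \lceil k/n\rceil \ge \lceil \ell/n\rceil$.

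The remaining task is a matching/flow argument: we must pick, for each $i$, a colour $c(y_ix)\in A_i$ (the available set at $y_i$, $|A_i|\ge\lceil \ell/n\rceil$), subject to the constraint at $x$ that each colour is used on at most $n$ of the arcs entering $x$ (condition (ii) is automatically satisfiable by assigning distinct fibres once no colour is overused, since $\ell\le k$ and... actually one needs the multiplicity-$n$ bound, which is condition $in(x,\omega)\le n$ as $x$ is a sink). This is precisely a degree-constrained assignment: build a bipartite graph with the $\ell$ arcs on one side and the $c$ colours on the other, arc $y_ix$ joined to the colours in $A_i$, and ask for a subgraph in which every arc-vertex has degree $1$ and every colour-vertex has degree $\le n$. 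By a Hall-type (defect) argument this exists provided every set $T$ of arcs satisfies $n\cdot|\bigcup_{i\in T}A_i|\ge |T|$; since each $|A_i|\ge\lceil\ell/n\rceil\ge \ell/n\ge |T|/n$, this holds. So the extension succeeds.

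There is, however, a subtlety I expect to be the main obstacle, and which is presumably why the authors isolate the constraint $m\ge n$ and why the proof may instead re-derive the bound globally rather than by this naive induction: when several of the $y_i$ coincide is impossible (they are distinct inneighbours), but a single vertex $y$ may have \emph{already} been assigned outgoing colours on arcs of the same label $l(y_ix)$ at earlier induction steps, so the relevant count at $y$ is not just its indegree but how the previous rounds filled its out-capacity. One must therefore carry a stronger induction hypothesis, e.g.\ that the colouring produced respects, at \emph{every} vertex $v$, the sharp bound $\sum_i |C_i(v)|\le cn-d^-(v)$ coming from the $P_j$ decomposition, and verify this invariant is maintained when the new arcs are added at the $y_i$ — the point being that adding one outgoing arc at $y_i$ of an \emph{already used} label costs nothing, and of a new label costs one slot, which the slack $\lceil k/n\rceil\ge \ell/n$ absorbs because $y_i$ still has at least one free fibre on the in-side (here $m\ge n$ ensures the label budget never becomes the binding constraint). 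Making this invariant precise and checking it is preserved is the real content; everything else is the routine Hall argument above.
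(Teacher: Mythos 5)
Your high-level plan is right --- induction along a topological order, maintaining reserved out-capacity at each processed vertex --- and this is indeed the paper's approach. You also correctly identify the crux: the bare inductive hypothesis ``$D\setminus x$ has an $n$-fibre colouring with $c$ colours'' is too weak, since that colouring may saturate the out-capacity of the $y_i$'s before you ever get to colour $y_ix$. But the step where you claim ``at least $\lceil k/n\rceil$ colours remain usable for the new outgoing arc at each $y_i$'' does not follow from the slot-count you give: $\sum_j(n-j)|P_j|=cn-d^-(y_i)$ bounds the total number of out-slots at $y_i$, but all of those slots could already have been consumed (e.g.\ every colour could already be carrying $n-\text{in}(y_i,\cdot)$ distinct labels out of $y_i$), in which case zero colours are usable for a new label. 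So there is a genuine gap: the claim needs the stronger invariant to be \emph{stated and carried}, not just gestured at.

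The invariant you propose, $\sum_i|C_i(v)|\le cn-d^-(v)$, is not the right one (and the inequality points the wrong way for what you want): what the paper actually maintains, alongside the partial colouring, is for each processed vertex $v$ and each label $i$ a reserve set $C_i(v)$ of exactly $\left\lceil k/n\right\rceil$ colours, with the property that a colour $\omega$ appearing on $j$ in-arcs at $v$ lies in at most $n-j$ of the sets $C_1(v),\dots,C_m(v)$; this guarantees that \emph{no matter how many} future out-arcs of label $i$ leave $v$ and are coloured inside $C_i(v)$, the constraint $\text{in}(v,\omega)+\text{out}(v,\omega)\le n$ holds. The paper builds the $C_i(v_q)$ by the explicit round-robin device $C_i(v_q)=\{c_a : a\equiv i \pmod m\}$ over the multiset in which each colour of $P_j$ is repeated $n-j$ consecutive times; the hypothesis $m\ge n$ is exactly what guarantees each colour lands at most once per $C_i$, and the count $N=cn-k\ge m\lceil k/n\rceil$ guarantees each $|C_i(v_q)|\ge\lceil k/n\rceil$. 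Your proof proposal neither states this pointwise ``at most $n-j$ sets'' condition nor gives a construction achieving it, and without it the Hall-type assignment at $x$ cannot even get started because the lists at the $y_i$ are not guaranteed to be large. So the approach is the same in spirit, but the central construction that makes it work is missing.
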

 
\begin{proof}
Since $D$ is acyclic, its vertex set admits an ordering $(v_1, v_2, \dots ,v_p)$
such that if $v_jv_{j'}$ is an arc, then $j<j'$.

By induction on $q$,  
we shall find an $n$-fibre colouring of $D[\{v_1, \dots ,v_q\}]$ together
with sets $C_i(v_{r})$ of $\lceil \frac kn \rceil$  (potential) colours, for $1\leq i\leq m$ and $1\leq r\leq q$,  
 such that 
assigning a colour in $C_i(v_{r})$ to an arc labelled $i$ 
leaving $v_{r}$ (in the future) will fulfil the condition of
an $n$-fibre colouring at $v_{r}$.

Starting the process is easy. We may let $C_i(v_1)$'s to be any family of 
$\lceil \frac kn \rceil$-sets such that a colour appears in at most $n$ of them.

Suppose now that we have an $n$-fibre colouring of $D[\{v_1, \dots ,v_{q-1}\}]$, and that, for any $1\leq i\leq m$ 
and $1\leq r \leq q-1$, the set $C_i(v_r)$ is already determined.
Let us colour the arcs entering $v_q$.  Each of these arcs $v_{r}v_q$
may be assigned one of the $\lceil \frac kn \rceil$ colours of
$C_{l(v_{r}v_q)}(v_{r})$. Since a colour may be assigned to $n$ arcs (using different fibres)
entering $v_q$, one can assign a colour and a fibre to each such arc.
It remains to determine the sets $C_i(v_q)$, $1\leq i\leq m$.  

For $0\leq j\leq n$, let $P_j$ be the set
of colours assigned to $j$ arcs entering $v_q$.
Let $N=\sum_{i=0}^n (n-j) |P_j|$ and 
$(c_1, c_2, \dots , c_N)$ be a sequence of colours such that
each colour of $P_j$ appears exactly $n-j$ times and consecutively.  
For $1\leq i\leq m$, set $C_i(v_q)=\{c_a \tq a \equiv i \mod m\}$. 
As $n\leq m$, a colour appears at most once in each $C_i(v_q)$.
Moreover, $N=n \left\lceil\frac{m}{n}\left\lceil \frac{k}{n}\right\rceil +   
\frac{k}{n} \right\rceil -k \geq m \left\lceil \frac{k}{n}\right\rceil$.
So for $1\leq i\leq m$,   $|C_i(v_q)|\geq  \left\lceil \frac{k}{n}\right\rceil$.
\end{proof}

Lemma~\ref{upperacyclic} shows that the lower bound of Proposition~\ref{lambdainf}
is tight  for acyclic digraphs. 
In fact, we conjecture that  the lower bound remains tight for digraphs in general:
\begin{conjecture}\label{lambdasup}
\hspace*{1cm} $\ds \lambda_n(m,k) =
\left\lceil\frac{m}{n}\left\lceil 
\frac{k}{n}\right\rceil +   \frac{k}{n} \right\rceil$
\end{conjecture}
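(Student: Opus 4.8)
The goal is to prove the matching upper bound: for every $m$-labelled digraph $D$ with $\Delta^-(D)\le k$ and $m\ge n$, show that $\lambda_n(D)\le c:=\left\lceil\frac{m}{n}\left\lceil\frac{k}{n}\right\rceil+\frac{k}{n}\right\rceil$; together with Proposition~\ref{lambdainf} this yields the conjectured equality. The scaffold should be an induction that extends Lemma~\ref{upperacyclic}. Recall that in the acyclic case one processes the vertices in a topological order and maintains, for each already-treated vertex $v$ and each label $i$, a reservoir $C_i(v)$ of $\lceil k/n\rceil$ colours still admissible for the future out-arcs of label $i$ at $v$; the identity $cn-k\ge m\lceil k/n\rceil$ is exactly what lets one refill the reservoirs to size $\lceil k/n\rceil$ at every step. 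For a general digraph I would form the condensation (the acyclic digraph of strongly connected components of $D$) and process the components in topological order, carrying the same reservoir invariant. When we reach a component $S$, every arc entering $S$ has its tail in an already-treated component, hence a fixed reservoir; so everything reduces to a single strongly connected component together with externally imposed lists on the in-arcs of its boundary vertices.

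\textbf{The component lemma.} The real content is thus the following statement: given a strongly connected $m$-labelled digraph $S$ with $\Delta^-(S)\le k$, where each vertex $v$ carries a set of at most $k$ colours forbidden for its in-arcs coming from outside $S$ (these are dictated by the reservoirs of the components feeding $S$ and by the labels of those arcs), one can $n$-fibre-colour all arcs of $S$ and all arcs entering $S$, and simultaneously build reservoirs $C_i(v)$ of size $\lceil k/n\rceil$ for every $v\in S$ and every label $i$. Two routes suggest themselves. The first imitates Lemma~\ref{decomp}: peel off a substructure — a spanning out-arborescence rooted at a vertex $v$ chosen to contain all arcs leaving $v$, or a bundle of several such arborescences — colour it with a controlled number of colours, and recurse on what remains, which has smaller maximum indegree and a strictly smaller condensation. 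The delicate bookkeeping is to make the "cost" of the peeled part match the decrease of $c$ as a function of $k$; since that bound moves in steps of size $\lceil k/n\rceil$, one should expect to peel in batches that lower $\Delta^-$ by $n$ at a time. The second route treats the reservoirs as lists and the per-label out-reservoirs as objects to be produced, and argues directly that the propagation of colour constraints around the circuits of $S$ always succeeds, generalizing Lemma~\ref{cycle}. In either route one first reduces from arbitrary $k$ to a small value by partitioning each in-neighbourhood $N^-(v)$ into blocks of size at most $n$, exactly as in the reductions used for Conjectures~\ref{c:central} and \ref{diregular}; the subtlety is to do this without sacrificing the tight constant.

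\textbf{The main obstacle.} The hard part is a global-consistency (fixed-point) phenomenon inside a strong component: one cannot process its vertices one by one, because the reservoir handed to a vertex $v$ by the inductive colouring must be disjoint enough from the colours that the same colouring is forced to put on the in-arcs of $v$ — yet those in-arcs are constrained by reservoirs that themselves depend on $v$. This is precisely where the hypothesis $m\ge n$, and in fact $n\ge 2$, must enter. For $n=m=1$ the statement is false — an odd circuit all of whose vertices receive the same two-element list has no valid colouring by Lemma~\ref{cycle}, and indeed $\lambda_1(1,1)=3$, not $2$ — so the claim that "circuits cost nothing" can only hold once a colour may be reused ($n\ge 2$), which should give just enough slack to absorb the parity obstruction; turning this intuition into an invariant that survives the condensation recursion is, I expect, the crux. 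A last, more routine point: even in the acyclic step, assigning colours and fibres to the in-arcs of a vertex out of their reservoirs is a defect form of list edge-colouring, so a Hall-type condition must be checked at each step and the refilled reservoirs re-certified to have size $\lceil k/n\rceil$ — mechanical, but it must be threaded carefully so that it composes cleanly with the component lemma.
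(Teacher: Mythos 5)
The statement you are asked to prove is \emph{Conjecture}~\ref{lambdasup}, not a theorem: the paper itself offers no proof. What the paper establishes is the lower bound (Proposition~\ref{lambdainf}), the matching upper bound for \emph{acyclic} $m$-labelled digraphs when $m\ge n$ (Lemma~\ref{upperacyclic}), and a general upper bound that misses the conjectured value by an additive $O\!\left(m^2\log k/n\right)$ term (Theorem~\ref{fibmulti}, built on the Lov\'asz Local Lemma via Theorem~\ref{Guiduli2}). So there is no ``paper's own proof'' to compare against — your task, correctly understood, is to close an open problem.

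Your proposal does not do that; it is, by your own wording (``one should expect,'' ``I expect, the crux''), a plan with an acknowledged hole exactly where the difficulty lies. The hole is real. Processing the condensation in topological order and invoking a ``component lemma'' reduces everything to a single strongly connected component, and there the reservoir bookkeeping that drives Lemma~\ref{upperacyclic} breaks: the reservoir $C_i(v)$ must avoid the colours forced onto the in-arcs of $v$, but in a strong component those in-arcs come from vertices whose reservoirs in turn depend on $v$. You name this fixed-point obstruction but do not resolve it, and nothing in the paper resolves it either — that is precisely why Theorem~\ref{fibmulti} routes the cyclic part $D_e$ through the probabilistic Theorem~\ref{Guiduli2} and accepts a logarithmic overhead rather than matching the acyclic bound. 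Your intuition that ``$n\ge 2$ gives slack to absorb the parity obstruction'' is plausible but unsubstantiated; for comparison, even the $n=m=1$ analogue (Conjecture~\ref{c:central}, i.e.\ $dst(D)\le 2\Delta^-$ for $\Delta^-\ge 2$) is still open in the paper, and your correct observation that $\lambda_1(1,1)=3$ shows the formula genuinely fails at $k=1$ (odd circuits), so any honest attempt must locate where the extra colour becomes unnecessary as $k$ or $n$ grows. In short: the scaffolding (condensation plus reservoirs) is a reasonable restatement of what a proof would need, and agrees in spirit with the paper's partial results, but the ``component lemma'' is asserted, not proved, and it carries the entire content of the conjecture.

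A smaller technical note: be careful with the reduction ``partition $N^-(v)$ into blocks of size at most $n$'' that you borrow from the discussion around Conjectures~\ref{c:central} and~\ref{diregular}. That reduction is sound when the target bound is additive in the pieces; here the conjectured bound $\left\lceil\frac{m}{n}\left\lceil\frac{k}{n}\right\rceil+\frac{k}{n}\right\rceil$ is not additive across a partition of the indegree into blocks of $n$ (ceilings do not distribute), so the ``tight constant'' you worry about sacrificing is indeed sacrificed unless you do substantial extra work to re-merge the pieces.
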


We now establish an upper bound on $\lambda_n(m,k)$ for general
digraphs.  Note that the graphs $G_{n,m,k}$ requires lots of
colours but have very large outdegree. We first give an upper bound on $\lambda_n(D)$
for $m$-labelled digraphs with bounded in- and outdegree. 
In this case, on can show that only "few"
colours are needed. This is derived from the following theorem of Guiduli. 

\begin{theorem}[Guiduli~\cite{Gui97}]\label{guiduli}
If $\Delta^-, \Delta^+\leq k$, then $dst(D)\leq k +20\log k + 84$.
Moreover, $D$ admits a directed star colouring with  
$k+20 \log k + 84$ colours such that 
for each vertex $v$, there are at most $10 \log k + 42$ colours assigned 
to its leaving arcs. 
\end{theorem}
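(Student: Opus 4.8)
The plan is to argue probabilistically, following Algor and Alon~\cite{AlAl89} and as carried out by Guiduli~\cite{Gui97}, via the Lov\'asz Local Lemma. The reason the hypothesis $\Delta^-,\Delta^+\le k$ is the right one is that it forces the ``colour of an arc'' to depend on the colours of only $O(k)$ other arcs, which is few enough for the Local Lemma to close the argument with only an additive $O(\log k)$ loss over the trivial lower bound $dst(D)\ge\Delta^-$ (every colour class is a galaxy, so the $\Delta^-$ arcs entering a vertex of maximum indegree need $\Delta^-$ distinct colours). Set $r=10\log k+42$ and use a palette of size $k+2r$. The whole design is dictated by the ``Moreover'' clause: we must spend only $r$ colours on the arcs leaving each vertex, so for every vertex $v$ I would first choose, uniformly at random, a partition of $N^+(v)$ into $r$ blocks, with the convention that the arcs of a common block always receive a common colour. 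This automatically makes every colour class a galaxy and handles all tail-constraints of condition $(i)$ coming from $v$, and it reduces the problem to choosing, simultaneously for all vertices, an actual colour for each of the $r$ out-blocks of each vertex --- equivalently, a colouring of the in-stars.

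What then remains is to colour these blocks so that the arcs entering any vertex $w$ get pairwise distinct colours (condition $(ii)$) and avoid the $\le r$ colours used on the out-blocks of $w$ (the remaining part of condition $(i)$). I would do this with a random colouring of the blocks and invoke the symmetric Local Lemma. The bad events are of two kinds: two in-arcs of some vertex receiving the same colour; and an arc $uv$ receiving the colour of some out-block of $v$ (symmetrically, of some in-arc of $u$). A vertex $w$ has $\le k$ in-arcs and $\le r$ out-blocks, so with a suitable random choice of colours each bad event has probability $O(1/(k+2r))$, while --- because in- and outdegrees are at most $k$ --- each bad event is mutually independent of all but $O(k)$ others. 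Hence the Local Lemma inequality $e\,p\,(d+1)<1$ is satisfied as soon as the palette has $k+\Theta(\log k)$ colours, and tracking the constants as in~\cite{Gui97} yields the explicit bound $k+20\log k+84$.

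The step I expect to be the real obstacle is precisely this last dependency bookkeeping: one must set up the random colouring of the blocks with enough independence that, after conditioning on the choices made at a few neighbouring vertices, each arc still has $\Omega(k)$ free colours available, so that the ``per bad event'' probability is $O(1/k)$ rather than merely $O(1/(k+2r))$ with no slack --- this is what makes the additive term come out as $\Theta(\log k)$ and not $\Theta(k)$, and it is the standard subtlety in this family of Local-Lemma colouring arguments. Once the Local Lemma produces a valid outcome, conditions $(i)$ and $(ii)$ both hold, at most $k+2r=k+20\log k+84$ colours have been used, and by construction each vertex uses at most $r=10\log k+42$ distinct colours on its leaving arcs, which is the ``Moreover'' statement.
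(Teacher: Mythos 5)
Your high-level plan — probabilistic argument via the Local Lemma, using $\Delta^-,\Delta^+\le k$ to bound dependencies — is the right family of ideas, and you correctly flag "the dependency bookkeeping" as the crux. But the specific execution you propose (randomly colour the out-blocks, then LLL on colour-conflict events) does not work, and the reason is exactly the obstacle you half-anticipate.

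First, the dependency degree is $\Theta(k^2)$, not the $O(k)$ you assert. Take the event "$uw$ and $u'w$ get the same colour". It is determined by the random partition and colour choices at $u$ and $u'$. The choices at $u$ also feed into every other bad event involving any out-neighbour $y$ of $u$: for each of the up to $k$ such $y$, there are up to $k-1$ in-arc collision events and up to $r$ in-vs-out-block events at $y$. So a single event shares randomness with roughly $k(k+r)\approx k^2$ others. Second, even granting $d=O(k)$, the per-event probability in your scheme is only about $1/(k+2r)$, so $e\,p\,(d+1)$ is $\Theta(1)$ at best; to push it below $1$ you would need $r=\Omega(k)$, not $\Theta(\log k)$. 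The $\Theta(\log k)$ overhead can only come from bad events whose probability is \emph{exponentially} small in $r$, and a direct uniform colouring does not give you that. No weighting in the asymmetric Local Lemma rescues $p\approx 1/k$ against $d\approx k^2$.

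The paper does not actually reprove Theorem~\ref{guiduli}; it cites it and proves the generalisation Theorem~\ref{Guiduli2} "by identical methods". That method is a two-stage argument that sidesteps both difficulties above. One does \emph{not} randomly colour arcs. Instead, one randomly assigns to each vertex a list of $c=\lceil 5\log k+20\rceil$ potential out-colours (disjoint lists per label in the $m$-labelled version), and the bad event at a vertex $x$ is that the family of lists of its in-neighbours fails to have a transversal. By the Alon--McDiarmid--Reed theorem (Theorem~\ref{Bruce}) this failure probability is at most $k^{3-c/2}$, i.e.\ \emph{superpolynomially} small in $c$, which comfortably beats the $k^2$ dependency degree in the Local Lemma. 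Once good lists exist, the colouring is produced deterministically from the transversals; the remaining conflicts are only between an arc entering $v$ and an arc leaving $v$, and there are at most $mc$ of them per vertex, so they are re-coloured with a few extra colours using the $dst(D)\le 2\Delta^-+1$ bound (Lemma~\ref{transversal} makes this precise). This structure — lists, transversals, Local Lemma on the \emph{existence of a transversal} rather than on colour collisions — is the essential missing idea in your proposal.
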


As we will show below, Guiduli's Theorem can be extended to the following statement for $m$-labelled digraphs.

\begin{theorem}\label{Guiduli2}
Let $f(n,m,k)=\ds \left\lceil \frac{k+ (10m^2+5)\log k + 80m^2 + m  + 21}{n}\right\rceil$
and let $D$ be an $m$-labelled digraph with $\Delta^-, \Delta^+ \leq k$.
Then $\lambda_n(D)\leq f(n,m,k)$.  Moreover, $D$ admits an
$n$-fibre colouring with $f(n,m,k)$ colours such that for each
vertex $v$ and each label $l$, the number of colours assigned to the
arcs labelled $l$ and leaving $v$ is at most
$g(m,k)=\left\lceil(10m+5)\log k + 40m + 21\right\rceil$. 
\end{theorem}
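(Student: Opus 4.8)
The plan is to reprove Theorem~\ref{Guiduli2} by generalising the probabilistic argument behind Guiduli's Theorem~\ref{guiduli}, which is a Lov\'asz-Local-Lemma argument of the Algor--Alon type. The one genuinely new phenomenon is the third $n$-fibre condition, which forces out-arcs of \emph{different} labels at a common tail to differ; the device to cope with it is to make the random ``outgoing palettes'' depend on a pair \emph{(vertex, label)} rather than on a vertex, and to draw the palettes $S_{v,1},\dots,S_{v,m}$ sitting at a fixed vertex $v$ so that they are pairwise disjoint. Since $v$ may carry up to $m$ such palettes, this is what inflates the logarithmic overhead from Guiduli's $O(\log k)$ to the $O(m^{2}\log k)$ appearing in $f(n,m,k)$, and the per-palette size $g(m,k)$ is exactly what will bound the number of colours placed on the $l$-labelled arcs leaving a vertex. (One may also assume without loss of generality that all $l$-labelled arcs leaving $v$ receive one and the same colour $\rho(v,l)$, since they have pairwise distinct heads; this reduces the whole problem to colouring the set $W$ of pairs $(v,l)$ so that, at every vertex $w$, every colour occurs at most $n$ times on the pairs indexed by the arcs into $w$ together with the pairs $(w,\cdot)$ — but this simplification is optional.)

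Concretely I would: (1) assign to each pair $(v,l)$ an independent uniformly random palette $S_{v,l}\subseteq\{1,\dots,C\}$ of size $g(m,k)$, subject only to the palettes at a common vertex being pairwise disjoint; (2) choose, for every arc, a colour so that $l(vw)$-labelled arcs leaving $v$ use colours of $S_{v,l(vw)}$; (3) for every pair $(v,l)$ declare the bad event $B_{v,l}$ that \emph{no} colour of $S_{v,l}$ can be used at most $n$-fold around $v$ given the colours already forced by the arcs into $v$ and the other out-classes of $v$. Because a colour is allowed $n$ times at a vertex rather than once, $\Pr[B_{v,l}]$ is controlled by a binomial tail of the form $\Pr[\mathrm{Bin}(\cdot,\cdot)\ge n]$, and it is precisely this relaxation that produces the division by $n$ in both $f(n,m,k)$ and $g(m,k)$. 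Each $B_{v,l}$ is mutually independent of all but $O\big((k+1)(k+m)\big)$ other bad events — the pairs sharing one of the at most $k+1$ ``local sets'' (arcs into $w$, plus $\{(w,\cdot)\}$) that contain $(v,l)$ — so the symmetric Local Lemma, $e\,p\,(d+1)\le 1$, goes through once $C\ge f(n,m,k)$ and $g(m,k)$ is chosen as stated; an avoiding choice of colours then gives the required $n$-fibre colouring with the announced bound on leaving colours.

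The hard part will be the bookkeeping of the Local Lemma estimate, where one has to control three types of local conflict simultaneously — two arcs into a common head, an arc into $v$ versus an out-class of $v$, and two out-classes of different labels at $v$ — while carrying along both the $m$-scaling (the $\le m$ pairwise disjoint palettes per vertex) and the $n$-scaling (``at most one'' becoming ``at most $n$'', hence binomial tails), and then verifying that the explicit constants $10m^{2}+5,\ 80m^{2},\ m,\ 21$ for $f$ and $10m+5,\ 40m,\ 21$ for $g$ are large enough to make $e\,p\,(d+1)\le 1$ hold. A minor additional point, handled exactly as in Guiduli's proof, is to dispose of vertices of degree below $k$ and of small values of $k$ (where the Local Lemma is vacuous and the $\log k$ terms must be read with care), by padding degrees up to $k$ with dummy arcs or by colouring such instances greedily.
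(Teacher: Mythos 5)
Your high-level picture is partly right (LLL, random out-palettes indexed by \emph{(vertex, label)}, pairwise disjoint palettes at a common vertex, hence the $m^{2}$ blow-up), but the mechanism you propose differs from the paper's and, on two concrete points, is wrong.

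First, the $1/n$ in $f(n,m,k)$ does \emph{not} come from a binomial tail bound in the probabilistic argument. The paper does everything for $n=1$, and then obtains the $n$-fibre colouring by the deterministic post-processing step of replacing colour $qn+r$ (with $1\le r\le n$) by colour $q+1$ on fibre $r$. Nothing probabilistic is $n$-dependent. Related to this, your claim that the division by $n$ appears ``in both $f(n,m,k)$ and $g(m,k)$'' is false: $g(m,k)=\lceil(10m+5)\log k+40m+21\rceil$ has no $n$ in it at all, because it bounds the number of distinct colours among arcs of a fixed label leaving a vertex, which is the same before and after the $qn+r\mapsto(q+1,r)$ relabelling. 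If you base your LLL estimate on $\Pr[\mathrm{Bin}(\cdot,\cdot)\ge n]$ you will derive an $n$-dependent $g$, which is not what the theorem asserts.

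Second, the place where the LLL is actually applied is not a bad event of the form ``no colour of $S_{v,l}$ works at $v$.'' The paper applies the Alon--McDiarmid--Reed lemma (Theorem~\ref{Bruce}) to the bad events $A_{x}$ that the family $\{L^{i}_{y}: yx\in E(D),\ yx\ \text{labelled}\ i\}$ at the \emph{head} $x$ fails to have a system of distinct representatives. Once these transversals exist, Lemma~\ref{transversal} produces the colouring deterministically: colour the in-arcs at each vertex by the transversal, remove the at most $mc$ in-arcs per vertex that collide with out-palettes, and recolour the removed arcs label by label using Theorem~\ref{2k+1} (each removed subdigraph has indegree $\le mc$, so $2mc+1$ extra colours suffice per label). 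Your one-shot LLL over $(v,l)$-bad events about ``colour availability given what is already forced'' has a circular flavour: what is ``already forced'' at $v$ is itself a random function of the choices elsewhere, so the events you describe are neither clearly local nor clearly well-defined without an ordering; the transversal-plus-clean-up architecture is precisely what avoids this. Finally, your parenthetical ``WLOG all $l$-labelled arcs leaving $v$ get the same colour'' is not a legitimate reduction: those arcs have different heads with different constraints, and in the paper's colouring arcs of the same label leaving $v$ routinely get different colours (some from $L_{v}^{l}$, some from the extra $2mc+1$ star-forest colours). In short, the outline has the right ingredients but the wrong assembly; the reduction to $n=1$, the use of transversals rather than direct colourability events, and the deterministic clean-up via Theorem~\ref{2k+1} are all essential and missing from your sketch.
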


 As one can notice, Theorem~\ref{Guiduli2} in the case $n=m=1$ is slightly better than Theorem~\ref{guiduli} (for $\Delta^-, \Delta^+\leq k$, Theorem~\ref{Guiduli2} gives $dst(D)\leq k +15\log k + 102$). But this is superficial and is only due to the upper bound given in Lemma~\ref{2k+1}, which is better than the upper bound $3\Delta$ used by Guiduli. Indeed, the methods are identical.

We recall the following definition: given a family of sets $\mathcal{F}=(A_i, i\in I)$, a {\it transversal} of $\mathcal{F}$ is a family of 
distinct elements $(t_i, i\in I)$ with $t_i \in A_i$ for all $i \in I$.

\begin{lemma}\label{transversal}
Let $D$ be an $m$-labelled digraph with $\Delta^-\leq k$. Suppose that
for each vertex $v$, there are $m$ disjoint lists $L_v^1, ..., L_v^m$
of $c$ colours each being a subset of $\{1,...,k+c\}$.  If for each
vertex $v$, the family $\bigl\{\:L^i_y \tq yx \in E(D) \mbox{\ \emph{and $yx$ is labelled $i$}}\:\bigr\}$ 
has a transversal, then there is a $1$-fibre colouring of $D$ with
$k+(2m^2+1)c+m$ colours such that for each vertex $v$ and each label $l$, 
at most $(2m+1)c+1$ colours are assigned to arcs labelled $l$ that leave $v$.
\end{lemma}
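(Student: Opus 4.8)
The plan is to convert the transversal hypothesis into a preliminary colouring, then repair its only defect. First, for every vertex $v$ fix a transversal $(t_e)_{e=yv\in E(D)}$ of the family $\{L_y^{l(e)} \tq e=yv\in E(D)\}$ (it exists by hypothesis) and set $\phi_0(e)=t_e$ for every arc $e$ of $D$; this is well defined because each arc is an in-arc of exactly one vertex, and it uses only colours from $\{1,\dots,k+c\}$. By the transversal property the arcs entering any vertex receive pairwise distinct colours; since the lists $L_v^1,\dots,L_v^m$ are disjoint and $\phi_0(vw)\in L_v^{l(vw)}$, two arcs leaving $v$ with different labels receive different colours; and for each vertex $v$ and label $l$ at most $|L_v^l|=c$ colours appear on arcs labelled $l$ leaving $v$. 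Hence $\phi_0$ meets every requirement of a $1$-fibre colouring except possibly that an arc entering some vertex $v$ and an arc leaving $v$ may share a colour.

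Call an arc $uv$ \emph{bad} if $\phi_0(uv)$ belongs to the set $\mathrm{Out}(v)$ of colours carried by the arcs leaving $v$. Since at most $c$ colours are used per label, $|\mathrm{Out}(v)|\le mc$, and as the arcs entering $v$ carry pairwise distinct colours, at most $mc$ of them are bad; in particular the bad arcs, seen as a sub-digraph of $D$, have maximum indegree at most $mc$. The idea is to keep the colour of every non-bad arc and recolour the bad ones with new colours; note that if $uv$ is non-bad then $\phi_0(uv)\notin\mathrm{Out}(v)$, so after recolouring the only conflicts that could still occur involve a recoloured arc.

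To recolour, use $m$ pairwise disjoint reserve palettes $R^1,\dots,R^m$, disjoint from $\{1,\dots,k+c\}$, a bad arc labelled $i$ being recoloured from $R^i$; we take $|R^i|=2mc+1$, so there are $2m^2c+m$ new colours altogether. Because new colours differ from old ones and distinct $R^i$'s are disjoint, the only constraints that remain are, at each vertex $v$: the recoloured arcs entering $v$ must be pairwise distinctly coloured, and, for each label $i$, a recoloured arc labelled $i$ entering $v$ must avoid the colours of the recoloured arcs labelled $i$ leaving $v$. For a fixed label $i$ this is a proper colouring, using the palette $R^i$, of an auxiliary graph $H_i$ on the bad arcs labelled $i$, in which two arcs are adjacent when they have the same head or when the head of one is the tail of the other. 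The bounded indegree of the bad sub-digraph is what keeps $H_i$ sparse enough — in every subgraph one locates a bad arc of small $H_i$-degree by taking one entering a vertex that has small outdegree (or is a sink) in the corresponding sub-digraph — so that $O(mc)$ colours, and with care exactly the $2mc+1$ colours of $R^i$, suffice; colouring each $H_i$ from $R^i$ and keeping the colours of the non-bad arcs produces the colouring $\phi$.

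It remains to verify that $\phi$ is a $1$-fibre colouring — the arcs entering $v$ are distinct since the non-bad ones use distinct old colours, the bad ones distinct new colours (pairs sharing the head $v$ are adjacent in $H_i$), and old and new colours differ; the in/out condition follows from the definition of bad arcs together with the head--tail adjacencies in $H_i$; arcs leaving $v$ with different labels differ since label $i$ only uses colours from the disjoint set $L_v^i\cup R^i$ — and to check the numerology: arcs labelled $l$ leaving any $v$ use at most $|L_v^l|+|R^l|=c+(2mc+1)=(2m+1)c+1$ colours, while the total is $(k+c)+\sum_{i=1}^m|R^i|=k+(2m^2+1)c+m$. The main obstacle is precisely the sparsity step for $H_i$: since $D$ may have unbounded outdegree, one must be certain the repair does not cascade and require more than $2mc+1$ new colours on a single label, and this is exactly what the bound $|\mathrm{Out}(v)|\le mc$ secures, as the badness of an arc entering $v$ is witnessed by one of at most $mc$ colours, keeping the digraph of bad arcs of bounded indegree and the conflict graphs $H_i$ under control.
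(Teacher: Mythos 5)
Your proposal follows exactly the paper's strategy: colour the entering arcs at each vertex using the transversal, observe that the only remaining conflicts are between arcs entering and leaving the same vertex, call those arcs ``bad,'' note the bad sub-digraph has indegree at most $mc$, split the bad arcs by label, and recolour each label class from its own fresh palette of $2mc+1$ colours. The palette arithmetic and the verification of all $1$-fibre conditions are correct.

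The gap is in the recolouring step, where you assert that a degeneracy argument on $H_i$ — ``locate a bad arc of small $H_i$-degree by taking one entering a vertex that has small outdegree (or is a sink)'' — gives $2mc+1$ colours ``with care.'' It does not. Fix any arc $uv$ in the bad sub-digraph $B_i$ for label $i$. Its $H_i$-degree is at most $(d^-(v)-1) + d^-(u) + d^+(v)$, and since $\Delta^-(B_i)\le mc$ the first two terms contribute at most $2mc-1$; but $d^+(v)$ is only bounded by choosing $v$ so that $d^+(v)\le d^-(v)\le mc$, which drives the total up to $3mc-1$. That is precisely the greedy bound of the paper's remark for multidigraphs, and it would yield $3m^2c$ extra colours, not $2m^2c+m$. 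To get $2mc+1$ per palette one cannot rely on a local degeneracy bound; one needs the stronger fact $dst(B_i)\le 2\Delta^-(B_i)+1$, which is Theorem~\ref{2k+1} of the paper (proved via the forest-plus-galaxy decomposition of Lemma~\ref{decomp}), and which the paper's proof of Lemma~\ref{transversal} invokes directly. Replacing your degeneracy sketch with a citation of Theorem~\ref{2k+1} closes the gap and recovers exactly the stated bound $k+(2m^2+1)c+m$; likewise the per-label count $|L_v^l| + (2mc+1) = (2m+1)c+1$ then follows as you computed.
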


\begin{proof}
Using the transversal to colour the entering arcs at each vertex, we
obtain a colouring with few conflicts. Indeed there is no conflict
between arcs entering a same vertex.  So the only possible conflicts
are between an arc entering a vertex $v$ and an arc leaving $v$.
Since arcs leaving $v$ use at most $mc$ colours (those of
$L_v^1\cup ...\cup L_v^m$), there are at most $mc$ arcs entering $v$
having the same colour as an arc leaving $c$.  Removing such entering
arcs for every vertex $v$, we obtain a digraph $D'$ for which the
colouring with the $k+c$ colours is a $1$-fibre colouring.  We now
want to colour the arcs of $D-D'$ with few extra colours. Consider a
label $1\leq l
\leq m$ and let $D'_l$ be the digraph induced by the arcs of $D-D'$
labelled $l$. Then $D'_l$ has indegree at most $mc$. By 
Theorem~\ref{2k+1}, we can partition $D'_l$ in $2m.c+1$ star forests. Thus $D$ can be 
$1$-fibre coloured with $k+c+m(2mc+1)$ colours.
Moreover, in the above described colouring, arcs labelled $l$ which leave a vertex $v$
have a colour in $L_v^l$ or corresponding to one of the 
$2mc+1$ star forests of $D'_l$. So at most $(2m+1)c+1$ colours are assigned 
to arcs labelled $l$ leaving $v$.
\end{proof} 

We will also need the following theorem.
\begin{theorem}[Alon, McDiarmid and Reed~\cite{ADR92}]\label{Bruce}
Let $k$ and $c$ be positive integers with $k\geq c \geq 5 \log k
+20$. Choose independent random subsets $S_1, \dots, S_k$ of $X=\{1,
\dots, k+c\}$ as follows. For each $i$, choose $S_i$ by performing $c$
independent uniform samplings from $X$. Then the probability that
$S_1, \dots, S_k$ do not have a transversal is at most $k^{3-\frac c
2}$
\end{theorem}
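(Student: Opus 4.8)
The plan is to reduce the statement to Hall's marriage theorem and then to bound, by a union bound, the probability that a ``Hall obstruction'' occurs. Recall that $(S_1,\dots,S_k)$ has a transversal if and only if $|\bigcup_{i\in I}S_i|\geq |I|$ for every $I\subseteq\{1,\dots,k\}$. So if no transversal exists, pick a \emph{minimal} violating set $I$ and put $t=|I|$ and $T=\bigcup_{i\in I}S_i$; by minimality $|T|=t-1$ and $S_i\subseteq T$ for every $i\in I$, and $t\geq 2$ because each $S_i$ is non-empty (being built from $c\geq 1$ samplings). Thus ``no transversal'' implies the event $E$: there are $t\in\{2,\dots,k\}$, a set $I$ of $t$ indices, and $T\subseteq X$ with $|T|=t-1$, such that $S_i\subseteq T$ for all $i\in I$.

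Now I estimate $\Pr[E]$. For a fixed $i$, the event $S_i\subseteq T$ says that each of the $c$ independent uniform samples defining $S_i$ lies in $T$, an event of probability $\bigl((t-1)/(k+c)\bigr)^{c}$; by independence of the $S_i$, $\Pr[\,S_i\subseteq T\ \forall i\in I\,]=\bigl((t-1)/(k+c)\bigr)^{ct}$. Summing over the $\binom{k}{t}$ choices of $I$ and the $\binom{k+c}{t-1}$ choices of $T$,
\[
\Pr[\text{no transversal}]\ \leq\ \Pr[E]\ \leq\ \sum_{t=2}^{k}\binom{k}{t}\binom{k+c}{t-1}\left(\frac{t-1}{k+c}\right)^{ct}.
\]

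It remains to show this sum is at most $k^{3-c/2}$, using $k\geq c\geq 5\log k+20$. The key technical point is that one must estimate the binomial coefficients by $\binom{n}{j}\leq(en/j)^{j}$ applied to the \emph{smaller} side, and must not replace $k+c$ by $k$ inside an exponent of order $ct$; with those precautions a short calculation gives, for $2\leq t\leq k/2$,
\[
\binom{k}{t}\binom{k+c}{t-1}\left(\frac{t-1}{k+c}\right)^{ct}\ \leq\ \frac{t-1}{k+c}\left(e^{2}\left(\frac{t-1}{k+c}\right)^{c-2}\right)^{t}.
\]
Since $(t-1)/(k+c)\leq 1/2$ and $c\geq 5\log k+20$ makes $e^{2}2^{-(c-2)}$ tiny, these terms are dominated by their values at the endpoints $t=2$ and $t=k/2$, and both endpoint values are far below $\tfrac12 k^{3-c/2}$ (for $t=2$ the term is only of order $k^{3-2c}$); so the partial sum over $2\leq t\leq k/2$ is at most $\tfrac12 k^{3-c/2}$. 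For $k/2<t\leq k$ one uses instead $\binom{k}{t}=\binom{k}{k-t}\leq(ek/(k-t))^{k-t}$ and $\binom{k+c}{t-1}=\binom{k+c}{k+c-t+1}\leq(e(k+c)/(k+c-t+1))^{k+c-t+1}$ with $k+c-t+1\geq c+1$, together with $\bigl((t-1)/(k+c)\bigr)^{ct}=\bigl(1-\tfrac{k+c-t+1}{k+c}\bigr)^{ct}\leq e^{-c(c+1)t/(k+c)}\leq e^{-c(c+1)/4}$; the last factor decays like $e^{-\Omega(c^{2})}$ and swamps the binomial product, which is at most $k^{O(c)}$, so again the partial sum is below $\tfrac12 k^{3-c/2}$. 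Adding the two partial sums yields the claimed bound.

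The step I expect to be the main obstacle is precisely this last estimation: it is elementary but delicate, because several tempting simplifications (dropping the factorials hidden in $\binom{k}{t}$ and $\binom{k+c}{t-1}$, or approximating $k+c$ by $k$ in the ``$ct$''-exponent) make the bound fail for large $k$. The hypotheses are calibrated for exactly this: the term $5\log k$ ensures that $(k+c)^{c}\approx k^{c}$ is beaten by $\bigl((t-1)/(k+c)\bigr)^{ct}$ throughout the range, the additive constant $20$ (i.e.\ $c\geq 20$) handles small $k$, and $k\geq c$ is used only to keep $k+c$ of the same order as $k$.
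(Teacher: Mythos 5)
The paper does not prove Theorem~\ref{Bruce} at all: it is imported verbatim from Alon, McDiarmid and Reed~\cite{ADR92}, so there is no ``paper's own proof'' to compare against. Your reduction to Hall's condition and the resulting union bound
\[
\Pr[\text{no transversal}]\ \le\ \sum_{t=2}^{k}\binom{k}{t}\binom{k+c}{t-1}\Bigl(\frac{t-1}{k+c}\Bigr)^{ct}
\]
is the natural and correct start, and the estimate you derive for $2\le t\le k/2$ is sound (though ``dominated by endpoints'' should really be ``bounded above by the larger endpoint value, times the number of terms'').

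The genuine gap is in the range $k/2<t\le k$. You bound the probability factor by $e^{-c(c+1)t/(k+c)}\le e^{-c(c+1)/4}$, using only $k+c-t+1\ge c+1$, and then assert that this ``swamps the binomial product, which is at most $k^{O(c)}$.'' The second assertion is false: it holds only when $t$ is within $O(c)$ of $k$. For $t$ around, say, $3k/4$, both $k-t$ and $k+c-t+1$ are of order $k$, so $\binom{k}{t}\binom{k+c}{t-1}=2^{\Theta(k)}$. Under the hypothesis the worst case is $c=\Theta(\log k)$, in which case $e^{-c(c+1)/4}=e^{-\Theta((\log k)^2)}$ is nowhere near enough to defeat $2^{\Theta(k)}$. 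The fix is to keep the full exponent $\frac{(k+c-t+1)\,c\,t}{k+c}$: for $t$ in the middle range $(k+c-t+1)=\Theta(k)$ and $t/(k+c)=\Theta(1)$, giving a decay of $e^{-\Theta(ck)}$, which does beat $2^{\Theta(k)}$ once $c$ exceeds a constant; only for $t=k-O(c)$ does the exponent drop to $\Theta(c^2)$, and there the binomial product is indeed $k^{O(c)}$, which is beaten precisely because $c\ge 5\log k+20$. So you need to split the upper range (or, better, estimate uniformly in the parameter $s=k+c-t+1$), rather than applying a single pair of bounds across all of $t>k/2$.
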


\noindent {\bf Proof of Theorem~\ref{Guiduli2}.\ }
It suffices to prove the result for $n=1$. Indeed we can extend a $1$-fibre colouring satisfying
the conditions of the theorem into an $n$-fibre colouring satisfying the conditions
by replacing all the colours $qn+r$ with $1\leq r\leq n$ by the colour $q+1$ on fibre $r$.

Let $c=\left\lceil 5 \log k + 20\right\rceil$.
We can assume $k\geq mc$.
For all vertices $x$, select $mc$ different ordered elements
$e_1,e_2,\cdots,e_{mc}$ independently and uniformly. For all $1 \leq
i \leq m$, let $L_x^i=\{e_{ci+1},\cdots,e_{c(i+1)}\}$.
Each set has the same distribution a set of $c$ elements chosen 
uniformly and independently.

Let $A_x$ be the event that the family $\bigl\{\:L^i_y \tq yx \in E(D) \mbox{\ and $yx$ is labelled $i$}\:\bigr\}$ fails to have a transversal. By Theorem~\ref{Bruce},  $P(A_x)
\leq k^{3-c/2}$. Furthermore, the event $A_x$
is independent of all $A_y$ for which there is no vertex $z$ such that
both $zx$ and $zy$ are in $E(D)$. It follows that the dependency graph for these events
has degree at most $k^2$, and so we can apply Lov\'asz Local Lemma to
obtain that there exists a family of lists satisfying conditions of Lemma
\ref{transversal}. This lemma gives the desired colouring.
{\hfill$\Box$ \par \vspace{11pt}

For general digraphs, when we do not have $\Delta^-,\Delta^+ \leq k$, we may use the following trick to obtain an upper bound. Any digraph $D$ may be decomposed into an acyclic digraph $D_a$ and
an Eulerian digraph $D_e$ (i.e., in $D_e$, for every vertex $v$, $d_{D_e}^-(V)=d_{D_e}^+(v)$).
(To see this, consider an Eulerian subdigraph $D_e$ of $D$ which
has a maximum number of arcs. Then the digraph $D_a =D -D_e$ is necessarily acyclic.) Hence by Lemma~\ref{upperacyclic} (applied to $D_a$) and Theorem~\ref{Guiduli2} (applied to $D_e$), we have 
$$ \mbox{if } m\geq n, \mbox{ then } \lambda_n(D)\leq \left\lceil\frac{m}{n}\left\lceil 
\frac{k}{n}\right\rceil +   \frac{k}{n} \right\rceil + f(n,m,k),$$ 
for $f(n,m,k)$ the function given in Theorem~\ref{Guiduli2}. But, as we will show now, it is possible to lessen this bound by roughly $\frac{k}{n}$.

\begin{theorem}\label{fibmulti}
If $m\geq n$, then
$$\lambda_n(m,k)\leq \left\lceil\frac{m}{n}\left\lceil \frac{k}{n}\right\rceil +   
\frac{k}{n} \right\rceil + 2m\frac{\left\lceil(10m+5)\log k + 40m + 21\right\rceil}{n}.$$
\end{theorem}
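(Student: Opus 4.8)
The plan is to combine the decomposition trick used just above the statement with the refined colouring of the acyclic part coming from Lemma~\ref{upperacyclic}, but to be more economical about how the Eulerian part is coloured: instead of colouring $D_e$ from scratch with a fresh palette of size $f(n,m,k)$, we will reuse the "leftover" colours that Lemma~\ref{upperacyclic} leaves unused at each vertex. First I would write $D=D_a\cup D_e$ with $D_a$ acyclic and $D_e$ Eulerian (as explained in the paragraph before Theorem~\ref{fibmulti}), so that $\Delta^-(D_e)=\Delta^+(D_e)\le k$ and $\Delta^-(D_a)\le k$.

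Next, inspect the proof of Lemma~\ref{upperacyclic} more carefully. There, each vertex $v_q$ of $D_a$ receives, for every label $i$, a set $C_i(v_q)$ of exactly $\lceil k/n\rceil$ potential colours, and the arcs entering $v_q$ are coloured using colours drawn from the $C_i$'s of earlier vertices, each colour used on at most $n$ entering arcs. The key observation to extract is that this colouring is far from saturating each vertex: at $v_q$, of the $\lambda_0:=\lceil\frac mn\lceil\frac kn\rceil+\frac kn\rceil$ colours, at most $k$ "fibre-slots" are consumed by entering arcs (counting multiplicity at most $n$ per colour), and the outgoing arcs of label $l$ use only the $\lceil k/n\rceil$ colours of $C_l(v_q)$. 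So there remains at each vertex a large reservoir of colours that are completely free both for entering and for leaving arcs. I would make this quantitative: show that at every vertex of $D_a$ the number of colours that are untouched by $D_a$'s colouring (in the $in(\cdot)+out(\cdot)\le n$ sense, for all labels simultaneously) is at least $\lambda_0 - k/n - m\lceil k/n\rceil\ge$ some slack — actually the cleanest route is to \emph{reserve}, inside the construction of Lemma~\ref{upperacyclic}, a common block $B$ of $\big\lceil 2m\,g(m,k)/n\big\rceil$ colours (with $g(m,k)=\lceil(10m+5)\log k+40m+21\rceil$) that are never used as potential colours for any $C_i(v)$ and never assigned to any arc of $D_a$; this only forces $\lambda_0$ to be large enough to still fit $m\lceil k/n\rceil$ disjoint potential-colour sets plus the $k$ entering slots, which it is, since the bound of Lemma~\ref{upperacyclic} is exactly $\lambda_0$ and the block $B$ is added on top.

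Then colour $D_e$ using only the colours of $B$. Apply Theorem~\ref{Guiduli2} to the $m$-labelled digraph $D_e$, which has $\Delta^-,\Delta^+\le k$: it yields an $n$-fibre colouring of $D_e$ on its own palette with the property that for each vertex $v$ and each label $l$, at most $g(m,k)$ colours are assigned to arcs labelled $l$ leaving $v$; hence at each vertex $v$ at most $m\,g(m,k)$ colours are used by outgoing arcs of $D_e$ and (since $D_e$ is Eulerian with indegree $\le k\le$ … — more simply, apply the same bound at $v$ viewed as a head, or just bound the incoming contribution crudely) a comparable number by incoming arcs, so in total at most $2m\,g(m,k)$ "colour$\times$fibre" interactions at $v$, which can be packed into $\lceil 2m\,g(m,k)/n\rceil=|B|$ colours on $n$ fibres. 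Since $B$ is disjoint from everything used by $D_a$, conditions $(i)$–$(iii)$ for the union colouring follow: a conflict can only be between two arcs of $D_a$ (handled by Lemma~\ref{upperacyclic}), two arcs of $D_e$ (handled by the $B$-colouring), or one of each — but those use disjoint colour sets. Summing palette sizes gives $\lambda_n(D)\le \lambda_0+\lceil 2m\,g(m,k)/n\rceil$, which is the claimed bound (absorbing ceilings as in the statement).

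The main obstacle I anticipate is the bookkeeping in the second step: Theorem~\ref{Guiduli2} as stated controls only the number of colours on arcs \emph{leaving} a vertex (per label), and for $D_e$ I also need control at heads to bound interactions of type $(i)$ and $(ii)$ at a vertex of $D_e$; this is where using that $D_e$ is Eulerian (so indegree $=$ outdegree $\le k$) and re-examining the proof of Theorem~\ref{Guiduli2}/Lemma~\ref{transversal} — where entering arcs are coloured via a transversal and thus pairwise distinctly at each vertex — lets me bound the entering contribution cheaply, so that $2m\,g(m,k)$ really is an upper bound on the total colour demand of $D_e$ at any vertex. The rest is the routine verification that reserving the block $B$ does not increase the palette needed for $D_a$ beyond $\lambda_0$, and that packing $t$ colour-demands onto $n$ fibres costs $\lceil t/n\rceil$ colours.
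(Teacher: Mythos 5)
There is a genuine gap, and it is exactly at the place you flagged as a bookkeeping worry: the disjoint-palette block $B$ of size $\left\lceil 2m\,g(m,k)/n\right\rceil$ is far too small to carry an $n$-fibre colouring of $D_e$. The digraph $D_e$ has indegree up to $k$, so at a vertex $v$ of $D_e$ the entering arcs alone force at least $\left\lceil d^-_{D_e}(v)/n\right\rceil\approx k/n$ distinct colours on $v$ (each colour can serve at most $n$ entering arcs). Your attempted fix — that Lemma~\ref{transversal} colours the entering arcs via a transversal ``and thus pairwise distinctly,'' which would ``bound the entering contribution cheaply'' — is backwards: pairwise distinct colours on the entering arcs means the number of colours used on them \emph{equals} the indegree, which is the expensive case, not the cheap one. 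Theorem~\ref{Guiduli2} bounds only the colour-count on \emph{leaving} arcs per label; it makes no small bound on the colour-count at heads, and no such bound can exist since it must be $\Omega(k/n)$. Consequently ``$2m\,g(m,k)$ colour$\times$fibre interactions at $v$'' is an underestimate by an additive $\Theta(k)$, and $D_e$ cannot be packed into $B$.

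The paper avoids this by \emph{not} using disjoint palettes: it first colours $D_e$ with Theorem~\ref{Guiduli2} using roughly $k/n+O(m^2\log k/n)$ of the colours from the single global palette, and then extends that same colouring to $D_a$ by a second pass along the acyclic order, reserving slightly larger sets $C_i(v)$ of size $\left\lceil k/n+m\,g(m,k)\right\rceil$. The point is that the $\sim k/n$ ``heavy'' block of colours is shared between $D_a$ and $D_e$: at each vertex the $D_e$ colouring blocks at most $k_e$ (colour,fibre) slots via its entering arcs and only $m\,g(m,k)$ colours via its leaving arcs, which still leaves enough room in the shared palette to colour the $k_a\le k-k_e$ entering $D_a$-arcs and to build the $C_i$'s. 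Your reserved-block idea would be sound if $D_e$ only needed $O(m\,g(m,k)/n)$ colours, but it does not; without the shared $k/n$ term the bound you obtain is wrong, and with it you would get roughly $\lambda_0 + k/n + O(m^2\log k/n)$, i.e.\ the weaker bound mentioned in the paragraph just before Theorem~\ref{fibmulti} rather than the theorem itself.
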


\begin{proof}
Let $D$ be an $m$-labelled digraph with $\Delta^-(D) \leq k$.
Consider a decomposition of $D$ into an acyclic digraph $D_a$ and an Eulerian digraph $D_e$.
 We first apply Theorem~\ref{Guiduli2} to find an $n$-fibre colouring of the arcs of $D_e$ with $f(n,m,k)$ colours such
that, in addition, at most $g(m,k)$ colours are assigned to the arcs leaving
each vertex.

\noindent We shall extend  the $n$-fibre colouring of $D_e$ to  
the arcs of $D_a$ in a way similar to the
proof of Lemma~\ref{upperacyclic}. I.e., we will assign to 
each vertex $v$, sets $C_i(v)$, $1\leq i\leq m$   
of $\lceil \frac kn +mg(m,k)\rceil$ colours such that an arc labelled $i$ leaving $v$ 
will be labelled using a colour in $C_i(v)$.

\noindent Let $(v_1, \dots ,v_n)$ be an ordering
of the vertices of $A$ such that if $v_jv_{j'}$ is an arc then $j<j'$.
We start to build the $C_i(v_1)$ with the colours assigned to the
leaving arcs of $v_1$ labelled $i$.  The vertex $v_1$ has at most $k$
entering arcs. Each of them forbid one {\it type} (colour, fibre). In the
colouring of $D_e$ induced by Theorem~\ref{Guiduli2}, there are at most $mg(m,k)$
types assigned to the arcs leaving $v_1$. So there are at least
$\left\lceil \frac{mk}{n^2} \right\rceil + \left\lceil \frac{k}{n}
\right\rceil + 2m\frac{g(m,k)}{n}-k-mg(m,k) \geq m\left\lceil
\frac{k}{n}\right\rceil+mg(m,k)$ types unused at vertex $v_1$. Since
$m\geq n$, we can partition these types into $m$ sets of size at least
$\frac{k}{n}$ such that no two types having the same colour are in the
same set. These sets are the $C_i(v_1)$.

Suppose that the sets have been defined for $v_1$ up to $v_{q-1}$, and that all the arcs $v_iv_j$ for $i<j<q$ have been assigned a colour. We now give a colour to each arc $v_iv_q$ for $i<q$.

There are $k_e$ arcs entering $v_q$ in $D_e$ which are already
coloured. So it remains to give a colour to $k_a \leq k - k_e$ arcs.
Each uncoloured arc may be assigned a colour in a list of size at
least $\left\lceil \frac{k}{n}+mg(m,k)\right\rceil$. This gives a
choice between $n\left\lceil \frac{k}{n}+mg(m,k)\right\rceil$
different types. $k_e$ types are forbidden by the entering arcs in $D_e$ while
at most $mg(m,k)$ types are forbidden by the leaving arcs in $D_e$. 
Hence, it remains at least
$n\left\lceil \frac{k}{n}+mg(m,k)\right\rceil - k_e -mg(m,k)\geq
k_a$ types for the entering arcs of $D_a$. 
So one can assign distinct available colours to each of the $k_a$ arcs entering $v_q$. We then build the $C_i(v_q)$ as we did for $v_1$.

\noindent Once this process is finished, we obtain an $n$-fibre colouring of $D$ using 
$\left\lceil \frac{mk}{n^2} \right\rceil + \left\lceil \frac{k}{n} \right\rceil + 2m\frac{g(m,k)}{n}$ colours. 

\end{proof}

Theorem~\ref{fibmulti} gives an upper bound on $\lambda_n(m,k)$ when $m\geq n$.
We now give an upper bound for the case $m<n$.

\begin{proposition}\label{upper}
If $m < n$ then $\lambda_n(m,k)\leq \left\lceil\frac{k}{n-m}\right\rceil$.
\end{proposition}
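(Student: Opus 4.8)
The plan is to proceed by induction on the number of vertices, processing the digraph in a way that lets us colour arcs entering one vertex at a time, much as in the proof of Lemma~\ref{upperacyclic}, but now exploiting the fact that we have enough slack ($n>m$) to absorb the constraints from leaving arcs without needing to pre-reserve colour sets of a prescribed size. Concretely, let $D$ be an $m$-labelled digraph with $\Delta^-(D)\le k$, and set $c=\lceil k/(n-m)\rceil$. I would first reduce to the case where $D$ is acyclic: any digraph decomposes into an acyclic part $D_a$ and an Eulerian part $D_e$, but here it is cleaner to note that it actually suffices to handle acyclic digraphs if we instead pick, at each inductive step, a vertex $v$ and colour its entering arcs using only the colours already committed on arcs leaving $v$ — so the natural move is to order the vertices $v_1,\dots,v_p$ so that arcs go from lower to higher index (possible exactly when $D$ is acyclic) and build the colouring forward; for the general case we again split off an Eulerian subdigraph. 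Let me describe the acyclic core.

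The key step is the following counting at a vertex $v_q$ when we come to colour its entering arcs. By the time we reach $v_q$, all arcs leaving $v_q$ are still uncoloured (they go to later vertices), so there is in fact \emph{no} constraint of type $(iii)$ or type $(i)$ coming from leaving arcs yet; the only constraints are type $(ii)$ among the $\le k$ arcs entering $v_q$, and type $(i)$ between an entering arc and the (already-coloured) arcs leaving the \emph{tails} of those entering arcs. This is where the argument must be set up carefully: for each arc $v_rv_q$, the colour/fibre chosen must, together with the future choices at $v_r$, respect the $n$-fibre condition at $v_r$. The clean way to manage this, following Lemma~\ref{upperacyclic}, is to maintain inductively, for every already-processed vertex $v_r$ and every label $i$, a set $C_i(v_r)$ of at least $\lceil k/(n-m)\rceil$ colours such that any future assignment of a colour from $C_i(v_r)$ to an arc labelled $i$ leaving $v_r$ keeps the condition at $v_r$ satisfied — and to choose, when colouring $v_rv_q$, its colour from $C_{l(v_rv_q)}(v_r)$. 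One then assigns fibres so that at most $n$ arcs entering $v_q$ share a colour (this is the type $(ii)$ requirement and the entering-side of type $(i)$), which is possible since $\le k \le nc$ arcs enter; and finally one must redefine $C_i(v_q)$ for all $i$. For the latter: the $\le k$ arcs just coloured into $v_q$ forbid at most $k$ types $(\text{colour},\text{fibre})$ at $v_q$, leaving at least $nc-k \ge (n-m)c \ge mc$ free types; since $n>m$ these free types can be split into $m$ classes $C_1(v_q),\dots,C_m(v_q)$, each of size $\ge c$ and each containing no two types with the same colour — exactly the invariant needed. This closes the induction for acyclic $D$.

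For a general digraph $D$, decompose $D=D_a\cup D_e$ with $D_a$ acyclic and $D_e$ Eulerian. Apply Theorem~\ref{Guiduli2}... actually here, since we want the \emph{small} bound $\lceil k/(n-m)\rceil$ with no additive logarithmic term, the better route is to observe that the whole argument above never actually used acyclicity except to get a topological order; so I would instead argue directly by induction on $|A(D)|$: if $D$ has a sink $x$, remove $x$, colour $D\setminus x$ by induction together with the sets $C_i(\cdot)$, then colour the arcs entering $x$ using the sets $C_i$ at their tails exactly as above — no sets $C_i(x)$ need be produced since $x$ has no leaving arcs. If $D$ has no sink then every vertex has out-degree $\ge 1$, and one peels off a circuit... but circuits create a genuine cyclic dependency, which is the main obstacle. \textbf{The hard part will be handling circuits}: on a directed circuit the "colour entering arcs from the committed leaving-colour sets" scheme has no starting point. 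I expect to resolve this either by the Eulerian/acyclic split (colour $D_e$ first with slightly larger sets $C_i$ of size $\lceil k/(n-m)\rceil + (\text{small})$ so that the acyclic extension still goes through, absorbing the extra into the $n>m$ slack — this is essentially the trick of Theorem~\ref{fibmulti}), or more cheaply by noting $\Delta^-\le k$ forces, on any strongly connected piece, that we can break one arc per circuit and treat it with one of the $c-k/(n-m)\ge 0$... I would present the Eulerian-split version, checking that the slack $nc-k\ge mc$ is still enough to both seed and maintain the invariant after the pre-colouring of $D_e$, since $D_e$ contributes at most $\Delta^-_{D_e}(v)\le k$ further forbidden types at each $v$ and those are already accounted for in the bound $\Delta^-(D)\le k$.
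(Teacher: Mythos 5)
Your plan misses the one observation that makes this proposition essentially trivial, and as a result you build machinery (topological orderings, the sets $C_i(v)$, an Eulerian/acyclic split) that creates exactly the difficulties you then say you do not know how to resolve (``the hard part will be handling circuits''). Recall the definition of an $n$-fibre colouring: at each vertex $v$ and for each colour $\omega$ we need $in(v,\omega)+out(v,\omega)\le n$, where $out(v,\omega)$ is the number of \emph{labels} $l$ for which some arc labelled $l$ leaving $v$ gets colour $\omega$. Since $D$ is $m$-labelled, $out(v,\omega)\le m$ holds automatically, for every vertex, every colour, and every colouring whatsoever. With $m<n$ this leaves slack $n-m$ on the $in$ side. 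So there is no ``type-$(i)$ constraint from the tails'' to propagate, and there is no need to coordinate the choices made at different vertices: for each vertex $v$, independently, colour its $d^-(v)\le k\le (n-m)\lceil k/(n-m)\rceil$ entering arcs so that each colour is used at most $n-m$ times. Then $in(v,\omega)\le n-m$ and $out(v,\omega)\le m$ give $in(v,\omega)+out(v,\omega)\le n$, and we are done with $\lceil k/(n-m)\rceil$ colours. This is precisely the paper's proof.

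The genuine gap in your proposal is therefore a misreading of the constraint structure: you treat the arc $uv$ as imposing a constraint at $u$ that must be tracked via a reserved list $C_{l(uv)}(u)$, but because $out$ counts labels rather than arcs, the leaving-side contribution is bounded by $m$ no matter what colours leaving arcs receive. Once you see this, the inductive ordering is unnecessary, the sets $C_i(v)$ are unnecessary, and the circuit/Eulerian split you were worried about never arises. (That machinery \emph{is} needed in Lemma~\ref{upperacyclic} and Theorem~\ref{fibmulti}, but there $m\ge n$, so $out(v,\omega)\le m$ is not by itself enough and one really must control which colours appear on leaving arcs of each label.) I would also caution that, even if you did pursue your route, the step ``split $\ge mc$ free (colour, fibre) types into $m$ classes of size $\ge c$ each with pairwise distinct colours'' needs a Hall-type verification that you do not supply; but this is moot given that the whole construction can be bypassed.
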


\begin{proof}Let $D$ be an $m$-labelled digraph with $\Delta^- \leq k$.
We should show the existence of a proper $n$-fibre
colouring with $\left\lceil\frac{k}{n-m}\right\rceil$. 
For each vertex $v$, we give to each of its entering arcs a colour such that
none of the colours is used more than $n-m$ times. This is possible 
since there are at most $k\leq (n-m)\left\lceil\frac{k}{n-m}\right\rceil$ arcs entering $v$. 
So we now have  $in(v,\lambda) \leq
n-m$. 
Moreover each arc $vw$ is given a colour by $w$. Since $D$ is
$m$-labelled, a colour $\lambda$ can be used to colour an arc of at
most $m$ different labels, i.e., $out(v,\lambda) \leq m$.  Consequently
$in(v,\lambda)+out(v,\lambda) \leq n$. This gives 
a proper $n$-fibre colouring.
\end{proof}

\section{Concluding Remarks}\label{sec:conclusion}
 
One question arising naturally from the previous sections is the complexity of calculating $\lambda_n(D)$ for an $m$-labelled digraph $D$. As we will show in the first subsection, unsurprisingly, this problem is ${\cal NP}$-hard even for the simpler problem of directed star arboricity and even for restricted class of digraphs of in- and outdegree bounded by two.
 We end this section by showing how a similar approach to the one in Section~\ref{seccubic} allows us to give a very short proof of a recent result of Pinlou and Sopena~\cite{PiSo04}.

\subsection{Complexity}\label{complexity}

The digraphs with directed star arboricity one are the galaxies, so
one can decide in polynomial time if $dst(D)=1$.  Deciding whether $dst(D)=2$
or not is also easy since we just 
have to check that the conflict graph (with vertex set the arcs 
of $D$, two distinct arcs $xy,uv$ being in conflict when $y=u$ or $y=v$)
is bipartite. However for larger values, as expected, it is ${\cal NP}$-complete to 
decide if a digraph has directed star arboricity at most $k$. This is illustrated
by the next result:

\begin{theorem}
The following decision problem is ${\cal NP}$-complete:\\
\emph{{\sc Instance}: A digraph $D$ with $\Delta^+(D)\leq 2$ and $\Delta^-(D)\leq 2$.}\\
\emph{{\sc Question}: Is $dst(D)$ at most $3$?}
\end{theorem}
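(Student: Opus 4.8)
Membership in ${\cal NP}$ is immediate: a directed star $3$-colouring is a map $A(D)\to\{1,2,3\}$ of size polynomial in $D$, and conditions $(i)$ and $(ii)$ can be verified in polynomial time. So the problem lies in ${\cal NP}$, and it remains to prove ${\cal NP}$-hardness.

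For the hardness I would reduce from $3$-\textsc{Colourability}, which is ${\cal NP}$-complete even for $4$-regular graphs. Given such a graph $G$, the goal is to build a digraph $D$ with $\Delta^+(D),\Delta^-(D)\le 2$ such that $dst(D)\le 3$ if and only if $G$ is $3$-colourable. The construction rests on a few colour constraints that a directed star colouring with only three colours forces automatically: two arcs sharing a head get distinct colours (condition $(ii)$); the three arcs of a directed triangle get the three distinct colours, since consecutive arcs differ; a digon $uv,vu$ forces $\phi(uv)\ne\phi(vu)$; and a \emph{source} vertex of indegree two forces both of its out-arcs to take the unique colour absent from its two in-arcs. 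Moreover, if $S\to T$ is an arc between two such sources, then $\phi(S\to T)$ is the missing colour of $S$ and lies among the in-arc colours of $T$, so the missing colours of $S$ and $T$ differ. Using these primitives, each vertex $v$ of $G$ is replaced by a constant-size \emph{vertex gadget} (assembled from source vertices, digons and directed triangles) that exposes a distinguished ``colour arc'' $e_v$ whose colour is unconstrained in $\{1,2,3\}$ and is made available at one ``port'' for each of the four edges at $v$; each edge $uv$ is replaced by a short \emph{link gadget} between the two relevant ports that forces $\phi(e_u)\ne\phi(e_v)$. Setting $c(v):=\phi(e_v)$ then converts any directed star $3$-colouring of $D$ into a proper $3$-colouring of $G$; the converse needs only that a proper $3$-colouring of $G$ extend over the gadgets, which one arranges by building enough internal flexibility into each gadget.

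The main obstacle is keeping $\Delta^+,\Delta^-\le 2$ throughout, and in particular realising the ``fan-out'' from one vertex to its four incident edge-gadgets. The difficulty is that a vertex with an in-arc of colour $c$ can have no out-arc of colour $c$, so the value $\phi(e_v)$ cannot be passed through a relay vertex; the only usable mechanism is that sources emit exactly their missing colour, and that arcs and digons between carefully chosen sources impose the desired relations between their missing colours. Hence the vertex gadget must be laid out so that every one of its vertices has precisely the in- and out-slots it needs and no more; using a $4$-regular instance and an Eulerian orientation $\vec G$ (which makes every vertex have indegree $=$ outdegree $=2$) is what lets this slot bookkeeping balance, each port being a free slot of some source. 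Once the gadgets are fixed, the two directions of the equivalence are routine: forward by reading off the arcs $e_v$, backward by colouring the gadgets greedily, the built-in slack preventing conflicts. Designing these gadgets with all degrees bounded by $2$ while certifying the extension property is the part that requires real care.
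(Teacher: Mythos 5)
Your ${\cal NP}$-membership argument is fine, but the hardness reduction has a genuine gap: the vertex and link gadgets on which the whole argument rests are never exhibited, and you yourself point out that ``designing these gadgets with all degrees bounded by $2$ while certifying the extension property is the part that requires real care.'' That is precisely the crux of the problem, and a sketch that defers it leaves the ${\cal NP}$-hardness unproved. The fan-out difficulty you identify is real: a vertex $v$ of a $4$-regular $G$ must communicate a single colour value $\phi(e_v)$ to four independent edge gadgets, yet every vertex of $D$ has at most two out-slots, and a vertex with an in-arc of colour $c$ cannot relay $c$ outward. The mechanism you propose (a vertex with two in-arcs of distinct colours must put the third colour on all its out-arcs) gives you at most two copies of the ``missing'' colour, so simply chaining such vertices does not obviously scale to four ports, and there is an accounting problem about where the two distinctly coloured in-arcs at each relay come from without blowing up degree. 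None of this is impossible, but it must actually be done, and it is nontrivial. A smaller issue: you repeatedly call a vertex with indegree two a ``source,'' which contradicts the paper's (and standard) usage where a source has indegree zero; this suggests the bookkeeping is not yet nailed down.

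The paper sidesteps all of this by reducing from $3$-edge-colouring of $3$-regular graphs rather than $3$-vertex-colouring. That choice makes colours live on edges, so there is no fan-out at all: each edge of $G$ becomes essentially one arc of $D$, and the distinctness constraint at a vertex of $G$ reduces to forcing the three arcs meeting at the corresponding spot in $D$ to take three distinct colours. Orienting the cubic graph so every vertex has both in- and out-degree at least one, the $(2,1)$-type vertices already force their three incident arcs to be distinctly coloured by conditions $(i)$ and $(ii)$; only the $(1,2)$-type vertices need a small fixed gadget $H$ with one in-arc and two out-arcs, inside which any directed star $3$-colouring forces the three boundary arcs to receive all three colours, and conversely any three distinct precolours extend inside. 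This gives a short, self-contained, degree-respecting reduction. If you want to pursue your route, the missing step is an explicit vertex gadget realising the four ports under the degree cap, together with a proof that every proper $3$-colouring of $G$ extends over it; absent that, switching to the edge-colouring reduction is the cleaner path.
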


\begin{proof}
The proof is by a reduction from $3$-edge-colouring of $3$-regular graphs, which is known to be ${\cal NP}$-complete.

\noindent Let $G$ be a $3$-regular graph. It is easy to see that $G$ admits an orientation $D$ such that every vertex has in- and outdegree at least one (i.e., $D$ does not have neither sink nor source).

Let $D'$ be the digraph obtained from $D$ by replacing every vertex with indegree one 
and outdegree two by the subgraph $H$ depicted in Figure~\ref{gadget} which has also
one entering arc (namely $\vec a$) and two leaving arcs ($\vec b$ and $\vec c$).
\begin{figure}[!hbt]
\begin{center}
\epsfig{file=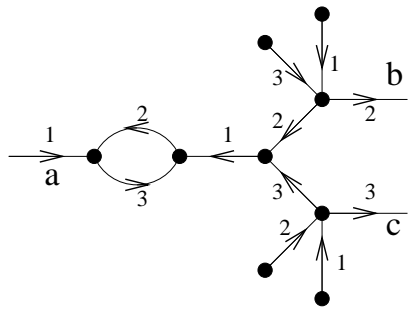, height=4cm}
\caption{The graph $H$ and one of its directed star $3$-colouring}\label{gadget}
 \end{center}
\end{figure}
It is quite easy to check that in any directed star $3$-colouring of $H$, the three arcs
$\vec a$, $\vec b$ and $\vec c$ get different colours.
Moreover, if these three arcs are precoloured with three different colours, we 
can extend this to a directed star $3$-colouring of $H$. Such a colouring with $\vec a$ 
coloured $1$, $\vec b$ coloured $2$ and $\vec c$ coloured $3$ is given in
Figure~\ref{gadget}.
Furthermore, in a directed star $3$-colouring, a vertex with indegree two and outdegree one must have its three incident arcs
coloured differently. So $dst(D')=3$ if and only if $G$ is $3$-edge colourable.
\end{proof}

\subsection{Acircuitic Directed Star Arboricity}

A directed star colouring is {\it acircuitic} if there is no {\it
bicoloured} circuits, i.e.,  circuits for which only two colours
appear on its arcs.  The {\it acircuitic directed star arboricity} of
a digraph $D$ is the minimum number $k$ of colours such that there
exists an acircuitic directed star $k$-colouring of $D$. 

In this last section, we give a short alternative proof of the following theorem.

\begin{theorem}[Pinlou and Sopena~\cite{PiSo04}]\label{pinlousopena}
Every subcubic oriented graph has acircuitic directed star arboricity at most $4$.
\end{theorem}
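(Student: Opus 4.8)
The plan is to mimic the strategy used in the proof of Theorem~\ref{degre3}, but to keep track of an additional acircuitic constraint. Recall that in an oriented graph there are no digons, so the underlying graph of any subcubic oriented graph $D$ is an ordinary graph of maximum degree at most three. As before, I would argue by contradiction: let $D$ be a counterexample that is minimal with respect to the number of arcs. First, I would dispose of sources exactly as in the proof of Theorem~\ref{degre3}: an arc leaving a source can be coloured last with any colour, and since a monochromatic or bicoloured circuit cannot use a source, removing a source and all its incident arcs preserves both the star property and the acircuitic property under extension. Hence $D$ has no source, and dually I can also clean up sinks of small degree. Next I would set up the same partition $D_1$ (vertices of indegree $\leq 1$) and $D_2$ (the rest), note again that $D_1$ has no even circuit (an even circuit in $D_1$ can be freed and re-extended using the two available colours on each of its arcs — and an even circuit through $D_1$ is too ``long''/low-degree to create a new bicoloured circuit once the rest of the colouring is fixed, because one can alternate colours so no chord closes a bicoloured cycle).

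The combinatorial heart would again be a conflict graph on the arc set $A' = A(D_1)\cup[D_2,D_1]$, with normal conflicts at shared tails/heads and selected conflicts at critical sets (vertices of $D_2$ with indegree $\geq 2$ from $D_1$, and odd circuits of $D_2$ with all inneighbours in $D_1$), but now with \emph{four} colours available. The key structural fact to re-establish is that this conflict graph has maximum degree at most three and contains no $K_4$ (the same $K_4$ analysis goes through, since a $K_4$ forces a directed triangle and the gadget-removal argument of Figure~\ref{complet} works identically — in fact more easily with four colours). By Brooks' theorem the conflict graph is $3$-colourable, which gives a partial colouring of $A'$; then I would invoke an acircuitic analogue of Lemma~\ref{extension} to extend it over $D'' = D - A'$. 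Because we now allow four colours, the list sizes in the extension lemma become more generous: final arcs get lists of size $\geq d^-(s)$ out of four colours, initial arcs get lists of size $\geq 3$, others get the full list $\{1,2,3,4\}$, and the pathological conditions (two initial arcs into one vertex, an odd circuit of initial arcs) are automatically satisfied because three colours out of four can never all coincide under the constraints imposed. One then colours terminal strong components (singletons or circuits) greedily as in Lemma~\ref{extension}, using the extra colour to break any would-be bicoloured circuit: on a circuit, after colouring the incoming arcs, there are always two free colours at each remaining arc, and with four colours in play one can always alternate the arc colours around the circuit so that no two colours suffice for the whole circuit.

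The main obstacle, and the only genuinely new ingredient beyond Theorem~\ref{degre3}, is controlling bicoloured circuits globally: a circuit of $D$ may use arcs from $A'$ (coloured via Brooks) and arcs from $D''$ (coloured via the extension lemma), so neither phase alone controls it. The way I would handle this is to observe that a bicoloured circuit must alternate consecutive arcs between the two colours, hence its length is even and every internal vertex of the circuit is both a head and a tail of arcs of the circuit; tracing such a circuit through $D_1$ and $D_2$ and using that $D_1$ has no even circuit, one shows any potential bicoloured circuit passes through $D_2$, where the critical-set/selected-conflict machinery plus the freedom of the fourth colour in the extension step forbids it — concretely, when colouring a terminal circuit of $D_2$ in the extension lemma, one has enough slack (lists of size $\geq 3$ on the entering arcs, four colours on the circuit arcs) to choose the circuit's colouring to avoid closing any bicoloured cycle with a previously coloured chord. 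I expect that verifying this ``fourth colour gives enough slack'' claim — essentially a strengthened version of Lemma~\ref{cycle} with a no-bicoloured-circuit clause and four colours — is where the real work lies, but it should be a short case analysis analogous to the proof of Lemma~\ref{cycle}.
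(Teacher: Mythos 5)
Your proposal takes a genuinely different route from the paper, and it has a real gap.

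The paper does \emph{not} re-run the machinery of Theorem~\ref{degre3}. Instead it partitions the vertices by \emph{outdegree}: $V_1$ is the set of vertices of outdegree at most one and $V_2 = V\setminus V_1$. It then spends the fourth colour in a very deliberate way: every arc of $M=[V_1,V_2]$ gets colour $4$, and in addition one arc of each circuit of $D[V_1]$ and of $D[V_2]$ gets colour $4$. The resulting set $M_4$ is a matching, and, crucially, $D-M_4$ is \emph{acyclic}. This single move eliminates the global bicoloured-circuit constraint almost entirely: any bicoloured circuit must use a colour-$4$ arc, and by the choice of the extra arcs it must in fact use an arc of $M$. The remaining danger is localised to circuits of the form $y_i x_j y_k x_l\cdots$ alternating between endpoints of $M$, and this is killed by Brooks' theorem applied to an auxiliary graph $H$ on the arcs from $Y$ to $X$ with an \emph{extra} adjacency rule (the paper's condition~$(b)$) designed precisely to forbid three consecutive arcs of such a circuit from using only two colours. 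The leftover acyclic digraph is then handled by Lemma~\ref{3acyclic}, a short list-colouring induction on a sink, which is much simpler than Lemma~\ref{extension}. So the fourth colour is not ``slack'' — it is a surgical device to make the rest of the digraph acyclic.

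The gap in your proposal is exactly where you flag it, and it is not a ``short case analysis.'' The critical-set / conflict-graph / Lemma~\ref{extension} machinery of Theorem~\ref{degre3} enforces only \emph{local} constraints: no two conflicting arcs get the same colour. A bicoloured circuit is a \emph{global} object: it may pass through arcs of $A'$ coloured by Brooks' theorem, then through several terminal components of $D''$ coloured in separate inductive steps of Lemma~\ref{extension}, then back again. When you colour a terminal circuit $C$ of $D''$ you have no visibility of the rest of a long bicoloured circuit threading through $C$ and through arcs already coloured in earlier steps; a ``strengthened Lemma~\ref{cycle} with a no-bicoloured-circuit clause'' cannot be stated as a property of $C$ alone, because the offending circuit is not contained in $C$. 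You would have to carry along, at each step of the induction, information about which bicoloured paths between boundary vertices have already been committed — which is a substantially heavier bookkeeping than anything in Lemma~\ref{extension}. Your re-derivation of ``$D_1$ has no even circuit'' has the same flaw: removing and re-inserting the even circuit may well avoid local conflicts, but a bicoloured circuit closed by the re-inserted arcs together with arcs outside $D_1$ is not excluded by alternation alone. (Also note that, in this problem, $D$ is an \emph{oriented} graph, so there are no digons; the paper uses this, whereas Theorem~\ref{degre3} allows digons — your statement that the $K_4$ analysis ``goes through identically'' glosses over the fact that the Figure~\ref{complet} configuration and its surrounding analysis were tailored to a different decomposition.) The paper even remarks that Theorem~\ref{degre3} \emph{can} be used to derive the result, but the proof it chooses to present avoids exactly the global-circuit bookkeeping your approach would need; your sketch does not supply that bookkeeping.
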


Indeed, it is possible to apply our Theorem~\ref{degre3} directly to derive this theorem. However, there is a shorter proof using the following lemma.

\begin{lemma}\label{3acyclic}
Let $D$ be an acyclic subcubic digraph.
Let $L$ be a list-assignment on the arcs of $D$
such that for every arc $uv$, $|L(uv)|\geq d(v)$.
Then $D$ admits a directed star $L$-colouring.
\end{lemma}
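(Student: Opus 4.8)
The statement to prove is Lemma~\ref{3acyclic}: an acyclic subcubic digraph $D$ with a list $L(uv)$ of size at least $d(v)$ on each arc $uv$ admits a directed star $L$-colouring. The natural approach is induction on the number of arcs (equivalently, on $|V(D)|+|A(D)|$), processing the digraph from its sinks. Since $D$ is acyclic it has a sink $x$; one wants to colour the arcs entering $x$ last, because the list-size condition $|L(uv)|\geq d(x)$ is exactly tailored to the bottleneck at a sink.

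First I would dispose of trivial reductions: if $D$ has an isolated vertex or a source with no constraints propagating, handle it directly; more importantly, if some arc $uv$ has a list strictly larger than needed we may shrink it, so we can assume $|L(uv)|=d(v)$ for every arc. Then pick a sink $x$ with in-neighbours $v_1,\dots,v_\ell$, $\ell=d^-(x)=d(x)\le 3$. Form $D'=D\setminus x$; the arcs $v_ix$ disappear. The remaining arcs keep their lists, except that we must be careful: removing $x$ lowers $d(x)$ to zero but does not change $d(v_i)$, so the condition $|L(uv)|\ge d(v)$ still holds for all arcs of $D'$. Apply induction to get a directed star $L$-colouring $c$ of $D'$. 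Now we must colour each arc $v_ix$ avoiding (i) the colour of the (at most one, since $d^+(v_i)\le 3$ but more to the point $v_i$ has bounded outdegree) arcs already coloured at $v_i$ that would create a directed path $v_i x$ following an arc into $v_i$ — wait, the relevant constraint is conflict with an arc $wv_i$ entering $v_i$? No: a directed star colouring forbids $\phi(wv_i)=\phi(v_ix)$ (condition $(i)$, consecutive arcs) and forbids $\phi(v_jx)=\phi(v_ix)$ for $j\neq i$ (condition $(ii)$, arcs sharing head $x$). It does \emph{not} forbid equality with arcs leaving $v_i$. So the arc $v_ix$ must avoid the colours of the arcs entering $v_i$ (there are $d^-(v_i)\le d(v_i)-1$ of them, since $d^+(v_i)\ge 1$ as $v_ix\in A(D)$) together with the colours chosen for the other $v_jx$.

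The core of the argument is thus a bipartite-style greedy/Hall argument at $x$: we have $\ell$ arcs $v_1x,\dots,v_\ell x$, each with an available palette $L(v_ix)\setminus\{\text{colours on arcs into }v_i\}$ of size at least $|L(v_ix)| - d^-(v_i) \ge d(x) - (d(v_i)-1)$. Since $D$ is subcubic, $d(v_i)\le 3$ and $d(x)=\ell$, so the palette has size at least $\ell - 2$ in the worst case, which is too small for a naive greedy when $\ell=3$. The fix is to order the $v_i$ by $d(v_i)$ (or to exploit that if $d(v_i)=3$ then $v_i$ contributes a full list of size $d(x)$ minus only the $\le 2$ in-arcs) and colour greedily in a good order, or to invoke Hall's theorem on the family of palettes: one checks that any $t$ of the arcs $v_ix$ have palettes whose union has size $\ge t$, using that the total deficiency $\sum (d(v_i)-1)$ is controlled and that lists have size $\ge d(x)=\ell$. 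I expect the main obstacle to be exactly this palette-counting at the sink when $\ell=3$ and one or more $v_i$ has degree $3$: one must verify the Hall condition (or find the right greedy order) rather than hand-wave it, and possibly treat a small number of degenerate configurations (e.g.\ several $v_i$ coinciding — impossible here since $D$ is simple — or $v_i$ having its two in-arcs coloured with two colours that between them kill most of a small list) separately. Everything else is bookkeeping.
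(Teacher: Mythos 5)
Your overall strategy (pick a sink, recurse on what remains, then extend) is on the right track, but the order in which you recurse and extend creates a genuine failure, not merely a bookkeeping issue. If you first obtain an \emph{arbitrary} directed star $L$-colouring of $D'=D\setminus x$ and only afterwards try to colour the arcs $v_1x,\dots,v_\ell x$, the Hall condition you hope to verify can in fact be violated, and no ordering of the $v_i$ saves you. Concretely, let $x$ have inneighbours $v_1,v_2,v_3$, each $v_i$ having indegree two via arcs $a_iv_i,b_iv_i$ from pairwise distinct sources, and put $L(\vec a)=\{1,2,3\}$ on every arc. All list-size hypotheses hold: $|L(v_ix)|=3=d(x)$ and $|L(a_iv_i)|=|L(b_iv_i)|=3=d(v_i)$. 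But $D'$ is a disjoint union of three cherries, and a perfectly legal recursive colouring assigns $a_iv_i\mapsto 2$, $b_iv_i\mapsto 3$ for every $i$; then every palette $L(v_ix)\setminus\{2,3\}$ is the singleton $\{1\}$, and no system of distinct representatives exists. The lemma is true for this digraph (e.g.\ colour $v_ix$ with colour $i$ and the cherry at $v_i$ with the two remaining colours), so the obstruction is not in the statement but in the strength of your induction hypothesis: the colouring of $D'$ that the recursion returns is under no obligation to be extendable.

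The paper's proof removes a single arc $xy$ (with $y$ a sink) rather than the entire in-neighbourhood of a sink, and \emph{chooses the colour $\omega$ of $xy$ before recursing}. It then deletes $\omega$ from the lists of all arcs in conflict with $xy$, namely those with head $x$ (conflict by consecutiveness) and those with head $y$ (conflict by shared head); the degrees of $x$ and $y$ each drop by one in $D'=D-xy$, which exactly compensates the one-element list shrinkage, so the hypothesis $|L'(uv)|\geq d_{D'}(v)$ is preserved and the induction closes. The idea your plan is missing is precisely this forward propagation of the colour constraint from $xy$ back to the arcs entering $x$: it is what forces the recursion to leave room for $xy$. Once you peel off one arc at a time with list shrinkage, the proof is a few lines and needs no Hall argument or case analysis at all.
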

\begin{proof}
We prove the result by induction on the number of arcs of $D$, the result holds
trivially if $D$ has no arcs.

\noindent Since $D$ is acyclic, it has an arc $xy$ with $y$ a sink.
Let $\omega$ be a colour in $L(xy)$.
For any arc $\vec a$ distinct from $xy$,
set $L'(\vec a)=L(\vec a)\setminus \{\omega\}$ if $\vec a$ is incident to $xy$ 
(and thus has head in $\{x,y\}$ since $y$ is a sink),
and  $L'(\vec a)=L(\vec a)$ otherwise.
Then in $D'=D-xy$, we have $|L'(uv)|\geq d(v)$ for any arc $uv \neq xy$.
Hence, by induction hypothesis, $D'$ admits a directed star $L'$-colouring that can be extended
to a directed star $L$-colouring of $D$ by colouring $xy$ with $\omega$.
\end{proof}

\noindent {\bf Proof of Theorem~\ref{pinlousopena}.\ }

Let $V_1$ be the set of vertices of outdegree at most one and
$V_2 =V\setminus V_1$. Every vertex of $V_2$ has outdegree at least two (and so indegree at most one).

\noindent Let $M$ be the set of arcs with tail in $V_1$ and head in $V_2$.
We colour all the arcs of $M$ with colour $4$. Moreover for every circuit $C$ in $D[V_1]$ or in $D[V_2]$, we choose an arc $\vec a(C)$ and colour it by $4$.
Note that, by definition of $V_1$ and $V_2$, the arc $\vec a(C)$ is not incident to any arc of $M$, and in addition, 
$C$ is the unique circuit containing $\vec a(C)$.
Let us denote by $M_4$ the set of all arcs coloured by $4$.
It is easily seen that $M_4$ is a matching and $D-M_4$ is acyclic.

\noindent We shall now find a directed star colouring of $D-M_4$ with colours $\{1,2,3\}$
that does not create any bicoloured circuit.
In any colouring of the arcs, if such a circuit existed, $4$ would be one of 
its colour because $D-M_4$ is acyclic, and moreover,
all the arcs of this circuit coloured by $4$ would be in $M$, because each arc in  $M_4\setminus M$ is in a unique circuit and this unique circuit has a unique arc coloured by $4$. Hence we just have to be careful when dealing with arcs in the digraph induced by the endvertices of the arcs
of $M$.

\noindent Let us denote the arcs of $M$ by $x_iy_i$, $1\leq i\leq p$, and  
set $X=\{x_i, 1\leq i\leq p\}$ and  $Y=\{y_i, 1\leq i\leq p\}$ (we have then $x_i\in V_1$ and $y_i\in V_2$).
Let $E'$ be the set of arcs with tail in $Y$ and head in $X$.
Let $H$ be the graph with vertex set $E'$ such that
an arc $y_ix_j$ is adjacent to an arc $y_kx_l$ if
\begin{itemize}
\item[$(a)$] Either $k=l$,
\item[$(b)$] Or $j=k$ and $i>j$ and $l>j$.
\end{itemize}
Since a vertex of $X$ has indegree at most two and a vertex of $Y$ has
outdegree at most two, $H$ has maximum degree three. Moreover $H$ contains no 
$K_4$, because two arcs of $E'$ with same tail $y_k$ are not 
adjacent in $H$. Hence, by Brooks Theorem, $H$ has a vertex-colouring with colours
$\{1,2,3\}$, and this colouring corresponds to a colouring $c$ of the arcs of
$E'$. Since $(a)$ is satisfied, $c$ is a directed star colouring.
Moreover, this colouring creates no bicoloured circuits. Indeed, a
circuit contains a subpath $y_ix_jy_jx_l$, with $i>j$ and $k>j$, whose three arcs are coloured differently by $(b)$.

\noindent Let $D'=D-(M_4\cup E')$.
For any arc $uv$ in $D'$, let $L(uv)=\{1,2,3\}
\setminus \{c(wv) \ | \ wv\in E'\}$.  The set $L(uv)$ is the set of
colours in $\{1,2,3\}$ that may be assigned to $uv$ without creating any conflict
with the already coloured arcs. The digraph $D'$ is acyclic and $|L(uv)|\geq d(v)$, so 
by Lemma~\ref{3acyclic}, it admits a directed star $L$-colouring. We infer that
$D$ has an acircuitic directed star colouring with colours in $\{1,2,3,4\}$ and the theorem follows.
In addition, we note that in this colouring, the arcs coloured by $4$ form a matching.
{\hfill$\Box$ \par \vspace{11pt}

\noindent{\bf Acknowledgement.} The authors are grateful to J-C. Bermond for bringing out their attention to the paper by Brandt and Gonzalez. They thank an anonymous referee for the remarks which helped them to improve the presentation of the paper.


\end{document}